\numberwithin{equation}{section}
\theoremstyle{plain}                
\newtheorem{theorem}{Theorem}[section]
\newtheorem{lemma}[theorem]{Lemma}
\newtheorem{proposition}[theorem]{Proposition} 
\newtheorem{corollary}[theorem]{Corollary}
\theoremstyle{definition}           
\newtheorem{definition}[theorem]{Definition}
\newtheorem{example}[theorem]{Example}
\newtheorem{assumption}[theorem]{Assumption}
\newtheorem{standingassumption}[theorem]{Standing Assumption}
\theoremstyle{remark}
\newtheorem{remark}[theorem]{Remark}
\newcommand{\p}[1]{\left( #1 \right)}
\newcommand{\bp}[1]{\big( #1 \big)}
\newcommand{\Bp}[1]{\Big( #1 \Big)}
 \newcommand{\tsM}{\tilde{\sM}}
 \newcommand{\tsQ}{\tilde{\sQ}}
\def\E{\mathbb{E}}
\def\P{\mathbb{P}}
\def\Q{\mathbb{Q}}
\DeclareMathOperator*\esssup{esssup}
\DeclareMathOperator*\essinf{essinf}
\newcommand{\tot}{\tfrac{1}{2}} 
\newcommand{\oo}[1]{\tfrac{1}{#1}}
\newcommand{\scl}[2]{\langle #1,#2 \rangle} 
\newcommand{\Bscl}[2]{\Big\langle #1,#2 \Big\rangle} 
\newcommand{\abs}[1]{\left| #1 \right|} 
\newcommand{\set}[1]{\{#1\}} 
\newcommand{\sets}[2]{\set{#1\,:\,#2}} 
\newcommand*{\QEDB}{\hfill\ensuremath{\square}}%
\newcommand{\bfone}{{\mathbf 1}}
\newcommand{\ind}[1]{ \bfone_{{#1}}} 
\newcommand{\inds}[1]{ \bfone_{\set{#1}}} 
\newcommand{\seq}[1]{\set{#1_n}_{n\in\N}} 
\newcommand{\prf}[1]{ \{ #1 \}_{t\in [0,T]}}
\newcommand{\dd}{d}
\newcommand{\tRN}[2]{\tfrac{\dd #1}{\dd #2}}
\newcommand{\downto}{\searrow}
\newcommand{\upto}{\nearrow}
\newcommand{\Implies}{\Rightarrow}
\providecommand{\R}{} \renewcommand{\R}{{\mathbb R}}
\providecommand{\N}{} \renewcommand{\N}{{\mathbb N}}
\newcommand{\PP}{{\mathbb P}}
\newcommand{\QQ}{{\mathbb Q}}
\newcommand{\EE}{{\mathbb E}}
\newcommand{\ee}[1]{ \bE \left[ #1 \right] }
\newcommand{\Bee}[1]{ \bE \Big[ #1 \Big] }
\newcommand{\MM}{{\mathcal M}}
\newcommand{\EN}{{\mathcal E}}
\newcommand{\ba}{{\mathrm{ba}}}
\newcommand{\eps}{\varepsilon}
\newcommand{\ld}{\lambda}
\newcommand{\gm}{\gamma}
\newcommand{\vp}{\varphi}
\newcommand{\el}{{\mathbb L}} 
\newcommand{\lzer}{\el^0}
\newcommand{\lone}{\el^1}
\newcommand{\linf}{\el^{\infty}}
\newcommand{\define}[1]{{\textbf{#1}}}
\newcommand{\efor}{\text{ for }}
\newcommand{\eforall}{\text{ for all }}
\newcommand{\eand}{\text{ and }}
\newcommand{\eor}{\text{ or }}
\newcommand{\ewhere}{\text{ where }}
\providecommand\st{{\mathcal t}}
\newcommand\tv{{\tilde{v}}}
\newcommand\sA{{\mathcal A}}
\newcommand\sC{{\mathcal C}}
\newcommand\sD{{\mathcal D}}
\newcommand\sE{{\mathcal E}}
\newcommand\bE{{\mathbb E}}
\newcommand\sF{{\mathcal F}}
\newcommand\sI{{\mathcal I}}
\newcommand\sK{{\mathcal K}}
\newcommand\sL{{\mathcal L}}
\newcommand\sM{{\mathcal M}}
\newcommand\sQ{{\mathcal Q}}
\newcommand\tS{{\tilde{S}}}
\newcommand\hX{{\hat{X}}}
\newcommand\hY{{\hat{Y}}}
\newcommand\sZ{{\mathcal Z}}
\renewcommand{\abs}[1]{|#1|}
\newcommand{\valu}{\mathfrak{U}}
\newcommand{\valv}{\mathfrak{V}}
\newcommand{\obv}{\mathbb{V}_B}
\newcommand{\obva}[1]{\mathbb{V}_{#1}}
\newcommand{\linfpp}{\linf_{++}}
\newcommand{\opt}{\hat{\sD}(B)}
\newcommand{\optz}{\hat{\sD}_0(B)}
\newcommand{\hmu}{\hat{\mu}}
\newcommand{\hpi}{\hat{\pi}}
\newcommand{\uB}{\underline{B}}
\newcommand{\sst}[1]{(#1 \cdot S)_T}
\renewcommand{\st}[1]{\left(#1 \cdot S\right)_T}
\newcommand{\hpis}{\hpi\cdot S}
\newcommand{\hpist}{(\hpis)_T}
\newcommand{\des}{\delta\cdot S}
\newcommand{\dees}{\delta^{\eps}\cdot S}
\newcommand{\dest}{(\des)_T}
\newcommand{\deest}{(\dees)_T}
\renewcommand{\oe}{\oo\eps}
\newcommand{\dele}{\delta_{\eps}}
\newcommand{\bev}{B+\eps\vp}
\newcommand{\ubev}{\underline{\bev}}
\newcommand{\pie}{\pi_{\eps}}
\begin{document}
\title{Conditional Davis pricing}

\author{Kasper Larsen}
\address{Kasper Larsen, Department of Mathematics, Rutgers University}
\email{kasperl@andrew.cmu.edu}

\thanks{The authors would like to thank Pietro Siorpaes, Mihai S\^\i
rbu, and Kim Weston for numerous and helpful discussions. During the
preparation of this work the first author has been supported by the
National Science Foundation under Grant No.~DMS-1411809 (2014 - 2017)
and Grant No.~DMS-1812679  (2018 - 2021), the second author has been
supported by the Swiss National Foundation through the grant SNF
$200021\_153555$ and by the Swiss Finance Institute, and the third
author   has been supported by the National Science Foundation under
Grant No.~DMS-1107465 (2012 - 2017) and   Grant No.~DMS-1516165 (2015 -
2018).  Any   opinions, findings and conclusions or recommendations
expressed in this   material are those of the author(s) and do not
necessarily reflect the   views of the National Science Foundation
(NSF)}

\author{Halil Mete Soner}
\address{Halil Mete Soner, Department of Mathematics, ETH Z\" urich}
\email{mete.soner@math.ethz.ch}

\author{Gordan \v{Z}itkovi\'{c}}
\address{Gordan \v Zitkovi\' c, Department of Mathematics,
University of Texas at Austin}
\email{gordanz@math.utexas.edu}

\subjclass[2010]{Primary 91G10, 91G80; Secondary 60K35.
\\\indent\emph{Journal of Economic Literature (JEL)
Classification:} C61, G11}

\keywords{Incomplete markets, utility-maximization, unspanned
endowment, local martingales,  linearization, directional derivative.}


\begin{abstract}
We study the set of marginal utility-based prices of a financial
derivative in the case where the investor has a non-replicable random
endowment. We provide an example showing that even in the simplest of
settings - such as Samuelson's geometric Brownian motion model - the
interval of marginal utility-based prices can be a non-trivial
strict subinterval of the set of all no-arbitrage prices. This is in
stark contrast to the case with a replicable endowment where non-
uniqueness is exceptional. We provide formulas for the end points for
these prices and illustrate the theory with several examples.
\end{abstract}

\maketitle

\begin{center}
\today
\end{center}\ \\

\section{Introduction}
\label{sec:intro}

We consider an investor in a frictionless but incomplete financial
market.  The  price dynamics are modeled by a locally bounded
semimartingale $S$. The investor will receive an endowment $B$ at a
future time $T>0$ and would like to price a derivative with
payoff $\varphi$ at time $T$. To obtain the set of possible prices for the payoff 
$\varphi$, we use the  marginal utility-based pricing approach of Mark
Davis in \cite{Dav97}. With the investor's utility function
$U:(0,\infty) \to \R$ given, we start by defining the primal value
function $v(\cdot;B)$ ``conditioned'' on the presence of the
endowment $B\in \mathbb{L}^\infty_{++}:=
\cup_{a>0}(a+\mathbb{L}^\infty_{+})$. Its domain is a set of random
variables $\vp$ interpreted as future derivative payments, and its value
is
\begin{align}
\label{intro1}
v(\vp;B) := \sup_{H} \E[ U(\varphi+ B + \int_0^T H_u dS_u)],\quad
 \varphi \in \mathbb{L}^\infty,
\end{align}
where $H$ ranges over a set of admissible integrands which is defined in
Section \ref{sec:setup} below. Then, given a fixed derivative with payoff 
$\vp$, we call a constant $p \in \R$ a  \emph{conditional Davis price
of $\varphi$} (conditional  on the endowment $B$), if $p$ satisfies the
following inequality
\begin{align}
\label{intro2}
v\big(\eps (\varphi-p); B\big)\le v(0;B) ,\quad \forall \eps \in \R.
\end{align}
The mathematical details are given in Definition
\ref{def:Davis} below.

\medskip

The conditional Davis pricing concept above can also be seen
as a variation of the classical case where the utility function is no
longer deterministic. We could consider the random endowment $B$
a part of the preference structure of the agent, i.e., think of
$x\mapsto U(x+B(\omega))$ as a stochastic utility function and view
$\EE[ U(\vp+B)]$ as the expected utility of the position $\vp$. 

\medskip

The set of spanned endowments $B$ provides an interesting
special case. Indeed, if $B\in \linf_{++}$ satisfies $B = x+
\int_0^T H_u dS_u$ for some initial value $x$ and an admissible $H$,
then $x$ is unique and the primal utility function satisfies $v(\vp;B)
=v(\vp;x)$.  In this case, the conditional Davis price is exactly the
marginal utility-based price of the payoff $\varphi$ given the (constant)
initial  wealth $x>0$  defined in Definition 3.1 in \cite{HugKraSch05}.
This paper's main goal is to extend the theory developed for spanned
endowments to the case of unspanned endowments, which is one of
the main problems one faces in many incomplete-market-equilibrium
frameworks.

\medskip

An alternate approach is to consider the endowment $B$ as the payoff of
a financial derivative as well.  This perspective requires us
to price multiple derivatives simultaneously and was considered
in \cite{HugKra04,KS06}. Namely, let $\xi$ be the $d$-dimensional
payoff of $d$ derivatives and fix an amount $q_0 \in \R^d$.  Then, in
Remark 1 of \cite{HugKra04}, $p_0 \in \R^d$ is called the \emph{marginal
utility-based price of $\xi$ given $q_0$} if $p_0$ satisfies, $$ v(
q\cdot (\xi-p_0);q_0 \cdot \xi) \le v(0;q_0 \cdot \xi), \quad \forall q
\in \R^d. $$ So alternatively, one may define  conditional Davis prices
of $\varphi$ as the second component of the two-dimensional marginal
utility-based prices of the pair $\xi:=(B,\varphi)$ with $q_0:=(1,0)$.
Indeed, we give the precise definition in Section \ref{s.new} and in
Section \ref{ss.i} we show that these two notions agree.

\medskip

Apart from using the same definition, previous studies do not cover the
case of conditional Davis prices with an unspanned endowment $B$. In
particular, \cite{HugKraSch05} study the one dimensional problem with a
constant (spanned) endowment $B:=x>0$, while \cite{HugKra04} provide the
multi-dimensional definition, but only investigate the utility
maximization problem. \cite{KS06} study the multi-dimensional pricing
problem only for small values of $q_0$ under a decay assumption on $\xi$
which we discuss later. Since we need to fix $q_0$ to be $(1,0)$, the
asymptotic results of \cite{KS06} cannot be applied to the general
conditional Davis prices. To highlight one non-trivial
difference between our setting and \cite{KS06}, let us recall that it
has been known since \cite{HugKraSch05} that even when $B:=x>0$ is
constant, marginal utility-based prices of the payoff $\vp$ can form a non-trivial
interval. Such an occurrence is, however,  treated as a rare
pathology and explicitly assumed away via a decay condition in
\cite{KS06}. In our setting, where $B$ is unspanned, we provide an
example showing that even in Samuelson's geometric Brownian motion model
with constant coefficients, there exists a whole spectrum of
explicit bounded payoffs $\varphi$ with a non-trivial and
explicitly computable interval of marginal utility-based prices.

\medskip

While the notion of marginal utility-based prices has been around for
more than two decades (we discuss related literature below), a
characterization of all constants $p$ satisfying \eqref{intro2} when
both $B$ and $\varphi$ are unspanned is currently not available. The
main contribution of this paper provides a set of conditions imposed on
$(B,\vp)$ under which the two endpoints of the interval of marginal
utility-based prices for the payoff $\vp$ can be explicitly computed.
Our main results are:

\begin{enumerate}

\item For an arbitrary endowment $B\in\mathbb{L}^\infty_{++}$, we show
that the interval of marginal utility-based prices for the payoff
$\varphi\in \mathbb{L}^\infty(\PP)$ is given by the set of all
$\scl{\vp}{\hat{\QQ}}\subset\R$, when $\hat{\Q}\in$ba$(\P)$ ranges
through the set of finitely-additive minimizers of the associated dual
utility problem (as introduced in \cite{CSW01}). Here $\langle
\cdot,\cdot\rangle$ denotes the dual pairing between
$\mathbb{L}^\infty(\PP)$ and its dual ba$(\P):=
(\mathbb{L}^\infty(\P))'$.

\item It is well-known that marginal utility-based prices are linked to
derivatives of the primal utility value function \eqref{intro1}. We show
that under a mild growth condition on the utility function $U(\xi)$ at
$\xi=0$, the directional derivative of $v(\vp;B)$ in the direction of
$\vp$ can be characterized as the value of a certain linear stochastic
control problem. Furthermore, we show by means of an example that even
in the $log$-utility case, the mapping

\begin{align}
\nonumber
\R \ni x \to v(x;B);
\end{align}
that is, the restriction of $v(\cdot;B)$ to constant payoffs
$\vp=x\in\R$ 
can fail to be differentiable on the interior of its
domain\footnote{When $B:=x>0$ is constant, Theorem 2.2 in \cite{KS99}
ensures smoothness of the primal value function \eqref{intro1}. However,
when $B\in \mathbb{L}^\infty_{++}$ is unspanned, Example \ref{CSW} below
illustrates that smoothness can fail.}. See also the discussion in
Erratum \cite{CSW17}.

\item Under the additional assumption of unique super-replicability
from \cite{LSZ16} placed on $(B,\vp)$, we solve the linear stochastic
control problem mentioned in (2) explicitly. This gives us formulas for
the two endpoints describing the interval of marginal utility-based
prices. As an offshoot, we show additionally that the mapping $ \varphi
\to v(\varphi;B)$ is smooth whenever $B$ is uniquely super-replicable.
\end{enumerate}

We wish to stress that while some known results related to marginal
utility-based prices extend verbatim from the constant endowment case
$B:=x>0$ to the general, unspanned case of $B\in\linf_+$, not all
results do. Indeed, the example mentioned in (2) above illustrates a
non-trivial difference.

\medskip

Because of market incompleteness, the interval of arbitrage-free prices
for $\varphi$ often takes on the extreme form with endpoints given by
the essential suprema and infima of $\vp$. We refer the reader to
monographs \cite{Car09} and \cite{FS04} for further general information
and thorough historical overviews of various ways to price unspanned
payoffs. Several authors have used ad hoc methods to reduce the width of
the interval of arbitrage-free prices (see, e.g., \cite{CSR00} and its
extension \cite{BS06} where so-called good deal bounds, based on the
Hansen-Jagannathan bound for the Sharpe ratio, are used). In this
respect, our results show how marginal utility-based prices can be used
to narrow down the interval of arbitrage-free prices.

\medskip

We continue by elaborating on the lack of uniqueness of marginal
utility-based prices mentioned above. For utility functions
defined on the positive axis (such as power and log), and with $B:=x>0$
constant, \cite{HugKraSch05} show that Davis prices are unique for all
$\varphi\in \mathbb{L}^\infty$ if and only if the dual utility optimizer
$\hat{\QQ}$ is a martingale measure (in which case, the unique marginal
utility-based price of the payoff $\varphi$ is given by
$\E^{\hat{\QQ}}[\varphi]$). However, it has been known since
\cite{KLSX91} that there exist arbitrage-free models where the dual
utility optimizers fail the martingale property; see also \cite{KS99}
for further examples of models satisfying NFLVR  (no free lunch with
vanishing risk) where the dual utility optimizers fail the martingale
property. As a consequence, there are models for which there exists a
bounded payoff $\varphi$ with non-unique marginal utility-based prices
(see \cite{HugKraSch05} for an abstract construction of such a payoff
$\varphi$). When uniqueness is considered indispensable, one
could restrict
attention to financial models and utility functions which produce
martingale dual utility optimizers (see, e.g., the BMO-type condition
used in \cite{KW16}), or consider only payoffs $\varphi$ with unique
Davis prices (as done in \cite{KS06}). We do not impose such
restrictions and our intervals of marginal
utility-based prices are generally nontrivial.

We finish this introduction with  a brief summary of the sizable
literature on  marginal utility-based prices in the case when $B:=x>0$
is constant  (more generally, when $B\in \mathbb{L}^\infty_{++}$ is
spanned).  The strand of literature that comes closest to this paper
where $U$ is finite only on the positive axis includes \cite{CSW01},
\cite{HugKra04}, \cite{KS06}, and \cite{HugKraSch05}.\footnote{When
$U$'s domain is $\R$ (such as exponential utility) the corresponding
dual optimizer is always a martingale; see \cite{BF02}. Consequently,
for such utility functions, marginal utility-based prices are always
unique and our analysis offers nothing new in that case.}  Let us
comment on their similarities and differences with our paper:

\indent - \cite{CSW01} focus on the utility-maximization problem itself
and do not consider pricing.

\indent - The notion of marginal utility-based prices is defined in
Remark 1 on page 849 of \cite{HugKra04}, and is not studied beyond a
standard super-differential characterization.

\indent - The authors of \cite{HugKraSch05} perform an in-depth study of
marginal utility-based prices in the constant endowment case, i.e.,
$B:=x>0$ for some constant $x$ (or more generally for $B$ spanned).
\cite{HugKraSch05} created the first abstract example exhibiting non-
unique marginal utility-based prices.

\indent - In \cite{KS06}, the authors perform an asymptotic analysis. As
discussed earlier and further detailed in Remark \ref{rmk:newMete}
below, these results do not apply to our setting. Moreover, \cite{KS06}
work under assumptions guaranteeing that $\vp$ has a unique marginal
utility-based price. These assumptions come in the form of a decay
condition on $\vp$ and can be found  already in \cite{HugKraSch05}.
This decay condition, in particular, is not satisfied for a
generic bounded payoff. We do not require such a decay property and
consequently, the set of claims we consider and the set considered in
\cite{KS06} do not nest in either direction. Indeed, based on the famous
counterexample in \cite{DS98b}, we construct  an explicit example of a
family of payoffs with constant $B:=x>0$ (as in \cite{KS06}) which has a
non-trivial interval of  marginal utility-based prices.   This example
illustrates that even for a constant endowment $B:=x>0$, our   setting
allows for non-unique prices whereas the setting of   \cite{KS06} always
produces a unique marginal-utility based price.

\medskip

The paper is organized as follows. The model is described, the
terminology and notation set, standing assumptions imposed,  and
preliminary analysis of our central utility-maximization problem is
performed in Section \ref{sec:setup}. In Section
\ref{sec:davis} we define conditional Davis prices. In Section
\ref{s.new} we recall the definition of marginal utility-based prices
from  \cite{HugKra04} and we show that conditional Davis prices can be
seen as a projection of marginal-utility based prices. Section
\ref{sec:characterization} characterizes the Davis prices from the dual
point of view and lays out some of the first consequences of this
characterization. Directional derivatives of the primal utility-
maximization problem are studied in Section \ref{sec:directional} which
also gives the explicit example of non-smoothness mentioned in (2)
above.  Section \ref{sec:directional} also gives a characterization of
the directional derivative in terms of a linear stochastic control
problem. Section \ref{sec:unique} recalls the definition of unique
super-replicability from \cite{LSZ16} and provides a family of examples
of uniquely super-replicable claims. The main result of Section
\ref{sec:unique} gives an explicit expression for the directional
derivative of the utility-maximization value function under the unique
super-replicability condition. This result is subsequently used in
Section \ref{sec:interval} to give explicit formulas for the interval of
marginal utility-based prices in a general setting. These formulas are
then used in two examples where one example is set in a Samuelson-Black-
Scholes-Merton type model and illustrates the fact that non-uniqueness
can arise even in the simplest of settings (this supports our claim in
the abstract). These examples also allow us to give explicit expressions
for the first-order approximation of the hedging portfolios associated
to the end-points of the interval of marginal utility-based prices.

\section{The setup and assumptions }
\label{sec:setup}
\subsection{The market model}
Let $(\Omega,\sF,\prf{\sF_t},\PP)$ be a filtered probability space which
satisfies the usual conditions, and let $\prf{S_t}$ be a locally bounded
semimartingale. $L(S)$ denotes the set of all predictable $S$-integrable processes
and $\sM$ denotes the set of all $\PP$-equivalent countably-additive probability
measures $\QQ$ on $\sF$ for which $S$ is a $\QQ$-local martingale.
\begin{standingassumption}[NFLVR]
\label{ass:NFLVR}
$\sM\ne \emptyset$.
\QEDB
\end{standingassumption}
\begin{remark}\label{rem:NFVLR}
We assume that the asset-price process $S$ is locally bounded and postulate
the existence of a local martingale measure. While it is
possible to relax our setting to the non-locally-bounded case (as used in,
e.g., \cite{CSW01}), it is not be possible to relax Assumption
\ref{ass:NFLVR} so as to imply the existence of a supermartingale deflator
only. Indeed,
the presence of a non-replicable endowment $B$ makes the admissibility
class which produces only nonnegative wealth processes too small to host an optimizer.
This delicate issue is discussed and illustrated on
pages 240, 241 in \cite{Lar09}. To keep the focus of the current paper on the
issues directly related to conditional Davis pricing, we have
opted for a set of assumptions which is slightly stronger than
absolutely necessary.
\end{remark}

\subsection{Gains and admissibility}
The investor's gains process has the following dynamics
\begin{align}\label{dX}
  (\pi\cdot S)_t := \int_0^t \pi_u\, dS_u,\quad t\in [0,T],
 \end{align}
 for some $\pi \in L(S)$.
 We call $\pi\in L(S)$ admissible if the gains process is uniformly lower bounded by a constant in which case we write $\pi \in \sA$.
The set
of terminal outcomes is denoted by $\sK$, i.e., we define
\[ \sK := \sets{ (\pi\cdot S)_T}{ \pi \in \sA}.\]
\subsection{The primal problem}
Let $U$ be a utility function on $(0,\infty)$, i.e.,
$U$ is strictly concave, strictly increasing, and continuously differentiable with $U'(0+)=+\infty$ and $U'(+\infty)=0$.
When necessary, we extend the domain of $U$ to $\R$ by setting $U(x) = -\infty$
for $x<0$ and $U(0) = \inf_{x>0} U(x)$.
Finally, $U$ is said to be reasonably
  elastic (as defined in \cite{KS99}) if
  \[ \limsup_{x\to\infty} \tfrac{ xU'(x)}{U(x)} < 1.\]
Even though we need it for some of our results, we do not impose the condition of
reasonable elasticity from the start.

\medskip

Let  $v$ be defined
as in \eqref{intro1} with
the above notion of admissibility.
Then, for any $B\in\linfpp:=\cup_{x>0} (x+\linf_+)$ we define
\begin{align}\label{equ:u}
\valu(B) &:= v(0;B)=
\sup_{X \in \sK}\, \Bee{ U\Big(B+X
\Big) }
 \end{align}
 with the convention that
 $\EE[  U(B+X)]=-\infty$ if $\EE[  U(B+X)^-]=+\infty$.
Because $\valu(B) \geq U(\essinf B)>-\infty$,
$\valu$ is $(-\infty,\infty]$-valued on
$\linfpp$. In \eqref{ass:proper} below, we impose
a dual properness assumption which among other things ensures that $\valu$ is finitely valued on
$\linfpp$.

\subsection{The dual utility maximization problem}

The set of equivalent local martingale measures $\sM$ can be identified - via
Radon-Nikodym derivatives with respect to $\PP$ - with a subset
of $\mathbb{L}_+^1(\PP)$ and embedded, naturally, into
$\ba(\PP):=\linf(\PP)^{*} \supseteq \lone(\PP)$.  We define
$\overline{\MM}^*$ as the weak$^*$-closure of $\sM$ and we define
$\sD\subset \ba_+(\PP)$ as the family of all $y \QQ$ where  $y\in[ 0,\infty)$ and $\QQ\in\overline{\MM}^*$.
We can then define the dual utility functional  by

\[ \obv(\mu) := \sup_{X\in\linf} \Big( \E[U(B+X)] - \scl{\mu}{X} \Big),\quad \mu\in\ba(\PP).\]
In particular,  $\obv$
is convex,
lower weak$^*$-semi\-continuous on
$\ba(\PP)$
and bound\-ed from below by $\E[U(B)]\in\R$. For the reminder of the paper we impose a properness assumption.
While not the weakest possible in our setting, this assumption allows
us to deal swiftly, and yet with a minimal loss of generality,
with several technical points that are not central to
the message of the paper:
\begin{standingassumption}[Properness]  There exist $y_0\in(0,\infty)$ and $\QQ_0\in\MM$ such that $\mu_0 := y_0\QQ_0$ satisfies
\label{ass:proper}
 \begin{equation}
 \label{equ:mu0}
 \begin{split}
    \obv(\mu_0)<\infty.
 \end{split}
 \end{equation}
   \QEDB

  \end{standingassumption}

Thanks to a minimal modification of Lemma 2.1 on p.~138 in \cite{OZ09} and the discussion before it,  $\obv$ admits the following representation
   \begin{align}
   \label{equ:rep}
   \obv(\mu) = \ee{ V\left(\tRN{\mu^r}{\PP}\right)} +
   \scl{\mu}{B},\quad \mu\in\sD,
   \end{align}
where $V$ is the dual utility function (strictly convex) defined by
$$
 V(y) := \sup_{x>0} \Big( U(x) - xy \Big),\quad y>0.
$$
Consequently, Fenchel's inequality and \eqref{equ:mu0} guarantee that the primal value function $\valu$ satisfies $\valu(B)<\infty$ for all $B\in\linfpp$. Furthermore, \eqref{equ:mu0} also ensures that
 the corresponding dual value function defined by
\begin{align}
\label{equ:dual}
   \valv(B) := \inf_{\mu\in\sD} \obv(\mu),\quad B\in\linfpp,
\end{align}
is finitely valued. For $B\in\linfpp$ we let $\opt$ denote the set of all minimizers, i.e., all those $\mu\in\sD$ such that $\valv(B) = \obv(\mu)$.

 The next result collects some basic facts we will need in the following sections:

\begin{lemma}\label{lem:minimizer}
For each $B\in\linfpp$, the set $\opt$ is a nonempty weak$^*$-compact subset
of $\ba(\P)$ and there exists
a nonnegative random variable $\hY=\hY(B)$ such that  $\PP[\hY>0]>0$ and
\[  \hY = \tRN{\mu^r}{\PP} \quad  \text{
for all $\mu\in\opt$.}\]
{Furthermore, the strong duality $\valu(B) = \valv(B)$ holds for all $B\in\linfpp$.}

\end{lemma}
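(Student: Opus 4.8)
The plan is to prove the four assertions in turn --- nonemptiness and weak$^*$-compactness of $\opt$, the existence of a common regular part $\hY$, its strict positivity $\PP[\hY>0]>0$, and the strong duality $\valu(B)=\valv(B)$ --- the first three being independent of the last.

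For the compactness statement I would run the standard coercivity argument. Put $\beta:=\essinf B>0$. Fenchel's inequality $V(z)\ge U(\beta/2)-\tfrac{\beta}{2}z$, together with $\scl{\mu}{B}\ge\beta\scl{\mu}{1}=\beta\norm{\mu}$ and $\EE[\tRN{\mu^r}{\PP}]\le\norm{\mu}$ (Yosida--Hewitt), yields via the representation \eqref{equ:rep}
\[
\obv(\mu)\ \ge\ U(\beta/2)+\tfrac{\beta}{2}\norm{\mu},\qquad\mu\in\sD.
\]
Hence the sublevel set $\set{\mu\in\sD:\obv(\mu)\le\obv(\mu_0)}$ is norm-bounded, so it lies in a weak$^*$-compact ball by Banach--Alaoglu; it is also weak$^*$-closed, since $\obv$ is lower weak$^*$-semicontinuous and $\sD$ is weak$^*$-closed (any weak$^*$-convergent net $y_\alpha\QQ_\alpha$ in $\sD$ has $y_\alpha=\scl{y_\alpha\QQ_\alpha}{1}$ bounded because $\overline{\MM}^*$ is weak$^*$-compact and sits in the unit ball, hence subconverges to some $y\QQ\in\sD$). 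As this sublevel set contains $\mu_0$ and is weak$^*$-compact, $\obv$ attains its infimum over it and hence over all of $\sD$, so $\opt\ne\emptyset$; and $\opt=\set{\mu\in\sD:\obv(\mu)\le\valv(B)}$ is weak$^*$-closed, hence a weak$^*$-compact subset of that sublevel set.

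Uniqueness of $\hY$ would come from strict convexity of $V$: the Yosida--Hewitt map $\mu\mapsto\mu^r$ and $\mu\mapsto\scl{\mu}{B}$ are linear, so for $\mu_1,\mu_2\in\opt$ the midpoint $\tfrac12(\mu_1+\mu_2)\in\sD$ is again a minimizer; writing $Z_i:=\tRN{\mu_i^r}{\PP}$, the representation \eqref{equ:rep} forces $\EE[V(\tfrac12 Z_1+\tfrac12 Z_2)]=\tfrac12\EE[V(Z_1)]+\tfrac12\EE[V(Z_2)]$ (all terms finite, as $\obv(\mu_i)<\infty$ and $\scl{\mu_i}{B}\in\R$); pointwise convexity turns this into an a.s.\ equality, and strict convexity of $V$ on $(0,\infty)$ yields $Z_1=Z_2$ $\PP$-a.s., the common density being $\hY=\hY(B)$. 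For $\PP[\hY>0]>0$ I would first note that $\valv(B)<\sup U$: if $\sup U=+\infty$ this is just $\valv(B)\le\obv(\mu_0)<\infty$, while if $\sup U<+\infty$ one has $V(0+)=\sup U$ with $-V'(0+)=+\infty$ (since $U'(+\infty)=0$), so that for $Z_0:=\tRN{\QQ_0}{\PP}>0$ Fatou's lemma gives $y^{-1}\bp{\sup U-\EE[V(yZ_0)]}\to+\infty$ as $y\downarrow0$, whereas $y\scl{\QQ_0}{B}\le y\esssup B$, hence $\obv(y\QQ_0)<\sup U$ for all small $y>0$. If now $\hY=0$ $\PP$-a.s., then every $\mu\in\opt$ is purely finitely additive and, by \eqref{equ:rep}, $\obv(\mu)=V(0+)+\scl{\mu}{B}=\sup U+\scl{\mu}{B}\ge\sup U>\valv(B)$, contradicting $\mu\in\opt$; thus $\PP[\hY>0]>0$.

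Finally, weak duality $\valu(B)\le\valv(B)$ is immediate from Fenchel's inequality $U(B+X)\le V(\tRN{\mu^r}{\PP})+(B+X)\tRN{\mu^r}{\PP}$ combined with $\scl{\mu}{X}\le0$ for bounded $X\in\sK$ and $\mu\in\sD$ (the latter from the $\QQ$-supermartingale property of $\pi\cdot S$ for $\QQ\in\sM$, extended by weak$^*$-approximation to $\overline{\MM}^*$). The reverse inequality is the substantive point, and I would obtain it exactly as in the random-endowment duality of \cite{CSW01} (see also \cite{OZ09}): the dual minimizer produced in the first step, the representation \eqref{equ:rep}, and the uniform positivity of $B\in\linfpp$ --- which supplies the compactness needed on the primal side --- reduce the argument to the constant-endowment case, the only change being the cosmetic replacement of $x>0$ by $B$, exactly the modification already invoked for \eqref{equ:rep}. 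I expect this last step --- transferring optimality from the dual to the primal problem while keeping track of the purely finitely-additive parts of the measures in $\sD$ and of the possible unboundedness of $B+X$ from above --- to be the main obstacle, although all the required machinery is already available in the cited works.
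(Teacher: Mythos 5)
Your treatment of the first three assertions is correct and in places cleaner than the paper's own. For nonemptiness and compactness, your coercivity estimate
\[
\obv(\mu)\ \ge\ U\big(\tfrac{\beta}{2}\big)+\tfrac{\beta}{2}\,\|\mu\|,\qquad \beta:=\essinf B>0,
\]
obtained by plugging the Fenchel lower bound $V(z)\ge U(\beta/2)-\tfrac{\beta}{2}z$ into the representation \eqref{equ:rep} and using $\EE[\tRN{\mu^r}{\PP}]\le\|\mu\|$, gives norm-boundedness of the sublevel sets directly; the paper instead extracts a minimizing sequence $\mu_n=y_n\QQ_n$ and bounds $y_n$ via Jensen, $\obv(\mu_0)\ge\limsup_n V(y_n)+y_n\essinf B$, together with $V'(\infty)=0$. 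The two arguments are equivalent, but yours packages nonemptiness, norm-boundedness of $\opt$, and weak$^*$-compactness in one stroke, provided you supplement it (as you do) by the observation that $\sD$ is weak$^*$-closed. Your proof of the common regular part $\hY$ via the midpoint and strict convexity of $V$ is the same as the paper's. For $\PP[\hY>0]>0$ you first prove $\valv(B)<\sup U$ and then observe that $\hY\equiv0$ would force $\obv(\mu)=V(0)+\scl{\mu}{B}\ge\sup U$ for $\mu\in\opt$; the paper instead perturbs $\hmu$ along the segment to a true $\QQ\in\sM$ and applies Fatou to get $-V'(0+)=+\infty$. Both arguments hinge on the same fact, $V'(0+)=-\infty$, and both are correct.

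The genuine weak point is exactly the one you flag: the strong duality $\valu(B)=\valv(B)$. You sketch weak duality and then defer the reverse inequality to the duality theorem of \cite{CSW01} (or \cite{OZ09}). But the paper deliberately states Lemma~\ref{lem:minimizer} \emph{without} assuming reasonable asymptotic elasticity (``we do not impose the condition of reasonable elasticity from the start''), whereas the duality theorem of \cite{CSW01} is proved under an asymptotic elasticity assumption. So the citation, as it stands, does not cover the case treated here. The paper instead gives a short self-contained argument: it introduces the truncated dual sets $\sD_n:=\{\mu\in\sD:\|\mu\|\le n\}$, uses $\sC=(\sK-\lzer_+)\cap\linf=\{X\in\linf:\scl{\QQ}{X}\le0\ \forall\,\QQ\in\sM\}$ (weak$^*$-closed by \cite{DS94}), and applies a Sion-type minimax theorem (Theorem 2.10.2 in \cite{Zal02}) to swap $\inf_{\mu\in\sD_n}$ and $\sup_{X\in\linf}$, after which $\lim_n\sup_{\mu\in\sD_n}\scl{\mu}{X}$ is $0$ on $\sC$ and $+\infty$ off $\sC$, giving $\valv(B)=\sup_{X\in\sC}\EE[U(X+B)]=\valu(B)$ by monotone convergence. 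Replacing your appeal to \cite{CSW01} with this minimax argument would close the remaining gap and make the proof self-contained in the generality in which the lemma is stated.
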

\begin{proof} {
Let $\seq{\mu}$ be a minimizing sequence for $\obv$ of the form
\[ \mu_n = y_n \QQ_n \text{
where $\QQ_n\in \overline{\sM}^*$ and $\seq{y}\subseteq [0,\infty)$.}\]
To see that $\seq{y}$ is bounded, we note that \eqref{equ:mu0} produces the finite upper bound:
\begin{align*}
 \obv(\mu_0) & \ge \limsup_n \E[V(y_n\tfrac{d\Q_n^r}{d\P})] + y_n \langle \Q_n,B \rangle \\
 &\ge \limsup_nV(y_n) + y_n\essinf B.
\end{align*}
The first inequality follows from \eqref{equ:rep} and the minimizing property of the sequence $\seq{\mu}$ whereas the last inequality follows from the non-increasing property of $V$ and Jensen's inequality. Because $V'(\infty)=0$ and $\essinf B>0$ the boundedness property of $\seq{y}$ follows.
}

Because the
finitely-additive probabilities $\seq{\QQ}$ belong to the weak$^*$-compact set
$\overline{\sM}^*$, we can conclude that $\seq{\mu}$ admits a
weak$^*$-convergent subnet $\mu_\alpha$ such that $\mu_{\alpha} \to \mu$, for
some $\mu\in\sD$. The functional $\obv$ is lower semicontinuous and we get
$$
 \valv(B) = \lim_\alpha \obv(\mu_\alpha ) \ge  \obv(\mu).
$$
Therefore, $\mu$ is a minimizer over $\sD$ and we have  $\opt\neq \emptyset$.

\medskip

Next, we show that all $\mu\in\opt$ have the same regular part. For
that, suppose that $\mu_1^r \ne \mu_2^r$. Then, $\mu = \tot \mu_1+\tot
\mu_2 \in \sD$ and by \eqref{equ:rep} we have
\[ \tot \obv(\mu_1) + \tot \obv(\mu_2) =
\tot \ee{V(\tRN{\mu_1^r}{\PP})} +
\tot \ee{V(\tRN{\mu_2^r}{\PP})} + \scl{\mu}{B} > \obv(\mu),\]
{by the strict convexity of  $V$. However,} this is in contradiction with the minimality of $\mu_1$ and $\mu_2$.

\medskip

To see that $\hY\neq 0$ we argue by contradiction and suppose that $
\PP[\hY=0]=1$. In that case $V(0) < \infty$ and, so, thanks to  Jensen's
inequality, we have $\obv(\mu)<\infty$ for all $\mu\in\sD$. In particular, we
have for some $\QQ \in \sM$ and $\hat{\mu} \in \opt$
$$
\obv(\mu^{\eps})<\infty, \quad \text{where }\quad
 \mu_\eps := \eps \QQ + (1-\eps) \hat{\mu},\quad \eps \in [0,1].
 $$
Because the regular-part functional is additive we have
$$
\mu_\eps^r = \eps \Q +(1-\eps)\hat{\mu}^r = \eps \Q.
$$
Therefore, $\hat{\mu}$'s minimality produces
\begin{align*}
\EE[ V(  \eps \tfrac{d\Q}{d\P} ) ] + \scl{\mu_{\eps}}{B}&=  \obv(\mu_{\eps})
\ge \obv(\hat{\mu})
= V(0) + \langle \hat{\mu},B\rangle.
 \end{align*}
Fatou's lemma then implies
$$
\langle \Q-\hat{\mu},B\rangle \ge \liminf_{\eps\downto 0}\oo{\epsilon}\Big
(V(0)-\EE\left[ V\left(  \eps \tfrac{d\Q}{d\P} \right) \right]\Big)  = -V'(0) = +\infty.
$$
This is a contradiction because $B\in\mathbb{L}^\infty(\PP)$ ensures that the left-hand-side is finite.

 \medskip

From the above, we know that $\hat{\mu}(\Omega)$ is uniformly bounded
over $ \opt$. Therefore, the weak$^*$-closed set $\opt$ is norm bounded and
the compactness property of $\opt$ follows from the Banach-Alaoglu theorem.

{Finally, to establish the strong duality property, we define the nested
sequence of weak$^*$-compact dual sets
$$
\sD_n := \{\mu \in \sD: || \mu || \le n\},\quad n\in \N,
$$
as well as the primal set
$$
\sC := (\sK-\lzer_+)\cap \linf = \sets{X\in\linf}{ \scl{\QQ}{X}\leq 0\text{  for all } \QQ\in\sM }.
$$
For a proof of the last identity see, e.g., Corollary 3.4(1) in \cite{LarZit12}.
As a consequence, we have the following identity for $X\in\mathbb{L}^\infty(\PP)$
\begin{align*}
\lim_{n\to\infty}\sup_{\mu\in\sD_n} \langle \mu,X\rangle = \begin{cases} 0, & X\in\sC, \\ +\infty, & X\not\in\sC. \end{cases}
\end{align*}
The minimax theorem (see, e.g., Theorem 2.10.2, p.~144 in \cite{Zal02}) can then be used to produce
\begin{align*}
\valv(B) &= \lim_{n\to\infty} \inf_{\mu\in \sD_n} \sup_{X\in\mathbb{L}^\infty(\PP)}\Big(\E[U(X+B)] - \scl{\mu}{X}\Big)\\
&=  \sup_{X\in\sC} \E[U(X+B)].
\end{align*}
The monotone convergence theorem ensures that this expression equals
the primal value function $\valu(B)$.}
\end{proof}

\section{Conditional Davis prices}
\label{sec:davis}
\begin{definition}
For $B\in\linfpp$, a random variable $ R\in\linf$ is said to be
$B$-\define{irrelevant}, denoted by $R\in\sI(B)$, if
  \begin{equation}%
\label{equ:irr}
    \begin{split}
    \valu(B+\eps R) \leq \valu(B), \quad
    \forall  \eps\in\R.
    \end{split}
\end{equation}
   \QEDB
\end{definition}
\begin{remark}
\label{rem:sI}
The function $\valu$  finite-valued at $B$, as well as in a
$\linf$-open ball around $B$.
Therefore, both sides of \eqref{equ:irr} are real-valued for
small enough $\eps$. Thanks to the concavity of $\valu$, the right-hand-side of \eqref{equ:irr} may only take the negative infinite value for large
values of $\eps$. Therefore, for $R\in\sI(B)$, it is enough to
check \eqref{equ:irr} only for $\eps$ in a neighborhood of $0$.
\end{remark}

\begin{lemma}
\label{lem:lin} $\sI(B)$ is a nonempty, weak$^*$ closed linear subspace in $\linf$.
\end{lemma}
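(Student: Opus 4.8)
The plan is to give a \emph{linear} description of $\sI(B)$ in terms of the dual minimizers produced by Lemma \ref{lem:minimizer}; once such a description is available, nonemptiness, the subspace property, and weak$^*$-closedness all follow at once.

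The trivial parts come first: $0\in\sI(B)$ since $\valu(B)\le\valu(B)$, and if $R\in\sI(B)$ and $c\in\R\setminus\{0\}$ then $B+\eps(cR)=B+(c\eps)R$ runs over the same line as $\eps$ varies over $\R$, so $cR\in\sI(B)$; hence $\sI(B)$ is a nonempty cone, and the content is closure under addition together with weak$^*$-closedness. For these I would argue as follows. On the one hand, for any $\mu\in\opt$, Fenchel's inequality $U(x)\le V(y)+xy$, the dual representation \eqref{equ:rep}, strong duality ($\valu=\valv$, Lemma \ref{lem:minimizer}), and the bound $\scl{\mu}{X}\le 0$ valid for every admissible terminal wealth $X$ combine to give the one-sided supporting estimate
\[
\valu(B+\eps R)\ \le\ \E\!\left[V\!\left(\trn{\mu^r}{\PP}\right)\right]+\scl{\mu}{B}+\eps\,\scl{\mu}{R}\ =\ \valu(B)+\eps\,\scl{\mu}{R},\qquad \eps\in\R,\ R\in\linf .
\]
In particular, if $\scl{\mu}{R}=0$ for some $\mu\in\opt$, then $R\in\sI(B)$. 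On the other hand, if $R\in\sI(B)$, then by Remark \ref{rem:sI} and the concavity of $\valu$ (a supremum over the convex set $\sK$ of functions that are concave in the endowment), the finite concave map $\eps\mapsto\valu(B+\eps R)$ attains its maximum at $\eps=0$, so its right- and left-hand derivatives there have signs $\le 0$ and $\ge 0$. Writing $\valu(B+\eps R)=\inf_{\nu\in\sD}\bigl(\E[V(\trn{\nu^r}{\PP})]+\scl{\nu}{B}+\eps\scl{\nu}{R}\bigr)$ and observing that the minimizing $\nu$'s active at $\eps=0$ are exactly the members of $\opt$, one identifies these one-sided derivatives with $\inf_{\mu\in\opt}\scl{\mu}{R}$ and $\sup_{\mu\in\opt}\scl{\mu}{R}$; since $\opt$ is convex and weak$^*$-compact, $\{\scl{\mu}{R}:\mu\in\opt\}$ is a compact interval, and the sign conditions place $0$ inside it, so $\scl{\mu}{R}=0$ for some $\mu\in\opt$. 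This pins $\sI(B)$ down completely in terms of $\opt$.

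The remaining, and I expect genuinely delicate, step is to upgrade this description to a weak$^*$-closed linear subspace. This is where the rest of Lemma \ref{lem:minimizer} — that every element of $\opt$ has the \emph{same} regular part $\hY\,d\PP\in\lone$ — is used in an essential way: one has to exploit the common regular part to control how the (finitely-additive) singular parts of the members of $\opt$ can pair against $R$, and thereby recognize $\sI(B)$ as the pre-annihilator, in the dual pairing $(\lone,\linf)$, of a subset of $\lone$. Any pre-annihilator of a subset of $\lone$ is automatically a $\sigma(\linf,\lone)$-closed linear subspace, which is the assertion. I would expect this last reduction — handling the singular parts carefully, and doing so without assuming that \eqref{equ:u} possesses an honest random-variable maximizer — to be the main obstacle; everything preceding it is soft.
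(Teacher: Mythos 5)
Your existential characterization $\sI(B)=\{R\in\linf:\scl{\mu}{R}=0\text{ for some }\mu\in\opt\}$ is correct, and it is exactly what the paper's own Theorem \ref{thm:Davis} asserts (take $R=\vp-p$). The paper's proof of Lemma \ref{lem:lin}, however, reaches a stronger, universal conclusion: it writes the one-sided directional derivative of $\valu$ in the direction $R$ as $\sup_{\mu\in\partial\valu(B)}\scl{\mu}{R}$, deduces $\scl{\mu}{R}=0$ for \emph{every} $\mu\in\partial\valu(B)$, and concludes that $\sI(B)$ is the pre-annihilator of $\partial\valu(B)$, from which linearity and weak$^*$-closedness are immediate. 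But for a concave function and its superdifferential the one-sided directional derivative is the \emph{infimum}, not the supremum, over the superdifferential (compare $\valu(x)=-|x|$ at $0$: $\partial\valu(0)=[-1,1]$, yet $D^{+}\valu(0;1)=-1$). Substituting $\inf$ for $\sup$ in the paper's displayed pair of conditions yields precisely your existential characterization and nothing more. In other words, your more careful Danskin-type argument has exposed a genuine soft spot in the paper's one-line proof, not a gap you failed to fill.

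The step you flag as ``genuinely delicate'' also does not appear closeable along the lines you propose. Writing $\scl{\mu}{R}=\E[\hY R]+\scl{\mu^s}{R}$ and letting $m(R)\le M(R)$ be the endpoints of the compact interval $\{\scl{\mu^s}{R}:\mu\in\opt\}$, membership of $R$ in $\sI(B)$ becomes the two-sided sandwich $-M(R)\le\E[\hY R]\le -m(R)$. Since $m$ is superadditive and $M$ subadditive, this constraint set need not be closed under addition unless $m\equiv M$, i.e.\ unless all dual minimizers share the same singular part; nor is it the pre-annihilator of any subset of $\lone$, since the singular pairings $\scl{\mu^s}{R}$ do not vanish in general. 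So your instinct that this is the ``main obstacle'' is sound, but it is not merely an obstacle: the subspace and weak$^*$-closedness claims do not follow from the (correct) characterization, and it is the $\sup$/$\inf$ discrepancy that lets the paper slide past it. For what it is worth, the only feature of $\sI(B)$ actually invoked later (at the start of the proof of Lemma \ref{lem:d-char}) is closedness under scaling, which you established at the outset and which is unproblematic.
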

\begin{proof}
The function $\valu$ is concave at $B$, so $\sI(B)$ is
the set of those directions $ R$ with the property that the directional
derivative of $\valu$ in directions $ R$ and $- R$ are nonpositive. In other
words, we have
\[
\sup_{\mu\in\partial \valu(B)} \scl{ R}{\mu} \leq  0 \eand
\sup_{\mu\in\partial \valu(B)} -\scl{ R}{\mu} \leq  0,\]
where $\partial \valu(B) \subseteq \ba(\P)$ is the super-differential of
$\valu$.  Therefore, $\sI(B)$ is the annihilator of
$\partial\valu(B)$, i.e.,
\[ \sI(B) = \sets{  R \in \linf}{ \scl{\mu}{ R}=0 \text{ for all }
\mu\in \partial \valu(B)},\]
which implies the statement.
\end{proof}

The following definition is due to Mark Davis and originates in \cite{Dav97}:

\begin{definition}
\label{def:Davis}
A number $p\in\R$ is said to be a $B$-\define{conditional Davis price} (or a
$B$-\define{marginal utility-based price})  and simply a \define{conditional
Davis price}  if $B$ is clear from the context, for a  payoff $ \vp\in\linf$ if \[
\vp-p \text{ is $B$-irrelevant.}\]
The set of all $B$-conditional Davis prices of $ \vp$ is denoted by $P( \vp|B)$.
   \QEDB
\end{definition}
\noindent Consequently, $p \in P( \vp|B)$ if and only if
\begin{equation}
\label{e.davis}
\valu(B+\eps (\vp -p)) \leq \valu(B), \quad
\forall  \eps\in\R.
\end{equation}

\section{marginal utility-based prices}
\label{s.new}
In this section we recall the definition
of marginal utility-based prices and
make the connection
to the conditional Davis prices defined
in the previous section.
We start with a definition
given in \cite{HugKra04}.
Let the derivative payoff $\xi$ be a
$\R^d$-valued, bounded, $\sF_T$ measurable
random variable.  Recall that
the function $v$ is defined in \eqref{intro1}.
\begin{definition}[Remark 1., p.~849 in \cite{HugKra04}]
\label{def:UBP}
A vector $p_0 \in \R^d$ is said to be a
{\em{marginal utility-based price} of $\xi$ at $(x_0,q_0) \in \R^{d+1}$} if
$$
v(q\cdot (\xi-p_0) ; x_0+q_0 \cdot \xi) \le
v(0 ; x_0+q_0 \cdot \xi), \quad
\forall  q \in \R^d.
$$
   \QEDB

\end{definition}

In our context,
 $B:= x_0+q_0 \cdot \xi$
and the investor
prices  units of $\xi$ in addition to $q_0$.
As observed in
 \cite{HugKra04}, to study these
 prices it is convenient
 to introduce a finite-dimensional
 value function.  Then,
 marginal utility prices
 can be expressed
 as  sub-differentials of this concave function.
 Indeed,  consider
the value function $u(q,x)$ defined on $\R^{d+1}$ by
\begin{equation}
\label{e.val}
u(q,x):=
\valu(x+ q\cdot \xi)= \sup_{X \in \sK}\, \Bee{ U\Big(x+ q\cdot \xi+ X\Big) }.
\end{equation}
Under our standing assumptions,
$u$ is a proper concave function
on $\R^{d+1}$.  Moreover,
if there is no gains process $X \in \sK$ such that
$x+ q\cdot \xi+ X \ge 0$, then the value
function $u$ is by definition
equal to minus infinity.

The elementary connection between marginal-utility based
prices and the sub-differential of $u$
in the sense of convex analysis is
given in Remark 1 in \cite{HugKra04}, Equation (3.11) in \cite{HugKraSch05},
 and Equation (24) in \cite{KS06}.
 We re-state it here for future reference.  First, we note that
at any $(q_0,x_0)$ in the interior of
 the domain of $u$, the set of
 sub-differentials is non-empty
 and compact.  Moreover, the second
 component of any $(z_q,z_x) \in \partial u(q_0,x_0)$
 satisfies $z_x>0$.

 \begin{lemma}[\cite{HugKra04},  \cite{HugKraSch05},  and \cite{KS06}]
 \label{lem:sub}
Let $y_0:=(q_0,x_0)\in \R^{d+1}$ be in the interior of
 the domain of $u$.  Then, $p_0 \in \R^d$ is a marginal
utility based price of $\xi$ at $y_0$ if and only if
$$
p_0 =\frac {z_q}{z_x},
$$
for some  $ (z_q,z_x)  \in \partial u(y_0)$.
 \end{lemma}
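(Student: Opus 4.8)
The plan is to prove the characterization by reducing it to the standard subdifferential argument for the directional derivatives of the concave function $u$. First I would observe that, by definition, $p_0$ is a marginal utility-based price of $\xi$ at $y_0=(q_0,x_0)$ if and only if
\[
u(q_0+q, x_0) = \valu\bigl(x_0 + (q_0+q)\cdot\xi - q\cdot p_0\bigr) \le \valu\bigl(x_0 + q_0\cdot\xi\bigr) = u(q_0,x_0), \quad \forall q\in\R^d,
\]
where the equality on the left uses that adding the constant $-q\cdot p_0$ to the endowment and shifting $q_0$ by $q$ produces exactly the argument $B+q\cdot(\xi-p_0)$ appearing in Definition \ref{def:UBP}. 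Here I am using that $x\mapsto\valu(B+x)$ for constant $x$ coincides with $u$ shifted in its $x$-slot, so $\valu(x_0+q_0\cdot\xi - q\cdot p_0 + q\cdot\xi) = u(q_0+q,\, x_0 - q\cdot p_0)$. The cleanest way to phrase this: the condition is that the function $q\mapsto u(q_0+q,\,x_0 - q\cdot p_0)$ attains its maximum over $\R^d$ at $q=0$.

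Next I would identify $q\mapsto u(q_0+q,\,x_0-q\cdot p_0)$ as the composition of the concave function $u$ with the affine map $q\mapsto (q_0+q,\, x_0 - q\cdot p_0) = y_0 + A q$, where $A:\R^d\to\R^{d+1}$ is the linear map $Aq = (q, -q\cdot p_0)$. Since $y_0$ is in the interior of $\dom u$ and $u$ is proper concave there, $g(q):=u(y_0+Aq)$ is a concave function finite in a neighborhood of $q=0$, hence continuous there, and it attains its max at $q=0$ if and only if $0\in\partial g(0)$ in the sense of concave analysis, i.e., if and only if the directional derivative $g'(0;q)\le 0$ for all $q$. By the chain rule for directional derivatives of concave functions composed with affine maps, $g'(0;q) = $ the directional derivative of $u$ at $y_0$ in direction $Aq$, which equals $\inf_{(z_q,z_x)\in\partial u(y_0)} \langle (z_q,z_x), Aq\rangle = \inf_{(z_q,z_x)\in\partial u(y_0)} (z_q\cdot q - z_x\, q\cdot p_0)$. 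So the condition becomes $\inf_{(z_q,z_x)\in\partial u(y_0)}\langle z_q - z_x p_0,\, q\rangle \le 0$ for every $q\in\R^d$.

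Finally I would convert this into the stated formula. Since $\partial u(y_0)$ is a nonempty compact convex subset of $\R^{d+1}$ (as recalled in the excerpt just before the lemma), the set $K:=\{z_q - z_x p_0 : (z_q,z_x)\in\partial u(y_0)\}\subset\R^d$ is a nonempty compact convex set, and the support-function inequality $\inf_{w\in K}\langle w,q\rangle \le 0$ for all $q$ holds if and only if $0\in K$ (a separating-hyperplane argument: if $0\notin K$, strict separation gives a $q$ with $\langle w,q\rangle>0$ for all $w\in K$, contradicting the inequality; conversely if $0\in K$ the inf is $\le 0$ for every $q$). Thus $p_0$ is a marginal utility-based price at $y_0$ iff there exists $(z_q,z_x)\in\partial u(y_0)$ with $z_q - z_x p_0 = 0$, i.e., $p_0 = z_q/z_x$, where division is legitimate because $z_x>0$ for every element of $\partial u(y_0)$ (also recalled in the excerpt, and traceable to the fact that $u$ is strictly increasing in $x$). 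The main obstacle — really the only subtle point — is justifying the chain rule for directional derivatives through the affine substitution and ensuring the infimum/support-function manipulations are valid, which rely only on $y_0$ being interior to $\dom u$ so that $\partial u(y_0)$ is compact and $u$ is locally Lipschitz there; everything else is bookkeeping with the affine reparametrization.
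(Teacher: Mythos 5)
Your proposal is correct and takes essentially the same route as the paper: both fix the affine substitution $q\mapsto (q_0+q,\,x_0-q\cdot p_0)$, observe the resulting concave function of $q$ must be maximized at $0$, and then read off the subdifferential of the composition. You phrase the last step via directional derivatives and a separating-hyperplane argument, while the paper states the subdifferential formula $\partial f(0)=\{z_q-z_x p_0 : (z_q,z_x)\in\partial u(y_0)\}$ directly and then solves $0\in\partial f(0)$; these are the same calculation. One small slip: in your displayed inequality the leftmost term should read $u(q_0+q,\,x_0 - q\cdot p_0)$ rather than $u(q_0+q,x_0)$, but you state the correct condition in the very next sentence, so the argument is unaffected.
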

 \begin{proof}
 By definition $p_0 \in \R^d$ is a
utility based price of $\xi$ at $y_0$
if and only if
$$
u(q_0+ q,x_0-  q \cdot p_0) \le u(q_0,x_0),
\quad \forall q \in \R^d.
$$
We define
$$
f(q):= u(q_0+ q,x_0-  q \cdot p_0),\quad q \in \R^d.
$$
Then, $f$ is a concave function and
$p_0 \in \R^d$ is a marginal
utility based price of $\xi$ at $y_0$ if and only if
$0 \in \partial f(0)$.
Moreover, the sub-differential of $f$ is
connected to the sub-differential of $u$ by,
$$
\partial f(0)= \left\{
 - z_x p_0 + z_q \in \R^d\ :\
 \ z=(z_q,z_x)  \in \partial u(y_0) \right\}.
 $$
Therefore, $0 \in \partial f(0)$ if and only
if there exists $ (z_q,z_x)  \in \partial u(y_0)$
such that $ - z_x p_0 + z_q=0$.
 \end{proof}

\subsection{Conditional Davis prices and  marginal utility-based prices}
\label{ss.i}

In this subsection, we show that
conditional Davis prices defined
above can be seen as the projection of
marginal utility-based prices from the previous section
at an appropriately chosen point.
For given $B\in \linf_{++}$ and $\vp \in \linf$, we let $\sZ(\vp |B)$
be the set of all marginal utility-based prices $p_0 \in \R^2$
for the random variable $\xi:=(B,\vp)$
at the point $x_0:=0$, $q_0:=(1,0)$.
With these parameter choices
$p_0 \in \sZ(\vp|B)$ provided that
we have
\begin{equation}
\label{e.def}
\valu(B+q\cdot ((B,\varphi) -p_0))  \le
\valu(B),\quad \forall  q \in \R^2.
\end{equation}
Next, we show that the projection
of $\sZ(\vp|B)$ onto its second
component is the set of
conditional Davis prices $P(\vp|B)$ from Definition \ref{def:Davis} above.
Also, because $B \in \linfpp$ and $\vp \in \linf$, the point $(x_0,q_0)=
(0, (1,0))$ is in the
interior of the domain of
$u$ defined in \eqref{e.val}.

\begin{lemma}
\label{l.projection}
For $B\in \linf_{++}$ and $\vp \in \linf$ we have
$$
P(\vp |B) = \left\{ p\in \R \ :\ \exists \
p_0 \in \sZ(\vp |B) \ \
{\text{such that}}\ \
p= p_0 \cdot (0,1) \right\}.
$$
\end{lemma}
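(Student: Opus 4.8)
The plan is to prove the two inclusions of the claimed identity separately; the substantive work is the reverse inclusion, and it hinges on the observation that the endowment $B$ itself always has at least one conditional Davis price. Throughout I use that, by Definition \ref{def:Davis}, $p\in P(\vp|B)$ iff $\vp-p\in\sI(B)$, and that, since $q\cdot((B,\vp)-p_0)=q_1(B-p_1)+q_2(\vp-p_2)$ for $p_0=(p_1,p_2)$, the defining relation \eqref{e.def} may be rewritten as
\[
p_0=(p_1,p_2)\in\sZ(\vp|B)\iff\valu\bigl(B+q_1(B-p_1)+q_2(\vp-p_2)\bigr)\le\valu(B)\quad\forall(q_1,q_2)\in\R^2 .
\]

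For the inclusion ``$\supseteq$'' I would argue directly from the definitions: if $p_0=(p_1,p)\in\sZ(\vp|B)$, then setting $q=(0,\eps)$ in the right-hand side above gives $\valu(B+\eps(\vp-p))\le\valu(B)$ for every $\eps\in\R$, that is, $\vp-p\in\sI(B)$, i.e. $p\in P(\vp|B)$. This half requires nothing beyond unravelling notation.

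For ``$\subseteq$'', fix $p\in P(\vp|B)$, so $\vp-p\in\sI(B)$; the task is to produce a first coordinate $p_1$ with $(p_1,p)\in\sZ(\vp|B)$. The first step is to exhibit \emph{some} conditional Davis price of $B$ itself. Since the excerpt records that $(x_0,q_0)=(0,(1,0))$ lies in the interior of the domain of the finite-dimensional value function $u$ of \eqref{e.val}, the superdifferential of $u$ there is nonempty (with strictly positive $x$-component), so by Lemma \ref{lem:sub} the set $\sZ(\vp|B)$ is nonempty; pick any $p^\ast=(p_1^\ast,p_2^\ast)\in\sZ(\vp|B)$. Applying the ``$\supseteq$'' reasoning above with the coordinate roles swapped, i.e. with $q=(\eps,0)$, yields $\valu(B+\eps(B-p_1^\ast))\le\valu(B)$ for all $\eps\in\R$, so $B-p_1^\ast\in\sI(B)$.

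It remains to glue $B-p_1^\ast\in\sI(B)$ and $\vp-p\in\sI(B)$ together. By Lemma \ref{lem:lin}, $\sI(B)$ is a linear subspace of $\linf$, hence $q_1(B-p_1^\ast)+q_2(\vp-p)\in\sI(B)$ for every $(q_1,q_2)\in\R^2$, and therefore (taking $\eps=1$ in the definition of $\sI(B)$) $\valu\bigl(B+q_1(B-p_1^\ast)+q_2(\vp-p)\bigr)\le\valu(B)$; by the reformulation of \eqref{e.def} above this says precisely $(p_1^\ast,p)\in\sZ(\vp|B)$, and since $p=(p_1^\ast,p)\cdot(0,1)$, the element $p$ belongs to the right-hand side, completing the argument. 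I expect the only genuinely non-routine ingredient to be the nonemptiness of $\sZ(\vp|B)$ — equivalently, that $B$ has a conditional Davis price — which is not accessible by elementary manipulation of $\valu$ and is exactly where the convex-analytic input of Lemma \ref{lem:sub} enters; note also that the passage from the two one-dimensional facts to the two-dimensional conclusion genuinely uses the linear-subspace property of Lemma \ref{lem:lin}, which does not follow from concavity of $\valu$ alone.
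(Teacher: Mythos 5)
Your argument is correct, and it genuinely departs from the paper's route. Both proofs use the convex-analytic input of Lemma~\ref{lem:sub} to guarantee that the superdifferential of $u$ at $((1,0),0)$ is nonempty with positive $x$-component, but they exploit it differently. The paper then works directly with superdifferentials: it defines the one-dimensional restriction $g(\eps)=u((1,\eps),-\eps p)$, observes $0\in\partial g(0)$, invokes a chain rule for superdifferentials under composition with an affine map to identify $\partial g(0)$ with $\{-z_x p + z_q\cdot(0,1) : (z_q,z_x)\in\partial u((1,0),0)\}$, and reads off $p_0$ from a suitable $(z_q,z_x)$. You instead invoke Lemma~\ref{lem:sub} only once, as a bare existence statement ($\sZ(\vp|B)\neq\emptyset$), extract from it a single scalar $p_1^\ast$ with $B-p_1^\ast\in\sI(B)$, and then do all the two-dimensional gluing through Lemma~\ref{lem:lin}: since $\sI(B)$ is a linear subspace, $q_1(B-p_1^\ast)+q_2(\vp-p)\in\sI(B)$ for all $(q_1,q_2)$, and evaluating the irrelevance inequality at $\eps=1$ recovers \eqref{e.def} for $p_0=(p_1^\ast,p)$ verbatim. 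The two buy-offs are: the paper's proof is self-contained in the local superdifferential calculus and would adapt to settings where no linear-subspace structure is available; yours sidesteps the subdifferential chain rule entirely (which the paper in fact only sketches) and makes transparent that the ``extra'' first coordinate $p_1^\ast$ can be \emph{any} conditional Davis price of $B$ itself, which is a slightly stronger and arguably more informative conclusion than the paper states.
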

\begin{proof}
Let $(p_B,p) \in \sZ(\vp |B)$.
We use \eqref{e.def} with $q:=(0,\eps)$. The result is,
$$
\valu(B+\eps( \varphi-p))  \le
\valu(B),
$$
for all $\eps \in \R$. In view of \eqref{e.davis}, $p\in P(\vp|B)$.

To prove the converse, fix $p\in P(\vp|B)$.
Then, by \eqref{e.davis}
$$
u((1,\eps), -\eps p) =
\valu(B+\eps( \varphi-p))  \le
\valu(B) =
u((1,0), 0),
\quad
\forall \eps \in \R.
$$
Set
$$
g(\eps):= u((1,\eps), -\eps p).
$$
Then, $0 \in \partial g(0)$.  Also,
as in the proof
of Lemma \ref{lem:sub}, we have
$$
\partial g(0)= \left\{
- z_x p + z_q \cdot (0,1)  \in \R :\
 \ z=(z_q,z_x)  \in \partial u((1,0),0) \right\}.
 $$
Hence, there exists $z \in \partial u((1,0),0)$ such that
$$
0=- z_x p + z_q \cdot (0,1).
$$
We define $p_0:= z_q /z_x \in \R^2$ and use Lemma \ref{lem:sub} to arrive at
$p_0 \in \sZ(\vp|B)$.
It is also clear that
$p=p_0 \cdot (0,1)$.
\end{proof}

\begin{remark}\label{rmk:newMete}

In our context, we are given an {\em{endowment}}
 $B$ and a derivative with payoff $\vp$ (both $B$ and $\vp$ pay off at time $T$).  
 Our goal is to study marginal-utility based
 prices of $\vp$ conditioned on the
 fact that an endowment $B$ is given.  Clearly,
 one does not price the endowment $B$.
 Hence, the appropriate price
 is a projection of the set of marginal-utility based
 prices onto its second component
 with $\xi:=(B,\vp)$ at the
 points $x_0:=0$ and $q_0:=(1,0)$.
 In Lemma \ref{l.projection},
 we proved that these two approaches are equivalent.

We can also use the above
notation to summarize the related literature as follows:

\begin{enumerate}
\item Definition \ref{def:UBP} and Lemma \ref{lem:sub} are from \cite{HugKra04}.  \cite{HugKra04} study only the utility maximization problem and do not study marginal utility-based prices beyond their Remark 1.

\item Lemma \ref{lem:sub} can also be found in \cite{HugKraSch05}. Furthermore, when $B:=x>0$ is constant, \cite{HugKraSch05} provide a growth condition on the claim's payoff $\varphi$ which ensures uniqueness of marginal utility-based prices and exemplify that such prices can fail to be unique. In the case when $B:=x>0$ is constant, Theorem \ref{thm:derb} below supplements the results in \cite{HugKraSch05} with formulas for the two endpoints describing the non-trivial interval of marginal utility-based prices. We stress that when $B$ is unspanned, the results in \cite{HugKraSch05} do not apply. Example \ref{CSW} below illustrates that there can be major differences between the two cases: (i) $B:=x>0$ is constant and (ii) $B$ is unspanned.

\item \cite{KS06} use the growth condition from \cite{HugKraSch05} mentioned in (2) above which ensures uniqueness of the marginal-utility based prices. \cite{KS06} linearly expand the marginal-utility based price from the base case $B:=x>0$ constant. Our analysis differs in three crucial ways from \cite{KS06}: (i) we allow for non-uniqueness even when $B:=x>0$ constant,
(ii) we allow for $B$ being unspanned, and (iii) we do not perform an asymptotic expansion in small quantities $q$ of the claim's payoff $\varphi$ but we instead provide closed-form expressions  for the interval of  marginal utility-based prices in Theorem \ref{thm:derb} below. These non-trivial interval end-points are explicitly calculated in the two examples in Section \ref{sse:two}.
\end{enumerate}
\end{remark}

While the notion of pricing in Definition \ref{def:UBP} is consistent with the existing literature, no prior results cover the case where the investor's endowment $B$ is unspanned.

\section{Characterization of Conditional Davis Prices}
\label{sec:characterization}

\subsection{A dual characterization} The dual characterization of the set
of conditional Davis prices in Theorem \ref{thm:Davis} below rests on the following,
simple, lemma:
\begin{lemma}
\label{lem:d-char}
A random variable $ R\in\linf$ is $B$-irrelevant if and only if
     \begin{equation}
     \label{equ:V=V}
     \begin{split}
        \inf_{\mu \in \sD} \Big(
       \obv(\mu) + \abs{\scl{\mu}{ R}} \Big)=
        \inf_{\mu \in \sD}
       \obv(\mu).
     \end{split}
     \end{equation}
\end{lemma}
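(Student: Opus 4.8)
The plan is to reduce both sides of the stated equivalence to a property of the concave function of one real variable
\[
F(t) := \inf_{\mu\in\sD}\bp{\obv(\mu)+t\scl{\mu}{R}},\qquad t\in\R,
\]
and to link them via a minimax argument of the kind already used (through \cite{Zal02}) in the proof of Lemma \ref{lem:minimizer}. First I would record three facts: $F$ is concave on $\R$, being an infimum of affine functions of $t$; $F(0)=\inf_{\mu\in\sD}\obv(\mu)=\valv(B)\in\R$; and $F(t)=\valu(B+tR)$ for all $t$ in a neighbourhood of $0$. The last identity is where the earlier results enter: for $\abs{t}$ small enough $B+tR\in\linfpp$, the dual functional associated with the endowment $B+tR$ is $\mu\mapsto\ee{V(\tRN{\mu^r}{\PP})}+\scl{\mu}{B+tR}=\obv(\mu)+t\scl{\mu}{R}$ on $\sD$ by \eqref{equ:rep}, and the strong duality $\valu=\valv$ from Lemma \ref{lem:minimizer} then gives $\valu(B+tR)=\inf_{\mu\in\sD}\bp{\obv(\mu)+t\scl{\mu}{R}}=F(t)$.

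For the implication ``\eqref{equ:V=V} $\Rightarrow$ $R$ is $B$-irrelevant'', I would use only the elementary inequality $t\scl{\mu}{R}\le\abs{\scl{\mu}{R}}$, valid for $\abs{t}\le1$: it gives, for every such $t$, $F(t)\le\inf_{\mu\in\sD}\bp{\obv(\mu)+\abs{\scl{\mu}{R}}}=\inf_{\mu\in\sD}\obv(\mu)=F(0)$. Restricting to $t$ small enough that $F(t)=\valu(B+tR)$ yields $\valu(B+tR)\le\valu(B)$ in a neighbourhood of $0$, which by Remark \ref{rem:sI} means $R\in\sI(B)$.

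For the converse, assume $R$ is $B$-irrelevant. Then $\valu(B+tR)\le\valu(B)$, i.e. $F(t)\le F(0)$, for $t$ near $0$; since a concave function with an interior local maximum attains a global maximum there, $F(t)\le F(0)$ for every $t\in\R$, so $\sup_{\abs{t}\le1}F(t)=F(0)$. Using $\obv(\mu)+\abs{\scl{\mu}{R}}=\sup_{\abs{t}\le1}\bp{\obv(\mu)+t\scl{\mu}{R}}$ and interchanging $\inf_{\mu}$ and $\sup_t$,
\[
\inf_{\mu\in\sD}\bp{\obv(\mu)+\abs{\scl{\mu}{R}}}=\inf_{\mu\in\sD}\sup_{\abs{t}\le1}\bp{\obv(\mu)+t\scl{\mu}{R}}=\sup_{\abs{t}\le1}\inf_{\mu\in\sD}\bp{\obv(\mu)+t\scl{\mu}{R}}=\sup_{\abs{t}\le1}F(t)=F(0)=\inf_{\mu\in\sD}\obv(\mu),
\]
which, together with the trivial inequality ``$\ge$'', is \eqref{equ:V=V}. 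To justify the interchange I would invoke a minimax theorem: $\sD$ is convex, the interval $[-1,1]$ is convex and compact, and $(\mu,t)\mapsto\obv(\mu)+t\scl{\mu}{R}$ is convex and weak$^*$-lower-semicontinuous in $\mu$ (by lower semicontinuity of $\obv$ and weak$^*$-continuity of $\scl{\cdot}{R}$) and affine in $t$. The only non-standard point, the non-compactness of $\sD$, is handled exactly as in the proof of Lemma \ref{lem:minimizer}: apply the minimax theorem on each weak$^*$-compact truncation $\sD_n:=\{\mu\in\sD:\norm{\mu}\le n\}$ and let $n\to\infty$, using that the coercivity of $\obv$ provided by $\essinf B>0$ forces the truncated infima to converge to the untruncated ones.

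The main obstacle is this converse direction, and precisely the need for a genuine minimax \emph{equality} rather than the free inequality $\inf_\mu\sup_t\ge\sup_t\inf_\mu$: a near-minimizer of $\mu\mapsto\obv(\mu)+t\scl{\mu}{R}$ for small $t>0$ controls $\scl{\mu}{R}$ only from above, so one cannot directly exhibit $\mu$'s with $\obv(\mu)+\abs{\scl{\mu}{R}}$ close to $\inf_\mu\obv(\mu)$. All remaining ingredients — concavity of $F$, the identification $F=\valu(B+\,\cdot\,R)$ near $0$, and the limiting argument along $\sD_n$ — are routine.
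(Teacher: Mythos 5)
Your proposal is correct and rests on the same two pillars as the paper's proof: the minimax theorem from \cite{Zal02} and the identification of $\inf_{\mu\in\sD}\bp{\obv(\mu)+t\scl{\mu}{R}}$ with $\valu(B+tR)$ via \eqref{equ:rep} and the strong duality of Lemma~\ref{lem:minimizer}. Two organizational remarks. First, the paper avoids your separate concavity extension and the split into two directions by rescaling $R$ at the outset so that $B\pm R\in\linfpp$; then $F(t)=\valu(B+tR)$ for \emph{all} $\abs{t}\le 1$, and the single chain of equalities
$\inf_{\mu}\bp{\obv(\mu)+\abs{\scl{\mu}{R}}}=\inf_{\mu}\sup_{\abs{\eps}\le1}\bp{\obv(\mu)+\eps\scl{\mu}{R}}=\sup_{\abs{\eps}\le1}\inf_{\mu}\bp{\cdots}=\sup_{\abs{\eps}\le1}\valu(B+\eps R)$,
compared with its $R=0$ instance, yields both implications simultaneously. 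Second, the point you flag as the main technical obstacle --- non-compactness of $\sD$ and the resulting truncation along $\sD_n$ --- is not actually an obstacle here: in this lemma the minimax pairs $\sD$ with the \emph{compact} convex set $[-1,1]$, and the theorem you cite needs only one of the two sets to be compact (here it is $[-1,1]$; in Lemma~\ref{lem:minimizer} it was $\sD_n$), so it applies directly without a limiting argument.
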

\begin{proof}
Because $\sI(B)$ is a vector space, we can scale $ R$ so that, without loss
generality, we can assume that $B \pm  R \in \linfpp$.
Then, by the
minimax theorem (see Theorem 2.10.2, p.~144 in \cite{Zal02}), we have
\begin{align*}
     \inf_{\mu \in \sD}
     \Big(  \obv(\mu) + \abs{\scl{\mu}{ R}} \Big)&=
    \inf_{\mu\in\sD}
    \sup_{\abs{\eps} \leq 1 } \Big( \obv(\mu) + \eps \scl{\mu}{ R}\Big)
    \\
    &= \sup_{\abs{\eps} \leq 1 }
    \inf_{\mu\in\sD}
    \Big( \obv(\mu) + \eps \scl{\mu}{ R}\Big) =
    \sup_{\abs{\eps}\leq 1} \valu( B + \eps  R ).
\end{align*}
The same equality with $ R=0$, implies that \eqref{equ:V=V} is equivalent
to \[ \valu(B) = \sup_{\abs{\eps}\leq 1} \valu(B+\eps  R)\]
which is,
in turn, equivalent to $ R \in \sI(B)$.
\end{proof}
By Lemma \ref{lem:minimizer}, we have $\mu(\Omega)>0$ for each $\mu\in
\opt$. Therefore, the family
\begin{align}
\label{equ:optz}
\optz := \sets{ \oo{\mu(\Omega)} \mu}{ \mu \in \opt}
\end{align}
is a well-defined nonempty family of finitely-additive probabilities. We now
have everything set up for our main characterization of conditional Davis prices:
\begin{theorem}
\label{thm:Davis} For $\vp \in \mathbb{L}^\infty(\PP)$ the following two statements are equivalent
\begin{enumerate}
\item $p\in P( \vp|B)$, i.e., $p$ is a $B$-conditional Davis price of $ \vp$.
\item $p = \scl{\QQ}{ \vp}$,  for some $\QQ\in \optz$.
\end{enumerate}
In particular, $P( \vp|B)$ is a nonempty compact subinterval of $\R$.
\end{theorem}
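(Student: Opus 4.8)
The plan is to run everything through the dual characterization of irrelevance in Lemma~\ref{lem:d-char} together with the structure of the minimizer set $\opt$ from Lemma~\ref{lem:minimizer}; I would \emph{not} invoke Lemma~\ref{lem:lin}. By Definition~\ref{def:Davis} and Lemma~\ref{lem:d-char} (recall $\valv(B)=\inf_{\mu\in\sD}\obv(\mu)$), we have $p\in P(\vp|B)$ if and only if
\[
\inf_{\mu\in\sD}\Big(\obv(\mu)+\abs{\scl{\mu}{\vp-p}}\Big)=\valv(B).
\]
Since $\obv(\mu)\ge\valv(B)$ for every $\mu\in\sD$ and $\abs{\scl{\mu}{\vp-p}}\ge 0$, a point $\mu\in\sD$ attains the value $\valv(B)$ in this infimum precisely when $\mu\in\opt$ and $\scl{\mu}{\vp-p}=0$; as $\mu(\Omega)>0$ on $\opt$ by Lemma~\ref{lem:minimizer}, dividing by $\mu(\Omega)$ shows this is the same as $\scl{\QQ}{\vp}=p$ for $\QQ:=\mu/\mu(\Omega)\in\optz$. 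This already gives (2)$\Rightarrow$(1): if $p=\scl{\QQ}{\vp}$ with $\QQ=\mu/\mu(\Omega)$, $\mu\in\opt$, then $\mu$ realises the value $\valv(B)$, so the infimum equals $\valv(B)$.

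For (1)$\Rightarrow$(2) the substantive point --- and the step I expect to be the main obstacle --- is that whenever the infimum above equals $\valv(B)$ it is in fact \emph{attained}; a priori it could be approached only asymptotically, in which case (2) might fail. To get attainment I would reprise the compactness argument in the proof of Lemma~\ref{lem:minimizer}, now applied to the perturbed functional $\mu\mapsto\obv(\mu)+\abs{\scl{\mu}{\vp-p}}$. Given a minimizing sequence $(\mu_n)$, the chain $\valv(B)\le\obv(\mu_n)\le\obv(\mu_n)+\abs{\scl{\mu_n}{\vp-p}}\to\valv(B)$ forces $\obv(\mu_n)\to\valv(B)$ and $\abs{\scl{\mu_n}{\vp-p}}\to 0$; in particular $\obv(\mu_n)$ is bounded above, so the boundedness estimate from the proof of Lemma~\ref{lem:minimizer} (it uses only an upper bound on $\obv$ along the sequence, the representation \eqref{equ:rep}, Jensen's inequality, and $V'(\infty)=0$ together with $\essinf B>0$) shows $(\mu_n)$ is norm-bounded. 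Writing $\mu_n=y_n\QQ_n$ with $(y_n)\subseteq[0,\infty)$ bounded and $\QQ_n\in\overline{\sM}^*$, and using weak$^*$-compactness of $\overline{\sM}^*$ as in the proof of Lemma~\ref{lem:minimizer}, pass to a subnet along which $\mu_\alpha\to\mu^*$ weak$^*$ for some $\mu^*\in\sD$. Weak$^*$-lower semicontinuity of $\obv$ gives $\obv(\mu^*)\le\liminf_\alpha\obv(\mu_\alpha)=\valv(B)$, hence $\mu^*\in\opt$; weak$^*$-continuity of $\nu\mapsto\scl{\nu}{\vp-p}$ gives $\scl{\mu^*}{\vp-p}=\lim_\alpha\scl{\mu_\alpha}{\vp-p}=0$. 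Then $\QQ:=\mu^*/\mu^*(\Omega)\in\optz$ and $\scl{\QQ}{\vp}=p$, which is (2).

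It remains to establish the ``in particular'' clause, using the identity $P(\vp|B)=\sets{\scl{\mu}{\vp}/\mu(\Omega)}{\mu\in\opt}$ just proved. Non-emptiness is immediate since $\opt\ne\emptyset$. Because $\opt$ is weak$^*$-compact, $\mu\mapsto\mu(\Omega)$ is weak$^*$-continuous and strictly positive on it, hence bounded below by some $m>0$; therefore $\mu\mapsto\scl{\mu}{\vp}/\mu(\Omega)$ is weak$^*$-continuous on $\opt$, and its image $P(\vp|B)$ is a compact subset of $\R$. For convexity, note that $\opt$ is convex, being the set of minimizers of the convex functional $\obv$ over the convex set $\sD$; and for any $\mu_1,\mu_2\in\opt$ the map
\[
s\ \longmapsto\ \frac{\scl{(1-s)\mu_1+s\mu_2}{\vp}}{(1-s)\mu_1(\Omega)+s\mu_2(\Omega)},\qquad s\in[0,1],
\]
is a ratio of two affine functions of $s$ with strictly positive denominator on $[0,1]$, hence has constant-sign derivative and is monotone; its range is therefore the closed interval between $\scl{\mu_1}{\vp}/\mu_1(\Omega)$ and $\scl{\mu_2}{\vp}/\mu_2(\Omega)$, and since $(1-s)\mu_1+s\mu_2\in\opt$ for every $s\in[0,1]$ this range lies in $P(\vp|B)$. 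As $\mu_1,\mu_2$ were arbitrary, $P(\vp|B)$ is convex. Being non-empty, compact and convex in $\R$, $P(\vp|B)$ is a nonempty compact subinterval of $\R$.
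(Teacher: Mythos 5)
Your proof is correct and follows essentially the same route as the paper's: both reduce to the perturbed dual problem via Lemma~\ref{lem:d-char}, obtain attainment by repeating the compactness argument from the proof of Lemma~\ref{lem:minimizer}, and identify $P(\vp|B)$ with $\sets{\scl{\mu}{\vp}/\mu(\Omega)}{\mu\in\opt}$. The only difference is that you spell out the ``in particular'' clause (continuity and strict positivity of $\mu\mapsto\mu(\Omega)$ on the compact set $\opt$, and convexity of the image via the monotone ratio-of-affine-functions observation) where the paper simply asserts it follows from weak$^*$-compactness of $\opt$.
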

\begin{proof}\
{${(1)\Rightarrow (2)}$:} The first part of the
proof of Lemma \ref{lem:minimizer} applies to
the functional $\mu \mapsto \obv(\mu) +
\abs{\scl{\mu}{\vp-p}}$, and we can conclude that
it admits
a minimizer $\hmu$. By Lemma \ref{lem:d-char}, the same $\hmu$
must minimize the functional $\mu\mapsto \obv(\mu)$, as well, and, so,
$\hmu\in\opt$ and $\scl{\hmu}{ \vp-p}=0$.

\medskip

{\bf $(2)\Rightarrow (1)$:}
Suppose that $p$ is such that
$\scl{\mu^*}{\vp-p}=0$, for some $\mu^*\in \optz$. Then, for any $\mu$, we
have
\[ \obv(\mu^*) + \abs{\scl{\mu^*}{\vp-p}} = \obv(\mu^*)
\leq \obv(\mu) \leq \obv(\mu) + \abs{\scl{\mu}{\vp-p}},\]
and Lemma \ref{lem:d-char} can be used.

\medskip

Finally, Lemma \ref{lem:minimizer} ensures that $\opt$ is weak$^*$-compact  and the last claim follows.
\end{proof}

\subsection{First consequences}
A reinterpretation in the setting of portfolios with convex constraints
leads to the following dual characterization:
\begin{corollary}
\label{cor:approx}
Suppose that $U$ is reasonably elastic.
Then, for each constant $c\geq 0$ and each $ R\in\linf$ we have
\[
\inf_{\mu\in \sD} \Big( \obv(\mu) + c \abs{\scl{\mu}{ R}}\Big) =
\inf_{y\geq 0, \QQ\in \sM}\Big( \obv(y \QQ) + c \abs{\scl{y
\QQ}{ R}}\Big).\]
\end{corollary}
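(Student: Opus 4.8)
One of the two inequalities is immediate: since $\sets{y\QQ}{y\ge 0,\ \QQ\in\sM}\subseteq\sD$, taking the infimum of the penalized dual functional over the larger set $\sD$ can only decrease the value, so the left-hand side is no larger than the right-hand side. The content of the corollary is therefore the reverse inequality, i.e.\ that shrinking the dual domain from $\sD$ --- which contains genuinely finitely-additive and merely-absolutely-continuous elements --- down to the countably-additive equivalent local martingale measures in $\sM$ does not raise the penalized dual value. The plan is to identify both sides with, respectively, the dual and the primal value of one and the same \emph{constrained} utility-maximization problem: maximize $\E\bp{U(B+(\pi\cdot S)_T+\eps R)}$ over admissible $\pi\in\sA$ and over constant holdings $\eps$ in the bounded claim $R$ subject to the convex constraint $\eps\in[-c,c]$; write $\valu_c$ for its value.

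First I would show that the left-hand side equals $\valu_c$. Using $\obv(\mu)=\sup_{X\in\linf}\bp{\E[U(B+X)]-\scl{\mu}{X}}$ and $c\abs{\scl{\mu}{R}}=\sup_{\abs{\eps}\le c}\scl{\mu}{\eps R}$, the substitution $X\mapsto X-\eps R$ gives, for each $\mu\in\sD$,
\[
\obv(\mu)+c\abs{\scl{\mu}{R}}=\sup_{X\in\linf,\ \abs{\eps}\le c}\bp{\E\bp{U(B+X+\eps R)}-\scl{\mu}{X}}.
\]
Now I would repeat the last part of the proof of Lemma~\ref{lem:minimizer} verbatim, carrying along the extra compact parameter $\eps\in[-c,c]$: apply the minimax theorem on each weak$^*$-compact truncation $\sD_n=\sets{\mu\in\sD}{\norm{\mu}\le n}$, then let $n\to\infty$ and use $\sup_{\mu\in\sD_n}\scl{\mu}{X}\to 0$ for $X\in\sC$ and $\to+\infty$ for $X\notin\sC$ (recall $\sC=(\sK-\lzer_+)\cap\linf$). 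This yields $\inf_{\mu\in\sD}\bp{\obv(\mu)+c\abs{\scl{\mu}{R}}}=\sup_{X\in\sC,\ \abs{\eps}\le c}\E\bp{U(B+X+\eps R)}$, and since $U$ is increasing and every $X\in\sC$ is $\PP$-a.s.\ dominated by some $(\pi\cdot S)_T$ with $\pi\in\sA$, the latter supremum equals $\valu_c$.

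Next I would identify the right-hand side as the dual value of this constrained problem. The easy ``weak duality'' direction is Fenchel's inequality $U(x)\le V(z)+xz$ applied pointwise, combined with $\E^{\QQ}[(\pi\cdot S)_T]\le 0$ (admissible gains are $\QQ$-supermartingales null at $0$), with $\abs{\eps\,\E^{\QQ}[R]}\le c\,\abs{\E^{\QQ}[R]}$, and with the representation~\eqref{equ:rep}: this gives $\valu_c\le\obv(y\QQ)+c\abs{\scl{y\QQ}{R}}$ for all $y\ge 0$ and $\QQ\in\sM$, so the right-hand side of the corollary dominates $\valu_c$. Since $c\abs{\,\cdot\,}$ is exactly the support function of the constraint set $[-c,c]$, this right-hand side is the natural dual of $\valu_c$, and the only remaining point is the matching bound $\inf_{y\ge0,\ \QQ\in\sM}\bp{\obv(y\QQ)+c\abs{\scl{y\QQ}{R}}}\le\valu_c$. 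Here is where reasonable elasticity enters: I would take a minimizer $\hat\mu=y\QQ^*\in\sD$ of the penalized dual functional (it exists by the weak$^*$-compactness argument already used in the proof of Lemma~\ref{lem:minimizer}), choose a net $\QQ_\alpha\in\sM$ with $y\QQ_\alpha\to\hat\mu$ weak$^*$ (possible because $\QQ^*\in\overline{\sM}^*$), and verify that $\obv(y\QQ_\alpha)+c\abs{\scl{y\QQ_\alpha}{R}}\to\obv(\hat\mu)+c\abs{\scl{\hat\mu}{R}}$: the penalty and endowment terms pass to the limit by weak$^*$-continuity (as $R,B\in\linf$), while the entropic term $\E\bp{V(y\,\tRN{\QQ_\alpha}{\PP})}$ is squeezed --- from below by weak$^*$-lower semicontinuity of $\obv$ together with Fenchel against $\essinf B>0$, and from above by the reasonable-elasticity growth estimate on $V$ near the origin --- exactly as in the classical reductions of \cite{KS99} and \cite{CSW01}. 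Combining the two identifications, the left-hand side equals $\valu_c$, which equals the right-hand side.

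The main obstacle is this last step: the elasticity-dependent reduction of the dual domain from $\sD$ to $\sM$ with the support-function penalty (equivalently, the sharpness of the duality for the constrained problem when the dual is realized over countably-additive equivalent measures). Everything else --- the change of variables, the minimax reduction, and Fenchel/weak duality --- is routine once Lemma~\ref{lem:minimizer} is available, and reasonable elasticity is used nowhere else; this reduction is precisely the ``reinterpretation in the setting of portfolios with convex constraints'' referred to in the statement.
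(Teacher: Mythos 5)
Your overall strategy---identifying the left-hand side with the value $\valu_c$ of a constrained utility-maximization problem via the minimax argument, and bounding the right-hand side from below by $\valu_c$ via Fenchel weak duality---is sound as far as it goes, and parallels the paper's proof in spirit. The paper's actual route is more economical: it introduces the set $\sC':=\sets{B+X+\eps R}{X\in\sC,\ \abs{\eps}\le c}$, shows that its support function $\alpha_{\sC'}(\mu)$ decomposes as $\scl{\mu}{B}+c\abs{\scl{\mu}{R}}$ on $\sD$ (and $+\infty$ off $\sD$), so that the left-hand side is a constrained dual functional over $\ba(\PP)$, and then invokes the weak$^*$-closedness of $\sC$ (Theorem 4.2 of \cite{DS94}) to satisfy the hypotheses of Proposition~3.14 in \cite{LarZit12}, which is precisely the statement that the infimum may be taken over $\sigma$-additive measures. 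In other words, the paper outsources the hard ``finitely-additive to countably-additive'' reduction to a cited result, whose proof indeed uses reasonable elasticity.

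The gap in your proposal is in the final step, where that reduction must be proved. You take a minimizer $\hat\mu=y\QQ^*\in\sD$, choose an arbitrary net $\QQ_\alpha\in\sM$ with $y\QQ_\alpha\to\hat\mu$ weak$^*$, and assert that $\obv(y\QQ_\alpha)+c\abs{\scl{y\QQ_\alpha}{R}}\to\obv(\hat\mu)+c\abs{\scl{\hat\mu}{R}}$. The penalty and endowment terms do pass to the limit by weak$^*$-continuity, but the entropic term $\EE[V(y\,\tRN{\QQ_\alpha}{\PP})]$ does not in general converge to $\EE[V(\tRN{\hat\mu^r}{\PP})]$ along an arbitrary such net. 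Weak$^*$-lower-semicontinuity of $\obv$ gives the inequality in the \emph{wrong} direction for your purposes ($\liminf_\alpha \ge$, not $\limsup_\alpha\le$), and there is no generic ``squeeze from above'': when $\hat\mu$ has a nontrivial singular part (the case of interest), the densities $\tRN{\QQ_\alpha}{\PP}$ must develop small values on events where $\hat\mu^s$ concentrates, and since $V$ typically blows up at $0$ (e.g.\ $V(y)=-\log y - 1$), nothing prevents $\EE[V(y\,\tRN{\QQ_\alpha}{\PP})]$ from staying strictly above $\EE[V(\tRN{\hat\mu^r}{\PP})]$ along the whole net. The actual reductions in \cite{KS99} and \cite{CSW01} (and in \cite{LarZit12}) do not proceed by approximating a finitely-additive minimizer by an arbitrary weak$^*$-convergent net; they argue on the primal side, using Koml\'os-type compactness and the asymptotic-elasticity bound to control $V$ near the origin, to establish strong duality directly against the $\sigma$-additive dual. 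So the step you flag as ``the main obstacle'' is indeed a genuine missing argument, and the particular mechanism you sketch for overcoming it would fail; you need to either cite a reduction lemma of the Larsen--\v{Z}itkovi\'c or Cvitani\'c--Schachermayer--Wang type, as the paper does, or reproduce its primal-side compactness argument in full.
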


\begin{proof}
 Let $\sC:= (\sK-\lzer_+)\cap \linf$ and let $\sC'$ be the family of all
 random variables $X'\in\linf$ of the form
 \[ X' = B+X + \eps  R, \ewhere X\in\sC,\quad \eps \in [-c,c].\]
The support function $\alpha_{\sC'}$ for the set $\sC'$ is then given by
 \begin{align*}
 \alpha_{\sC'}(\mu) &= \sup_{X' \in \sC'}  \scl{\mu}{X} \\& = \scl{\mu}{B} +
\sup_{X\in\sC,\ \eps \in [-c,c]} \Big( \scl{ \mu}{X} + \eps
\scl{\mu}{ R}\Big)\\ &=
\scl{\mu}{B} + c\abs{\scl{\mu}{ R}} + \begin{cases} 0, & \mu\in\sD, \\ +
\infty, & \mu\not\in\sD. \end{cases}
 \end{align*}
It follows that
\begin{align}
\label{equ:abv}
\inf_{\mu\in\sD} \Big( \obv(\mu) + c\abs{\scl{\mu}{ R}} \Big) =
\inf_{\mu\in \ba(\P)} \Big( \obva{0}(\mu) + \alpha_{\sC'}(\mu) \Big).
\end{align}
Moreover, the set $\sC$ is weak$^*$-closed by Theorem 4.2 in \cite{DS94}; hence, so is $\sC'$. Hence, the assumptions of Proposition 3.14, p.~686
of \cite{LarZit12} are satisfied (via Corollary 3.4, p.~679 in
\cite{LarZit12}) and, so, the infimum on the right-hand side of
\eqref{equ:abv} can be replaced by an infimum over $\sigma$-additive
measures.
\end{proof}
Our next two consequences of Theorem \ref{thm:Davis} provide a partial generalization
and an alternative method of proof for Theorem 3.1, p.~206 in \cite{HugKraSch05}.
\begin{proposition}\label{pro:HKS}
Suppose that $U$ is reasonably elastic and that
the dual problem \eqref{equ:dual} admits a
non-$\sigma$-additive optimizer.  Then there exists $A\in\sF$ such that $ \vp=\ind{A}$ has
multiple $B$-conditional Davis prices.
\end{proposition}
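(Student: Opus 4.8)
The plan is to reduce the claim, through Theorem~\ref{thm:Davis}, to the statement that the set $\opt$ of dual minimizers is not a single point, and then to obtain the latter from a perturbation of the singular part of the assumed non-$\sigma$-additive minimizer, using Corollary~\ref{cor:approx} (this is where reasonable elasticity enters).

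First, the reduction. By Theorem~\ref{thm:Davis}, for every $A\in\sF$ one has $P(\ind{A}\,|\,B)=\sets{\scl{\QQ}{\ind{A}}}{\QQ\in\optz}=\sets{\QQ(A)}{\QQ\in\optz}$. Thus $\ind{A}$ has more than one $B$-conditional Davis price as soon as two elements of $\optz$ disagree on $A$; conversely, if $\ind{A}$ had a unique Davis price for \emph{every} $A\in\sF$, then any two elements of $\optz$ would agree on all of $\sF$, i.e.\ $\optz$ would be a singleton. By Lemma~\ref{lem:minimizer} all $\mu\in\opt$ share the same regular part, with density $\hY$ satisfying $\PP[\hY>0]>0$, so $\E[\hY]>0$; hence if $\mu_1,\mu_2\in\opt$ satisfy $\mu_1/\mu_1(\Omega)=\mu_2/\mu_2(\Omega)$, comparison of the regular parts forces $\mu_1(\Omega)=\mu_2(\Omega)$ and therefore $\mu_1=\mu_2$. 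Consequently, ``$\optz$ is a singleton'' is equivalent to ``$\opt$ is a singleton'', and it suffices to show that, under our hypotheses, $\opt$ contains at least two elements.

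Suppose, for contradiction, that $\opt=\set{\hat\mu}$ with $\hat\mu$ not $\sigma$-additive, and write $\hat\mu=\hY\,d\PP+\hat\mu^s$, where $\hat\mu^s\ge 0$ is the $\PP$-singular part; then $\hat\mu^s\ne 0$ and is itself not $\sigma$-additive. By the representation~\eqref{equ:rep}, on the convex set $\sN:=\sets{\mu\in\sD}{\mu^r=\hY\,d\PP}$ we have $\obv(\mu)=\E[V(\hY)]+\scl{\mu}{B}$, so $\opt$ is exactly the face of $\sN$ on which the linear functional $\mu\mapsto\scl{\mu}{B}$ attains its minimum; by assumption this face is the single point $\hat\mu$. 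To reach a contradiction it is enough to produce a nonzero signed finitely additive measure $\rho$ with vanishing regular part and $\scl{\rho}{B}=0$ such that $\hat\mu+t\rho\in\sD$ for all sufficiently small $t>0$: then $\hat\mu+t\rho\in\sN$ and $\scl{\hat\mu+t\rho}{B}=\scl{\hat\mu}{B}$, so $\hat\mu+t\rho$ would be a second element of $\opt$.

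Such a $\rho$ is obtained by redistributing the mass of $\hat\mu^s$. Using the description $\sD=\sets{\mu\in\ba_+(\PP)}{\scl{\mu}{X}\le 0\ \text{for all}\ X\in\sC}$, with $\sC=(\sK-\lzer_+)\cap\linf$ as in the proof of Lemma~\ref{lem:minimizer} (Corollary~3.4(1) in \cite{LarZit12}), the requirement $\hat\mu+t\rho\in\sD$ reduces, for small $t$, to $\scl{\rho}{X}=0$ for every $X\in\sC$ that is \emph{active} at $\hat\mu$, i.e.\ with $\scl{\hat\mu}{X}=0$. This is where reasonable elasticity is invoked: by Corollary~\ref{cor:approx} the value $\valv(B)$ is already approached over the countably additive elements $y_n\QQ_n\in\sD$, and running the compactness argument of Lemma~\ref{lem:minimizer} on these approximants shows that the purely finitely additive portion of $\hat\mu$ must concentrate, along sets of vanishing $\PP$-probability, on $\set{X=0}$ for every active $X\in\sC$; since $\hat\mu$ is not $\sigma$-additive, $\hat\mu^s$ is not carried by countably many $\PP$-atoms, so there is genuine room to pick a nonzero $\PP$-singular $\rho$ supported where $\hat\mu^s$ lives with $\scl{\rho}{B}=0$ and $\hat\mu^s\pm t\rho\ge 0$, and any such $\rho$ automatically annihilates the active constraints. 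Hence $\hat\mu\pm t\rho\in\opt$ for small $t$, contradicting $\opt=\set{\hat\mu}$. The main obstacle is precisely this last feasibility verification — that the perturbation remains inside $\sD$ — and it is there that both the structure of the terminal-wealth cone $\sK$ and the reasonable-elasticity hypothesis (through Corollary~\ref{cor:approx}) are essential.
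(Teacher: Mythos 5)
The reduction you make at the start is correct and matches what the paper also (implicitly) does: by Theorem~\ref{thm:Davis} the statement is equivalent to $\optz$ (equivalently, by Lemma~\ref{lem:minimizer} plus $\PP[\hY>0]>0$, the set $\opt$) not being a singleton. However, your main argument — the direct perturbation of the singular part — is a genuinely different route from the paper's, and it has a serious gap.

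The paper establishes non-uniqueness indirectly: by Corollary~\ref{cor:approx}, one may take a minimizing sequence $\seq{\mu}$ of \emph{countably additive} elements of $\sD$. Either this sequence is not weak$^*$-convergent, in which case two subnets converge to two distinct elements of $\opt$ with the same total mass, and we are done; or it converges, in which case Vitali–Hahn–Saks forces the limit to be countably additive, and the hypothesized non-$\sigma$-additive optimizer is then a second, distinct, element of $\opt$. Nowhere does the paper need to exhibit an explicit perturbation direction.

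Your argument, by contrast, tries to produce a second minimizer by moving $\hat\mu$ by $t\rho$ for some $\PP$-singular $\rho$ annihilating $B$ and the ``active constraints''. The critical step — that $\hat\mu+t\rho\in\sD$ for small $t>0$ as soon as $\scl{\rho}{X}\le 0$ for every $X\in\sC$ with $\scl{\hat\mu}{X}=0$ — does not hold here, because $\sC$ is a \emph{cone}: the constraint functional $\sup_{X\in\sC,\,\norm{X}\le 1}\scl{\mu}{X}$ is identically $0$ on $\sD$, so there is no positive margin on ``inactive'' constraints. Concretely, one cannot rule out a sequence $X_n\in\sC$ with $\norm{X_n}\le 1$, $\scl{\hat\mu}{X_n}\upto 0^-$, and $\scl{\rho}{X_n}$ bounded away from $0$; then $\scl{\hat\mu+t\rho}{X_n}>0$ for every $t>0$ once $n$ is large, so $\hat\mu+t\rho\notin\sD$ for any $t>0$. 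The ``small $t$ depending only on active constraints'' reasoning is valid for finitely many affine constraints, not for a cone cut out by infinitely many homogeneous inequalities. Beyond this, the passage from Corollary~\ref{cor:approx} to the assertion that $\hat\mu^s$ ``concentrates on $\{X=0\}$ for every active $X$'' is not a well-posed statement about a purely finitely additive measure and does not follow from the cited corollary, which only says that $\valv(B)$ can be approached over countably additive elements; and the existence of a nonzero $\PP$-singular $\rho$ with $\scl{\rho}{B}=0$, $\hat\mu^s\pm t\rho\ge 0$, \emph{and} the required annihilation property is asserted rather than proved. In short, the reduction is fine and the identification of where reasonable elasticity enters (Corollary~\ref{cor:approx}) is right, but the perturbation step would need an entirely different justification, and the paper's minimizing-sequence/Vitali–Hahn–Saks argument sidesteps exactly this difficulty.
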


\begin{proof}
Let $\seq{\mu}$ be a minimizing sequence for the problem $\inf_{\mu\in\sD}
\obv(\mu)$.
By Corollary \ref{cor:approx} we can assume that each $\mu$ is countably
additive. Moreover, the argument of Lemma \ref{lem:minimizer} guarantees that
the sequence $\seq{\mu(\Omega)}$ is bounded. Therefore, $\seq{\mu}$ belongs to
a weak$^*$-compact subset of $\ba(\P)$. By extracting a further subsequence,
we may assume that the sequence of total masses $\mu_n(\Omega)$ converges
towards a positive constant $y>0$ (Lemma \ref{lem:minimizer} ensures that
$y\neq0$).

\medskip

We suppose first that $\seq{\mu}$ is not weak$^*$-convergent. Then, two of its
convergent subnets will have different limits, and both of these will be
elements of $\opt$ with the same total mass $y>0$. Hence, the set $\optz$ of
\eqref{equ:optz} is not a singleton, and, by Corollary \ref{thm:Davis}, there
exists $ \vp=\ind{A}$, with $A\in\sF$, with two different conditional Davis prices.

\medskip

On the other hand, suppose that $\seq{\mu}$ converges to $\hat{\mu}$ in the
weak$^*$-sense. Then we have $\hat{\mu}\in\opt$. Furthermore, by the
Vitali-Hahn-Sachs theorem (see \cite{DunSch88}, Corollary 8 on p.159) the limit
$\hat{\mu}$ is countably additive. Hence, the set $\opt$ will have at least
two different elements - one countably additive and one not. Then a random
variable $ \vp=\ind{A}$ with two different conditional Davis prices can be constructed  as
above.
\end{proof}

The next consequence of Theorem \ref{thm:Davis} gives a sufficient condition (analogous to that of
Theorem 3.1 on p.~206 of \cite{HugKraSch05}) for the uniqueness of conditional Davis prices. Before we state it, we recall that, under the condition of reasonable elasticity, \cite{CSW01} show there exists a process $\hpi \in \sA$ such that $\hX := \hpist + B$ satisfies
  \begin{align}
  \label{equ:hat-X}
    \EE[ U(\hX) ] = \valu(B)\text{ and }U'(\hat{X}) =  \frac{d\hat{\mu}^r}{d\P},
    \end{align}
where $\hat{\mu}\in\hat{D}(B)$. The random variable $\hat{X}$ is $\PP$-a.s. unique with this property.

\begin{corollary}Suppose $U$ is reasonably elastic and that
$\abs{\vp} \leq c \hat{X}$, for some constant $c\ge0$, where $\hat{X}$ is as in \eqref{equ:hat-X}.
Then the set $P(\vp|B)$ of $B$-conditional Davis prices for $\vp\in\mathbb{L}^\infty$ is a singleton.
\end{corollary}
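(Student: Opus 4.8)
The plan is to read the statement off the dual characterization. By Theorem \ref{thm:Davis}, $P(\vp\mid B)$ is a compact interval equal to $\{\scl{\QQ}{\vp}:\QQ\in\optz\}$, so it suffices to show that $\scl{\QQ}{\vp}$ takes the same value for every $\QQ\in\optz$; writing $\QQ=\mu/\mu(\Omega)$ with $\mu\in\opt$, this amounts to proving that $\scl{\mu}{\vp}/\mu(\Omega)$ does not depend on $\mu\in\opt$. Two structural facts about $\opt$ go into this. First, by Lemma \ref{lem:minimizer} every $\mu\in\opt$ has the \emph{same} regular part, $\mu^{r}=\hat Y\,d\PP$, and since $U$ is reasonably elastic \eqref{equ:hat-X} identifies $\hat Y=U'(\hat X)$. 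Second, combining the representation \eqref{equ:rep} with the strong duality $\valu(B)=\valv(B)$ of Lemma \ref{lem:minimizer} and Fenchel's equality $V(\hat Y)=U(\hat X)-\hat X\hat Y$ gives, for every $\mu\in\opt$,
\[
\valv(B)=\EE[V(\hat Y)]+\scl{\mu}{B}=\EE[U(\hat X)]-\EE[\hat X\hat Y]+\scl{\mu}{B}=\valu(B)-\EE[\hat X\hat Y]+\scl{\mu}{B},
\]
hence $\scl{\mu}{B}=\EE[\hat X\hat Y]$; in particular $\hat X\hat Y\in\lone$ and $\scl{\mu}{B}$ is itself independent of $\mu\in\opt$.

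The core of the argument is the following annihilation property: for every $\mu\in\opt$ and every $R\in\linf$ with $|R|\le c\hat X$ one has $\scl{\mu^{s}}{R}=0$, where $\mu^{s}$ is the nonnegative, purely finitely additive part of $\mu$. To prove it, fix $k\in\N$ and note that the bounded random variable $(\hat X\wedge k)-B=\hpist-(\hat X-k)^{+}$ lies in $\sK-\lzer_{+}$, hence in $\sC$; since $\mu\in\sD$, pairing against it is nonpositive, so
\[
0\le\scl{\mu^{s}}{\hat X\wedge k}=\scl{\mu}{\hat X\wedge k}-\EE\!\left[\hat Y\,(\hat X\wedge k)\right]\le\scl{\mu}{B}-\EE\!\left[\hat Y\,(\hat X\wedge k)\right]=\EE\!\left[\hat Y\,(\hat X-\hat X\wedge k)\right],
\]
and the right-hand side decreases to $0$ as $k\to\infty$ by dominated convergence, since $0\le\hat Y(\hat X-\hat X\wedge k)\le\hat X\hat Y\in\lone$. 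Thus $\scl{\mu^{s}}{\hat X\wedge k}\to 0$; taking $c>0$ (the case $c=0$ forces $\vp=0$ and is trivial) and $k\ge\|R\|_{\infty}/c$ yields $|R|\le c(\hat X\wedge k)$, whence $|\scl{\mu^{s}}{R}|\le\scl{\mu^{s}}{|R|}\le c\,\scl{\mu^{s}}{\hat X\wedge k}\to0$. Applying this with $R=\vp$, which is legitimate by hypothesis, we get $\scl{\mu}{\vp}=\EE[\hat Y\vp]+\scl{\mu^{s}}{\vp}=\EE[\hat Y\vp]$ for every $\mu\in\opt$.

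It remains to see that $\mu(\Omega)$ is constant over $\opt$; this is where the bookkeeping is delicate. I would pin down $\mu^{s}(\Omega)$ by running the truncation device once more against $\hpist\wedge k\in\sC$, whose positive part $\hpist^{+}\wedge k\le\hat X$ is killed by the annihilation property and whose negative part $(B-\hat X)^{+}$ is independent of $k$. Letting $k\to\infty$ in $\scl{\mu}{\hpist\wedge k}\le0$ and using $\EE[\hat Y\hpist]=\scl{\mu}{B}-\EE[\hat YB]=\scl{\mu^{s}}{B}$ gives $\scl{\mu^{s}}{(B-\hat X)^{+}}\ge\scl{\mu^{s}}{B}$; since $(B-\hat X)^{+}-B=-(B\wedge\hat X)\le0$ pointwise and $\mu^{s}\ge0$, the reverse inequality is automatic, so $\scl{\mu^{s}}{B\wedge\hat X}=0$, and hence also $\scl{\mu^{s}}{(B-\hat X)^{+}}=\scl{\mu^{s}}{B}=\EE[\hat X\hat Y]-\EE[\hat YB]$ for every $\mu\in\opt$. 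One then argues — in the spirit of Theorem~3.1 of \cite{HugKraSch05} — that these identities, together with $\mu\in\sD$, force $\mu^{s}(\Omega)$, and therefore $\mu(\Omega)=\EE[\hat Y]+\mu^{s}(\Omega)$, to be the same for all $\mu\in\opt$; when $\hat X$ is bounded away from $0$ this is immediate from $\scl{\mu^{s}}{B\wedge\hat X}=0$ (which then forces $\mu^{s}=0$), and the general case needs an additional argument controlling how $\mu^{s}$ may concentrate near $\{\hat X=0\}$. Granting constancy of $\mu(\Omega)$, we conclude $\scl{\QQ}{\vp}=\EE[\hat Y\vp]/\mu(\Omega)$ is independent of $\mu\in\opt$, so $P(\vp\mid B)$ is a singleton.

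The main obstacle is exactly this last step: the domination hypothesis $|\vp|\le c\hat X$ controls only how the purely finitely additive parts of the dual minimizers pair with $\vp$, not their total masses, so turning "the singular parts annihilate $\vp$" into "all minimizers carry the same normalization" is the technical heart; everything before it is a routine assembly of Theorem \ref{thm:Davis}, Lemma \ref{lem:minimizer}, conjugacy, and a truncation/dominated-convergence argument.
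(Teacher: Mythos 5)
Your truncation argument establishing $\scl{\mu^{s}}{\hat X\wedge k}\to 0$ is correct, and it is in fact an unpacked, self-contained version of the single external fact the paper invokes: the published proof consists entirely of the reduction ``it will be enough to show that $\scl{\hat\mu^{s}}{\hat X}=0$'' followed by a citation of the first part of Equation~(4.7) in \cite{CSW01}. Your route to that identity --- pairing $\mu\in\sD$ against $(\hat X\wedge k)-B=\hpist-(\hat X-k)^{+}\in\sC$, the equality $\scl{\mu}{B}=\EE[\hat X\hat Y]$ obtained from strong duality plus Fenchel's identity, and dominated convergence --- is a clean rederivation from the paper's own ingredients of what the paper simply borrows. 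The passage from $\scl{\mu^{s}}{\hat X\wedge k}=0$ to $\scl{\mu^{s}}{\vp}=0$ via the truncation level $k\ge\|\vp\|_{\infty}/c$ is also the step the paper silently performs when it moves from $\scl{\hat\mu^{s}}{\hat X}=0$ to the conclusion.

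Where you and the paper part company is precisely the point you flag. Theorem~\ref{thm:Davis} gives $P(\vp|B)=\{\scl{\mu}{\vp}/\mu(\Omega):\mu\in\opt\}$, so after the annihilation step one is left with $\EE[\hat Y\vp]/\mu(\Omega)$, and a singleton requires $\mu(\Omega)$ to be constant over $\opt$ (or $\EE[\hat Y\vp]=0$). The paper's ``in view of Lemma~\ref{lem:minimizer} and Theorem~\ref{thm:Davis}'' tacitly asserts this normalization is fixed, but Lemma~\ref{lem:minimizer} only identifies the common regular part $\hat Y$, not the common total mass, and the paper's own Example~\ref{CSW} is specifically built to show that $\hat\mu(\Omega)$ can genuinely range over a nondegenerate interval when $B$ is unspanned. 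Your second truncation (pairing against $\hpist\wedge k$, deriving $\scl{\mu^{s}}{B\wedge\hat X}=0$, and noting this kills $\mu^{s}$ only when $\hat X$ is bounded away from $0$) is a reasonable attempt at the missing step, and your concluding assessment --- that the domination $|\vp|\le c\hat X$ controls how $\mu^{s}$ pairs with $\vp$ but not $\mu(\Omega)$ itself --- accurately describes where the argument stalls. In short, you did not miss a trick that the paper supplies: the paper's proof stops exactly where you had to stop, and the normalization issue you raise is a real one that the published argument does not visibly resolve.
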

\begin{proof}
 In view of Lemma \ref{lem:minimizer} and Theorem \ref{thm:Davis}, it will
 be enough to show that $\scl{\hat{\mu}^s}{\hat{X}}=0$, for each $\hat{\mu}
 \in \hat{\sD}(B)$. This, in turn, follows directly from the first part of Equation (4.7) in \cite{CSW01}.
 \end{proof}

\section{Directional derivatives of the primal value function}
\label{sec:directional}

Our next task is to study directional differentiability of the primal utility-maximization
value function $\valu$ defined by \eqref{equ:u}. Its relevance in the context of Davis pricing has been noted
by several authors (including Davis in \cite{Dav97}), and we use the obtained
results in the later sections to give a workable characterization
of the interval of conditional Davis prices. First we show, by means of an example,
that smoothness - even in the most ``benign'' directions - cannot be expected
in general. Then we give a characterization of the directional derivative in terms of a
linear control problem.
We hope that both our counterexample and the later characterization hold
some independent interest outside of the context of Davis pricing.

\subsection{An example of nonsmoothness} Our next example shows that the
set $\hat{\sD}(B)$ of dual minimizers may contain measures with different total
masses. In other words, $\hat{\mu}(\Omega)$ may not be constant over $\hat{\mu} \in \hat{\mathcal{D}}(B)$. Consequently,
$\valu$ may fail to be differentiable even in ``constant directions'' in the sense that
$\eps \to \valu(B+\eps)$ may fail to be differentiable at $\eps :=0$.
Once we introduce the concept of
unique superreplicability in the next section, we will see how it can be
used to regain  differentiability in certain cases of interest.

For simplicity and concreteness, we base the example on Example 5.1'
in \cite{KS99}, and use the following notation and conventions: All random variables $X$ will be defined on the
sample space $\Omega:=\N_0$, and we write $X_n$ for $X(\set{n})$. Countably-additive
measures are identified with sequences in $\ell^1_+$ and for $\QQ=(q_n)\in \ell^1_+$ we write $\scl{\QQ}{X}$ for $\sum q_n X_n$ whenever $X=(X_n)\in \ell^\infty$.

\begin{example}\label{CSW} We start by recalling the elements
of (a special case of) the one period Example 5.1'
in \cite{KS99} where $\Omega := \mathbb{N}_0$ and $\PP=(p_n)$ with
$$
p_0 := \tfrac34, \quad p_n:= \tfrac{2^{-n}}{4} \efor n\in\N.
$$
The one-period stock-price increment $\Delta S=(\Delta S_n)$ is defined as follows
\begin{align*}
	\Delta S_0 := 1 \eand \Delta S_n := \tfrac{1-n}{n} \efor n\in\N.
\end{align*}
With $U:=\log$, the primal problem is defined by
\begin{align*}
	 u(x) := \sup_{\pi\in [-x,x]} \EE[ U(x+\pi \Delta S)],\quad x>0.
\end{align*}
Let $\sQ$ denote the set of all finite martingale measures, i.e.,
\begin{align*}
	\sQ := \sets{ \QQ\in\ell^1_{+}}{ \scl{\QQ}{\Delta S}=0 },
\end{align*}
and let $\sM := \sets{\QQ\in\sQ}{\scl{\QQ}{1}=1}$.
Because $V(y) = -1 - \log(y)$, the dual problem is given by
\begin{align*}
	 v(y) &:= \inf_{\QQ\in\sM} \EE[  V(y \tfrac{d\QQ}{d\PP})] =
	 V(y) + v^*,
	  \ewhere v^*:=\inf_{\QQ\in\sM} \EE[ - \log(\tfrac{d\QQ}{d\PP})].
\end{align*}

We will start by showing that no minimizing sequence $(\QQ^N)_N\subset \sM$ for $v^*$
 (equivalently, for $v(y)$) can
be weakly convergent in the sense that $\langle \QQ^N,f\rangle$ cannot converge for all test functions $f\in\ell^\infty$. It is a consequence of
the Vitali-Hahn-Sachs Theorem (see, e.g., Corollary 8, p.159 in \cite{DunSch88})
that $\ell^1$ is weakly sequentially complete, so any weakly convergent sequence
is necessarily weakly convergent in $\ell^1$.
Therefore, any weak limit of any minimizing sequence $(\QQ^N)_N$ must also
belong to $\sM$ and is, therefore, a minimizer for $v^*$. However,
this would contradict the strict supermartingale property of the dual
$\log$-optimizer shown in Example 5.1' in \cite{KS99}.

As a consequence of the above, for a given minimizing sequence $(\QQ^N)_N$, there
exists a random variable $H \in \ell^{\infty}$ such  that
\begin{align}
\label{equ:qN}
\scl{\QQ^N}{H} \text{ does not converge in $\R$ as $N\to\infty$.}
\end{align}
Because $\scl{\QQ^N}{1}=1$, for each $N$, we can assume that $H \geq 1$.
 Moreover, there exist two subsequences $(\QQ^{1,N})_N$ and $(\QQ^{2,N})_N$
 of $(\QQ^N)_N$ such that the limits
 \begin{align}
 \label{equ:seqs}
 	y_1=\lim_N \scl{\QQ^{1,N}}{H} \eand y_2=\lim_N \scl{\QQ^{2,N}}{H}
 	\text{ exist with  $y_1\ne y_2$. }
  \end{align}

With $H$ as above, we define $B:=1/H$ and a new stock price process with
increments
 $$ \Delta \tS := B\, \Delta S,$$ and then consider the log-utility maximization
 problem with the random endowment $B$ and the stock-price increments $\Delta \tS$.
 The associated dual problem\footnote{
It has been shown in \cite[Lemma 3.12]{LarZit12} that under the reasonable asymptotic elasticity condition, infimization over the set of countably-additive martingale measures - as opposed to its finitely-additive enlargement as in \cite{CSW01} - leads
to the same value function.}
  is given by
 \begin{align*}
 	\tv(y) &:=
 	\inf_{\tilde{\QQ}\in\tsM} \EE[  V(y \frac{d\tilde{\QQ}}{d\PP})] + y \scl{\tilde{\QQ}}{B}
 	\\
 	&= -1 +\inf_{\tilde{\QQ}\in\tsM} \Big( \EE[ - \log(y\frac{d\tilde{\QQ}}{d\PP})] +y \scl{\tilde{\QQ}}{B}
 	\Big)\\
 	&= -1 + \EE[ \log(B)] +
 	\inf_{\tilde{\QQ}\in\tsM} \Big( \EE[ - \log(y\frac{d\tilde{\QQ}}{d\PP} B)] +y \scl{\tilde{\QQ} B}
 	{1}
 	\Big)\\
	&=  \EE[ \log(B)] +
 	\inf_{\tilde{\QQ}\in\tsM} \Big( \EE[ V(y\frac{d\tilde{\QQ}}{d\PP} B)] +y \scl{\tilde{\QQ} B}
 	{1}
 	\Big),\quad y>0,
 \end{align*}
where 
\begin{align*}
	\tsQ := \sets{ \tilde{\QQ}\in\ell^1_{+}}{ \scl{\tilde{\QQ}}{\Delta \tS}=0 } \eand
\tsM := \sets{\tilde{\QQ}\in\tsQ}{\scl{\tilde{\QQ}}{1}=1}.
\end{align*}
Because $\tilde{\QQ} \in \tsQ$ if and only if $\QQ=\tilde{\QQ} B \in \sQ$, we have
\begin{align*}
	\inf_{y>0} \tv(y) 
	&= \EE[ \log(B)]  + \inf_{y>0} \inf_{\QQ \in \sM}
	\Big(\EE[V(y\frac{d\QQ}{d\PP})] + y \Big)\\
	&= \EE[ \log(B)]  + \inf_{y>0} \Big( v(y) + y \Big)
	\\&= \EE[ \log(B)] + \inf_{y>0} \Big( V(y) + y + v^*
	\Big) \\&= \EE[ \log(B)] + v^*.
\end{align*}

By using the minimizing sequences
$(\QQ^{1,N})_N$, and $(\QQ^{2,N})_N$ constructed above we  define the sequence of probability measures
\begin{align*}
	\tilde{\QQ}^{i,N} := \frac{ \QQ^{i,N} H}{ \scl{\QQ^{i,N}H}{1}}  \in \tsM \efor
	i=1,2.
\end{align*}
We can use \eqref{equ:seqs} and the fact that $(\QQ^{i,N})_N$, $i=1,2$, are minimizing sequences for $v^*$ to see
\begin{align*}
	\EE[ V(y_i \tfrac{d\tilde{\QQ}^{i,N}}{d\PP})]+y_i \scl{\tilde{\QQ}^{i,N}}{B}&=
	\EE[ V( \tfrac{y_i H}{\scl{\QQ^{i,N}H}{1}} \tfrac{d\QQ^{i,N}}{d\PP })]+\tfrac{y_i} {\scl{\QQ^{i,N} H}{1}}\\
	 &=
	 \EE[V(\tfrac{y_i H}{\scl{\QQ^{i,N}H}{1}} )] +
	 \tfrac{y_i}{\scl{q^{i,N}H}{1}}
	 - \EE[ \log( \tfrac{d\QQ^{i,N}}{d\PP})]\\&
	 \to \EE[ \log(B)] + v^*
	 \\&=\inf_{y>0} \tv(y).
\end{align*}
Clearly, $\tv(y_i)\ge \inf_{y>0} \tv(y)$, for $i=1,2$, which implies that
 $\tilde{\QQ}^{i,N}$ is a minimizing sequence for $\tv(y_i)$. Therefore,
 $\tv(y_1) = \tv(y_2) = \inf_{y>0} \tv(y)$ which implies that $\tv$ is constant
 on $[y_1,y_2]$. This, in turn implies, that the conjugate function to $\tilde{v}$
 fails to be differentiable at $0$ (indeed, the entire segment $[y_1,y_2]$ belongs to its superdifferential at zero).
\QEDB

\end{example}

\begin{remark}\
\begin{enumerate}
  \item The construction of the random endowment $B$ in Example \ref{CSW}
  above rests on the weak sequential completeness property of $\ell^1$ which, in fact, holds for any
  $\lone$-space. Example \ref{CSW} above is therefore generic in the sense that it can be applied to any
model which produces non-trivial singular components in the dual optimizer for
the log-investor (with constant endowment). This implies that there also exist
random endowments in the Brownian setting of Example 5.1 in \cite{KS99} which
produce a non-differentiable primal utility function.

\item Example  \ref{CSW} seems to contradict the claimed smoothness of the
primal value function stated in Theorem 3.1(i) in \cite{CSW01}\footnote{The authors first learned from Pietro Siorpaes about the potential lack of correctness of Remark 4.2 in \cite{CSW01}. We also refer the reader to Erratum \cite{CSW17} for further discussions.}:
 With the
notation from  Example \ref{CSW} we can define the primal utility function
\begin{align}\label{tildeu}
	 \tilde{u}(x) := \sup_{\pi\in \R} \EE[ U(x+\pi \Delta \tilde{S}+B)],\quad
	 x\in\R,
\end{align}
where we use the convention $\EE[  U(x+\pi \Delta \tilde{S}+B)]=-\infty$ if
 $\EE[  U(x+\pi \Delta \tilde{S}+B)^-]=+\infty$. Then $\tilde{u}$ is not
 differentiable at $x=0$ which is an interior point in $\tilde{u}$'s domain.
\end{enumerate}
\end{remark}

\subsection{A characterization via a linear stochastic control problem}
Even though the superdifferential of $\valu$ at $B$ consists of finitely-additive
measures related to the solution of the dual problem, it is possible to
give a characterization of directional derivatives without any recourse
to finite additivity. This is the most attractive feature of our linear
characterization in Proposition \ref{pro:var} below; however, as we shall see later,
it also leads to explicit computations in many cases. The price we pay is the
increased complexity of the linearized problem's domain.

Throughout the reminder of the paper we impose the following assumption,
where $\hat{X}$ is the primal optimizer characterized by \eqref{equ:hat-X},
and whose existence is guaranteed by the assumption of reasonable elasticity:
\begin{assumption}
\label{ass:U-prime} $U$ is reasonably
  elastic and there exists a constant $b>0$ such that
\begin{align}
\label{equ:ass-U}
\hX\,  U'\Bp{ (1-b) \hX}  \in \lone(\P).
\end{align}
  \QEDB
\end{assumption}
\begin{remark}
Assumption \ref{ass:U-prime} holds automatically if,
for example, $U$ belongs to the class of CRRA (power)
utilities
\[ U(x) = \tfrac{x^p}{p}, \efor p \in (-\infty,1)\setminus \set{0} \eor
U(x) = \log(x).\]
\end{remark}
Given the optimizer $\hpi\in\sA$ and the random variable $\hX$, we let $\Delta(\vp)
:= \cup_{\eps>0}\Delta^{\eps}(\vp)$ where
$\Delta^{\eps}(\vp)$
denotes the class of all
$\delta\in L(S)$, such that
   \begin{equation}
   \label{equ:Delta}
   \begin{split}
     \hpi + \eps \delta \in \sA \eand \hX + \eps (\vp+\dest) \geq  0.
   \end{split}
   \end{equation}
Because $\sA$ is a convex cone and $\hX\geq 0$, the family $\Delta^{\eps}(\vp)$ is nonincreasing in $\eps\geq 0$ in the sense
 \begin{align}
 \label{equ:mon-eps}
    \eps_1\leq \eps_2 \Rightarrow
   \Delta^{\eps_2}(\vp) \subseteq \Delta^{\eps_1}(\vp).
 \end{align}
Similarly, the family $\Delta^{\eps}(\vp)$ is nondecreasing in
$\vp\in\linf$ in the sense
 \begin{align}
 \label{equ:mon-vp}
    \vp_1 \leq \vp_2 \Rightarrow \Delta^{\eps}(\vp_1) \subseteq
   \Delta^{\eps}(\vp_2).
 \end{align}

\begin{proposition}
\label{pro:var}
Under Assumption \ref{ass:U-prime} we have for $\varphi\in\mathbb{L}^\infty(\P)$
\begin{align}
 \label{equ:var}
 \lim_{\eps \searrow 0}
 \oe \p{\valu(B+\eps  \vp)-\valu(B)} =
 \sup_{\delta\in\Delta(\vp)} \EE[ \hY \bp{\dest+ \vp}],
\end{align}
where $\hat{Y}:= \frac{d\hat{\mu}^r}{d\P}$.
\end{proposition}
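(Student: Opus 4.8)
The plan is to prove the two inequalities ``$\le$'' and ``$\ge$'' in \eqref{equ:var} separately, after recording three easy preliminaries. (i) Since $\valu$ is concave and finite-valued in a $\linf$-ball around $B$, the map $\eps\mapsto\valu(B+\eps\vp)$ is concave and finite near $0$, so the difference quotient $\oe\bp{\valu(B+\eps\vp)-\valu(B)}$ is nonincreasing in $\eps>0$; hence the limit on the left of \eqref{equ:var} exists in $(-\infty,+\infty]$ and it is enough to bound it above and below by the right-hand side. (ii) By reasonable elasticity (Assumption \ref{ass:U-prime}), for every small $\eps>0$ the problem $\valu(B+\eps\vp)$ admits an optimizer $\hX^{(\eps)}=B+\eps\vp+(\hpi^{(\eps)}\cdot S)_T$ characterized as in \eqref{equ:hat-X}; in particular $\hX^{(\eps)}>0$ a.s., $U(\hX^{(\eps)})\in\lone$, and for $\eps=0$ this is $\hX$ with $\hY=U'(\hX)>0$ a.s. (iii) Because $U'$ is nonincreasing, the growth hypothesis \eqref{equ:ass-U} gives $\hX\,U'(\hX)\le\hX\,U'\bp{(1-b)\hX}\in\lone$, and the concavity (secant) inequality gives $0\le U(\hX)-U\bp{(1-b)\hX}\le b\,\hX\,U'\bp{(1-b)\hX}\in\lone$; these two integrability facts will drive all the bookkeeping below.

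For ``$\le$'' I would fix small $\eps>0$, take the optimizer $\hpi^{(\eps)}$ from (ii), and set $\delta^{\eps}:=\eps^{-1}\bp{\hpi^{(\eps)}-\hpi}\in L(S)$. Then $\hpi+\eps\delta^{\eps}=\hpi^{(\eps)}\in\sA$ and $\hX+\eps\bp{\vp+(\delta^{\eps}\cdot S)_T}=\hX^{(\eps)}\ge0$, so $\delta^{\eps}\in\Delta^{\eps}(\vp)\subseteq\Delta(\vp)$. The tangent inequality for the concave $U$ at the interior point $\hX>0$ gives, $\PP$-a.s.,
\[
U\bp{\hX^{(\eps)}}-U(\hX)\ \le\ \hY\bp{\hX^{(\eps)}-\hX}\ =\ \eps\,\hY\bp{\vp+(\delta^{\eps}\cdot S)_T};
\]
the right-hand side has integrable negative part (as $(\delta^{\eps}\cdot S)_T$ is bounded below by a constant and $\hY\in\lone$), so integrating, using $\EE[U(\hX^{(\eps)})]=\valu(B+\eps\vp)$ and $\EE[U(\hX)]=\valu(B)$, and dividing by $\eps$ gives
\[
\oe\bp{\valu(B+\eps\vp)-\valu(B)}\ \le\ \EE\big[\hY\bp{\vp+(\delta^{\eps}\cdot S)_T}\big]\ \le\ \sup_{\delta\in\Delta(\vp)}\EE\big[\hY\bp{\dest+\vp}\big].
\]
Letting $\eps\searrow0$ proves ``$\le$'' (note this half does not use \eqref{equ:ass-U}).

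For ``$\ge$'' I would fix $\eps_0>0$, $\delta\in\Delta^{\eps_0}(\vp)$, abbreviate $Z:=\vp+(\delta\cdot S)_T$, and let $b\in(0,1)$ be as in \eqref{equ:ass-U}. From $\hX+\eps_0 Z\ge0$ one has $|Z|\le\hX/\eps_0$ on $\{Z<0\}$, hence for $0<\eps\le b\eps_0$: $\hpi+\eps\delta\in\sA$ (by \eqref{equ:mon-eps}), $\hX+\eps Z\ge(1-\eps/\eps_0)\hX\ge(1-b)\hX>0$ on $\{Z<0\}$ and $\hX+\eps Z\ge\hX>0$ on $\{Z\ge0\}$, so $\hX+\eps Z$ is attainable from the endowment $B+\eps\vp$ and $U(\hX+\eps Z)\ge U(\hX)\ind{Z\ge0}+U\bp{(1-b)\hX}\ind{Z<0}$, whose negative part is integrable by fact (iii). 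Thus $\EE[U(\hX+\eps Z)]$ is a well-defined real $\le\valu(B+\eps\vp)$ and
\[
\oe\bp{\valu(B+\eps\vp)-\valu(B)}\ \ge\ \EE[g_\eps],\qquad g_\eps:=\tfrac{1}{\eps}\bp{U(\hX+\eps Z)-U(\hX)}.
\]
As $\eps\searrow0$, $g_\eps\to\hY Z$ $\PP$-a.s.\ (difference quotient of $U$ at $\hX>0$), and for $0<\eps\le b\eps_0$ the secant inequality together with monotonicity of $U'$ yields the $\eps$-uniform minorant $g_\eps\ge-\ind{Z<0}(\hX/\eps_0)U'\bp{(1-b)\hX}\in\lone$ (and $g_\eps\ge0$ on $\{Z\ge0\}$). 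Fatou's lemma then gives $\liminf_{\eps\searrow0}\EE[g_\eps]\ge\EE[\hY Z]$, whence $\lim_{\eps\searrow0}\oe\bp{\valu(B+\eps\vp)-\valu(B)}\ge\EE\big[\hY\bp{\dest+\vp}\big]$, and taking the supremum over $\delta\in\Delta(\vp)=\bigcup_{\eps_0>0}\Delta^{\eps_0}(\vp)$ proves ``$\ge$''.

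The main obstacle is the integrability control needed to interchange limit and expectation in the ``$\ge$'' part, i.e.\ the construction of the $\eps$-uniform integrable minorant for $\{g_\eps\}$: one has to keep the perturbed terminal wealth $\hX+\eps Z$ bounded below by a fixed fraction $(1-b)\hX$ of the base optimum — which is exactly what forces the restriction $\eps\le b\eps_0$ — and simultaneously control the slope of $U$ near that level, and it is precisely here that \eqref{equ:ass-U} is indispensable, through $\hX\,U'((1-b)\hX)\in\lone$. Everything else is a soft combination of the tangent/secant inequalities for the concave $U$ with Fatou and monotone passages to the limit, together with the elementary identity $\valu(B)=\EE[U(\hX)]$.
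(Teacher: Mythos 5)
Your proof is correct and follows essentially the same route as the paper's: tangent inequality for $U$ at $\hX$ applied to an optimal (or near-optimal) competitor $\hX^{(\eps)}$ of the perturbed problem for the upper bound, and the secant inequality together with the floor $\hX+\eps Z \ge (1-b)\hX$ (for $0<\eps\le b\eps_0$) feeding Fatou's lemma for the lower bound, with Assumption \ref{ass:U-prime} delivering the integrable minorant in exactly the way you identify. One small slip in the upper-bound justification: you claim $(\delta^{\eps}\cdot S)_T$ is bounded below by a constant, but $\delta^\eps = \eps^{-1}(\hpi^{(\eps)}-\hpi)$ need not lie in $\sA$, so this need not hold; the integrability of the negative part of $\hY\bp{\vp+(\delta^\eps\cdot S)_T}$ is nevertheless true, either because $\vp+(\delta^\eps\cdot S)_T = \eps^{-1}(\hX^{(\eps)}-\hX)\ge -\eps^{-1}\hX$ and $\hY\hX=\hX U'(\hX)\in\lone$ by your fact (iii), or more directly because the tangent inequality's left side $U(\hX^{(\eps)})-U(\hX)$ is in $\lone$ and the negative part of any a.s.\ upper bound of an $\lone$ function is again in $\lone$.
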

\begin{proof} For small enough $\eps> 0$ we can find $\pi^{\eps}\in\sA$ such that $X^{\eps} = (\pi^{\eps}\cdot
S)_T + B + \eps \vp$ has the property that
\[ \EE[ U(X^{\eps}_T)] \geq \valu(B+\eps \vp) - \eps^2. \]
For such an $\eps>0$ we define
  \[ \delta^{\eps}= \oe \Big( \pi^{\eps} - \hpi \Big).\]
Since $\hpi+\eps \delta^{\eps} = \pi^{\eps}\in\sA$, the first part of
\eqref{equ:Delta} above holds.
To see that the second part of \eqref{equ:Delta} holds, we note that
$\hX+\eps\bp{\deest+\vp} = X^{\eps}$ and $\EE[ U(X^{\eps})] > -\infty$ which implies
$\hX+\eps\bp{\deest+\vp}\geq 0$. Therefore, we have $\delta^{\eps}\in
\Delta^{\eps}(\vp)$.
The concavity of $\valu$ then implies that
\begin{align*}
  \valu(B+\eps \vp)&\leq \EE[U(X^{\eps})] + \eps^2 \\&\leq  \EE[
  U(\hX)] +
  \eps \EE[ U'( \hX) (  \vp +  \deest)] + \eps^2 \\
  &\leq \valu(B) + \eps \sup_{\delta\in \Delta^{\eps}(\vp)} \EE[ \hY (\dest+ \vp)] +
  \eps^2
  \\ & \leq \valu(B) + \eps \sup_{\delta\in\Delta(\vp)} \EE[
  \hY(\dest+\vp)]+\eps^2.
\end{align*}
This produces the upper bound inequality
\[ \limsup_{\eps \searrow 0} \oe \Bp{ \valu(B+\eps \vp) - \valu(B)}
\leq \sup_{\delta\in\Delta(\vp)} \EE[ \hY(\dest+\vp)].\]

To prove the opposite inequality,
we pick $\eps_0>0$ and $\delta \in
\Delta^{\eps_0}(\vp)$, so that $\hpi+\eps_0 \delta\in\sA$ and
$\hX + \eps_0 D \geq 0$, where
\[ D= \dest+\vp.\]
Because $b>0$, we also have
\[ \hX + b \eps_0 D \geq
(1-b) \hX.\]
Therefore, for $\eps\in(0, \eps_1)$ with $\eps_1:=b \eps_0$ we have
 \begin{align}
 \label{equ:bnd}
    \hX + \eps D \geq (1-b)\hX>0 .
 \end{align}
The concavity of $U$ implies that for $\eps\in (0, \eps_1)$ we have
\[ U( \hX + \eps D) \geq
U(\hX) + \eps Y^{\eps} D \ewhere Y^{\eps}=U'\Bp{\hX+\eps D}.\]
Therefore, for $\eps\in(0, \eps_1)$ we  obtain
\[ \valu(B+\eps \vp) \geq \EE[ U(\hX+\eps  D)] \geq \valu(B) + \eps \EE[
Y^{\eps} D].\]

In order to pass $\eps$ to zero we note that \eqref{equ:bnd} gives us
 \begin{align}
 \label{equ:Fat}
    \Bp{ Y^{\eps}D }^- \leq U'\bp{(1-b)\hX} D^- \leq
  U'\bp{(1-b)\hX}\tfrac1{\eps_0} \hX,
 \end{align} which is integrable by assumption. The uniform bound in
 \eqref{equ:Fat} allows us to use Fatou's lemma together with $Y^\eps \to
 U'(\hat{X})=:\hat{Y}$, $\P$-a.s., to conclude that
\[ \liminf_{\eps\searrow 0}  \oe \Bp{ \valu(B+\eps \vp) - \valu(\vp) }
\geq \liminf_{\eps\searrow 0}\, \EE[ Y^{\eps} D] \geq \EE[ \hY
D].\qedhere \]
\end{proof}

The following example highlights the role strict local martingales play in
the linear optimization problem appearing in \eqref{equ:var}. The next
section identifies the key components which make this toy example work.

\begin{example} \label{ex:lmg}Let $(\Omega,\sF,\PP)$ be a probability space supporting two
independent Brownian motions $(Z,W)$ and we let $\prf{\sF_t}$ be their
augmented filtration up to some maturity $T>0$. We define the stock price dynamics to be
 \begin{align}
 \label{equ:dS_0}
   dS_t := S_t \big(\lambda_t dt + dZ_t\big),\quad S_0>0,
 \end{align}
 where the process $\ld$ is as in
\cite{DS98b} so that the
\emph{minimal martingale density}
$$
\sE(-\lambda \cdot Z)_t:= e^{-\int_0^t \lambda_u dZ_u -\frac12 \int_0^t \lambda^2_udu},\quad t\in[0,T],
$$
fails the martingale property even though the set $\sM$ of equivalent local
martingale measures is non-empty. As a consequence, Example 5.1 in \cite{KS99} shows that the log-investor's dual utility optimizer $\hat{Y}_t:=\hat{Y}_0\sE(-\lambda \cdot Z)_t$ is a strict local martingale.

We consider the simple case where $\hat{X}_0:=1$ and the payoff $\varphi$ is
constant. The fact that we are working with the $\log$-utility implies that
$\hat{Y}_0=1$, and  Remark 3.2 in
\cite{HugKraSch05} states that the unique Davis price of $\varphi$ is
$\varphi$ itself, a quantity
different from $\E[\hat{Y}_T\varphi]$.

For $\delta \in \Delta^\eps(\vp)$ we have
$$
(\delta \cdot S)_t \ge -\varphi -\tfrac 1\eps \hat{X}_t,\quad t\in[0,T].
$$
Thanks to the fact that $\hY\hat{X} = 1$, which is the standard myopic property
of optimizers in  logarithmic utility maximization,
the local martingale $\hY_t \bp{(\delta \cdot S)_t + \vp}$ is a lower
bounded by $-\tfrac1\eps$; hence, it is a supermartingale. Therefore, the
limit on the left-hand side of \eqref{equ:var} is bounded from above by
\begin{align}\label{simple_davis}
 \sup_{\delta\in\Delta(\vp)} \EE[ \hY_T \bp{\dest+ \vp}] \le \vp.
\end{align}
Because $\hat{Y}$ is a strict local martingale, we see that for any $\delta\in
\Delta(\vp)$ for which the local martingale $\hY_t (\delta \cdot S)_t$ is a
martingale the expression $\EE[ \hY_T \bp{\dest+\vp}]$ stays bounded away from the upper bound in \eqref{simple_davis}.
On the other hand, that upper bound
is attained at any
$\delta \in \Delta(\vp)$ which satisfies the requirement
 \begin{flalign*}
 && \hY_T (\delta \cdot S)_T = \vp (\hat{Y}\lambda\cdot Z)_T. && \Box
\end{flalign*}
\end{example}

\section{Uniquely superreplicable random variables}
\label{sec:unique}
While the linear control problem of Proposition \ref{pro:var}
provides a useful characterization of $\valu$'s directional derivatives, the linear problem seems
to be difficult to solve explicitly in full generality. The present section
outlines a relevant class of payoffs $\vp$ for which such a tractable solution
is, indeed, available. It involves the notion of unique superreplicability
similar to Condition (B1) in \cite{LSZ16}.
\begin{definition}
\label{def:minimal}
A random variable $\psi\in\mathbb{L}^\infty(\P)$ is said to be
\begin{enumerate}
\item \define{replicable}
if there exists a constant $\psi_0\in\R$ and $\pi_{\psi} \in \sA\cap (-\sA)$ such
  that
  \[ \psi = \psi_0 + (\pi_{\psi}\cdot S)_T.\]
\item
\define{uniquely superreplicable (by $\Psi$)} if $\Psi\in\mathbb{L}^\infty(\P)$ is replicable, $\Psi \geq \psi$, and
  \[
    x + \st{\pi} \geq \psi \Implies x+\st{\pi}\geq  \Psi \]
    for all
    $x\in\R$ and $\pi \in \sA$.
\end{enumerate}
\end{definition}
\begin{remark}\label{condB1}\
\begin{enumerate}
\item { The need to use uniformly bounded gains processes for replication purposes such as in Definition \ref{def:minimal}(1) has long been recognized; see, e.g., Definition 1.15 in \cite{SC02} and the first part of Remark 3.2 in \cite{HugKraSch05}.
}
\item The representation in Definition \ref{def:minimal}(1) of a replicable
claim $\psi$ in terms of $(\psi_0, \pi_\psi)$ is unique. Moreover, the process
$(\pi_{\psi}\cdot S)_t$ is a bounded $\QQ$-martingale for each $\QQ\in\sM$.
Consequently, because each $\mu \in \sD$ is the weak$^*$ limit of a net
$y_\alpha\Q_\alpha$ with $y_\alpha \in [0,\infty)$ and $\Q_\alpha \in \sM$, we
have
$$
\langle \mu, (\pi_{\psi}\cdot S)_t\rangle = \lim_\alpha y_\alpha \langle \Q_\alpha, (\pi_{\psi}\cdot S)_t\rangle = 0.
$$

\item Provided it exists, the random variable $\Psi$ in Definition \ref{def:minimal}(2) is unique. If $\Psi = \Psi_0 + (\pi_{\Psi}\cdot S)_T$ uniquely superreplicates $\psi$, we have
the representation
$$
\Psi_0 = \sup_{\Q\in\sM} \E^\Q[\psi].
$$

\item Unique superreplicability is scale invariant: If $\psi$ is uniquely
superreplicable by $\Psi$, then $\alpha \psi$ is uniquely superreplicable by
$\alpha \Psi$ for $\alpha \geq  0$.  It is also invariant under
translation by replicable random variables. In particular, replicable
random variables are uniquely superreplicable.
\end{enumerate}
\end{remark}

\begin{example}\label{example3}
Let $(\Omega,\sF,\PP)$ be a probability space supporting two
independent Brownian motions $(\beta,W)$ and we let $\prf{\sF_t}$ be their
augmented filtration up to some maturity $T>0$. With the set of all
pathwise $p$-integrable predictable processes denoted by $\sL^p$, we let
 $S$ be the It\^ o process
 \begin{align}
 \label{equ:dS-ex}
   dS_t := S_t\sigma_t \big(\lambda_t dt + d\beta_t\big),\quad S_0>0,
 \end{align}
where $\sigma, \lambda\in \sL^2$ are such that NFLVR
holds.

\medskip

We focus on payoffs of the form $\vp=\varphi(W_T)$, where
$\varphi:\R\to\R$ is a bounded Lipschitz function. To show that such $\varphi(W_T)$ is
uniquely superreplicable by the constant $\sup_a
\varphi(a)$, we start by assuming that
\[ x+ (\pi \cdot S)_T\ge \varphi(W_T)\text{ a.s., }\]
for some $x\in \R$ and some $\pi\in \sA$. Then, for each $t \in [0,T)$ we
  have
 \begin{align}
 \label{equ:esssup}
x+(\pi \cdot S)_t &\ge \esssup_{\Q\in\sM} \E^\Q[x+(\pi \cdot S)_T|\sF_t] \ge
\esssup_{\Q\in\sM} \E^\Q[\varphi(W_T)|\sF_t].
\end{align}
 Lemma \ref{lem:lone} below
 gives conditions under which the limit as $t\uparrow T$ of
 the right-hand side of \eqref{equ:esssup} equals $\sup_a \varphi(a)$.
 When these conditions are met, the continuity of the paths of
 the stochastic integral with respect to $S$ implies that
 $x+ (\pi\cdot S)_T \geq \sup_a \vp(a)$. This, in turn,   confirms
 that $\vp(W_T)$ is uniquely superreplicable by the constant $\sup_a
 \vp(a)$.
  \QEDB
\end{example}

\begin{lemma} \label{lem:lone} In the setting of Example \ref{example3}
above with $\varphi:\R\to\R$ bounded and Lipschitz, assume that there exists a nonnegative (deterministic) function
$f\in \lone([0,T])$  and a predictable process
$\nu^{(0)} \in\sL^2$ such that
\begin{enumerate}
  \item  $|\nu_u^{(0)}| \le f(u)$, for Lebesgue-almost all $u\in[0,T]$,
  $\PP$-a.s., and
  \item the stochastic exponential $Z^{(0)}_T := \EN( -\ld\cdot \beta -
  \nu^{(0)}\cdot W)_T$ is the Radon-Nikodym density of some
  $\QQ^{(0)}\in\sM$ with respect to $\PP$.
\end{enumerate}
Then
\begin{align}\label{conclusion}
\lim_{t\uparrow T}\esssup_{\Q\in\sM} \E^\Q[\varphi(W_T)|\sF_t]= \sup_a \vp(a).
\end{align}
\end{lemma}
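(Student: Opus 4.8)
The plan is to establish the two inequalities in \eqref{conclusion} separately. Since $\vp(W_T)\le\sup_a\vp(a)$ pointwise, we have $\esssup_{\Q\in\sM}\E^{\Q}[\vp(W_T)\,|\,\sF_t]\le\sup_a\vp(a)$ for every $t<T$, so the entire content is the matching lower bound. For that it suffices to show that for each $\delta>0$ there is a \emph{deterministic} $t_\delta<T$ such that
\[
\esssup_{\Q\in\sM}\E^{\Q}[\vp(W_T)\,|\,\sF_t]\ \ge\ \sup_a\vp(a)-2\delta\qquad\PP\text{-a.s., for all }t\in(t_\delta,T).
\]
Given $\delta$, pick $a^{*}\in\R$ with $\vp(a^{*})>\sup_a\vp(a)-\delta$. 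The idea is that for each $t<T$ one can find $\QQ_t\in\sM$ under which $W_T$ is steered into an arbitrarily small (as $t\uparrow T$) neighbourhood of $a^{*}$, uniformly over $\sF_t$.

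To build $\QQ_t$ I would change measure in two steps. First pass to the measure $\QQ^{(0)}\in\sM$ of assumption (2); since $Z^{(0)}$ is a nonnegative local martingale with $\E[Z^{(0)}_T]=1$ it is a true $\PP$-martingale, and Girsanov shows that $\tilde\beta^{(0)}_u:=\beta_u+\int_0^u\ld_r\,dr$ and $\tW^{(0)}_u:=W_u+\int_0^u\nu^{(0)}_r\,dr$ are independent $\QQ^{(0)}$-Brownian motions, with $dS_u=S_u\sigma_u\,d\tilde\beta^{(0)}_u$ a $\QQ^{(0)}$-local martingale. Second, put $d\QQ_t/d\QQ^{(0)}:=\EN\!\big(-\theta\cdot\tW^{(0)}\big)_T$, where $\theta_r:=c_t\,\mathbf 1_{(t,T]}(r)$ is predictable and $c_t:=(W_t-a^{*})/(T-t)$ is $\sF_t$-measurable. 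Conditionally on $\sF_t$ the second density is an ordinary Gaussian exponential martingale, so $\EN(-\theta\cdot\tW^{(0)})$ is a genuine $\QQ^{(0)}$-martingale and $\QQ_t$ is a probability measure equivalent to $\PP$; moreover this density is a function of $\tW^{(0)}$ only, which is $\QQ^{(0)}$-independent of $\tilde\beta^{(0)}$, so $\tilde\beta^{(0)}$ is still a Brownian motion under $\QQ_t$, whence $S$ remains a $\QQ_t$-local martingale and $\QQ_t\in\sM$. A further application of Girsanov shows that $\hW_u:=\tW^{(0)}_u+\int_0^u\theta_r\,dr$ is a $\QQ_t$-Brownian motion, and unravelling the definitions yields the key identity
\[
W_T=a^{*}+(\hW_T-\hW_t)-\int_t^T\nu^{(0)}_r\,dr .
\]

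The remaining estimate is routine. Let $L$ be the Lipschitz constant of $\vp$ and set $\rho(t):=\int_t^T f(r)\,dr$, so that assumption (1) gives $\big|\int_t^T\nu^{(0)}_r\,dr\big|\le\rho(t)$ while $\E^{\QQ_t}\big[|\hW_T-\hW_t|\,\big|\,\sF_t\big]\le\sqrt{T-t}$. Hence
\[
\E^{\QQ_t}[\vp(W_T)\,|\,\sF_t]\ \ge\ \vp(a^{*})-L\,\E^{\QQ_t}\big[|W_T-a^{*}|\,\big|\,\sF_t\big]\ \ge\ \sup_a\vp(a)-\delta-L\big(\rho(t)+\sqrt{T-t}\,\big).
\]
Because $f\in\lone([0,T])$ forces $\rho(t)\to0$ as $t\uparrow T$, there is a deterministic $t_\delta<T$ with $L(\rho(t)+\sqrt{T-t})<\delta$ for all $t\in(t_\delta,T)$, and since $\QQ_t\in\sM$ this establishes the displayed lower bound above. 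Letting $t\uparrow T$ and then $\delta\downarrow0$ and combining with the trivial upper bound proves \eqref{conclusion}; here the limit is understood along sequences $t_n\uparrow T$ (which is all that is used in Example \ref{example3}), and a genuine left limit follows by passing to the c\`adl\`ag supermartingale modification of $t\mapsto\esssup_{\Q\in\sM}\E^{\Q}[\vp(W_T)\,|\,\sF_t]$, which the above squeezes between $\sup_a\vp(a)-2\delta$ and $\sup_a\vp(a)$ on a left-neighbourhood of $T$.

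The step I expect to be the main obstacle is the construction of $\QQ_t$ and the verification that $\QQ_t\in\sM$ --- this is exactly where assumptions (1) and (2) are used. Assumption (2) supplies a reference martingale measure whose $W$-drift is absolutely continuous with density $\nu^{(0)}$, and assumption (1) makes the residual drift $\int_t^T\nu^{(0)}$ deterministically (hence uniformly in $\omega$) small near $T$; superimposing the $\sF_t$-measurable linear steering drift $c_t$ on $[t,T]$ then places $W_T$ within $O(\rho(t)+\sqrt{T-t})$ of the near-maximizer $a^{*}$, and the whole construction stays inside $\sM$ only because the steering affects the $W$-coordinate alone, which is $\QQ^{(0)}$-independent of the $\beta$-coordinate driving $S$.
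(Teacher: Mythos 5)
Your proof is correct and follows essentially the same strategy as the paper's: pass to $\QQ^{(0)}$, superimpose an $\sF_t$-measurable linear drift on the $W$-coordinate alone (which leaves $S$ a local martingale because the $\beta$- and $W$-parts stay independent), and use assumption (1) to make the residual $\int_t^T\nu^{(0)}$ uniformly small. The only cosmetic difference is that the paper truncates the steering drift, defining $\delta^{(a)}_r:=\tfrac{1}{T-t}(W_t\,\mathbf 1_{\{|W_t|\le 1/(T-t)\}}-a)\,\mathbf 1_{\{r\ge t\}}$ so that it is bounded and the Novikov/boundedness criterion applies directly, whereas you keep the unbounded $c_t=(W_t-a^{*})/(T-t)$ and justify the density being a true martingale via the conditional Gaussian computation; both devices work, and the rest of the estimate (Lipschitz bound plus $\rho(t)+O(\sqrt{T-t})\to0$) is identical.
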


\proof
For a bounded and predictable process $\delta$ we define the process
$Z^{(\delta)}$ by
$$
dZ^{(\delta)}_t := - Z^{(\delta)}_t\big(\lambda_t\, d\beta_t +
(\nu^{(0)}_t+\delta_t)\,
dW_t\big)\quad Z^{(\delta)}_0:=1.
$$
A simple calculation yields the following expression
\[ Z^{(\delta)}_T =  Z^{(0)}_T \EN(-\delta\cdot W^{(0)})_T,\]
where $W^{(0)}_t := W_t + \int_0^t\, \nu^{(0)}_u\, du$ is a
$\QQ^{(0)}$-Brownian motion. With $\EE^{(0)}$
denoting the expectation with respect to $\QQ^{(0)}$, we have
\[ \EE[ Z^{(\delta)}_T] = \EE^{(0)}[ \EN(-\delta\cdot W^{(0)})]=1,\]
where the last equality follows from the boundedness of $\delta$. Hence,
$Z^{(\delta)}$ is a (true) martingale and can be used as a density of a
probability measure $\QQ^{(\delta)}\in\sM$.

To proceed, we fix $t_0\in(0,T)$ and $a\in\R$ and  define
\begin{align*}
  \delta^{(a)}_t & := \oo{T-t_0} (W_{t_0}\inds{\abs{W_{t_0}}\leq 1/(T-t_0)} -a)
  \inds{t\geq t_0},\quad t\in [t_0,T], \\
  W^{(a)}_t &:= W_t + \int_0^t (\nu^{(0)}_u + \delta^{(a)}_u)\, du,\quad t\in [0,T].
\end{align*}
Then we have
   \begin{align*}
   W_T - a =
     W^{(a)}_{T}  -  W^{(a)}_{t_0} -  \int_{t_0}^T
     \nu^{(0)}_u\, du +  W_{t_0}\inds{\abs{W_{t_0}}> 1/(T-t_0)}.
   \end{align*}
  The process $W^{(a)}$ is a $\QQ^{(a)}$-Brownian motion, where $\QQ^{(a)}$
  is a short for $\QQ^{(\delta^{(a)})}$.
  Therefore, the bound
  $\abs{\nu^{(0)}}\leq f$ implies that
\begin{align*}
  \EE^{\QQ^{(a)}}\big[ \abs{ W_T - a} \big|\sF_{t_0}\big] \leq C(t_0)
\end{align*}
where
\[ C(t_0):=\sqrt{\tfrac{2(T-t_0)}{\pi}} + \int_{t_0}^T f_u\, du +
\abs{W_{t_0}}
  \inds{\abs{W_{t_0}}>1/(T-t_0)}. \]
With $L_{\vp}$ denoting the Lipschitz constant of $\vp$, we have
\[ \abs{ \EE^{\QQ^{(a)}}[ \vp(W_T) |\sF_{t_0}]  - \vp(a) } \leq
  L_{\vp}\, \EE^{\QQ^{(a)}}[ \abs{ W_T - a} |\sF_{t_0}] \leq L_{\vp} C(t_0).\]
Therefore,
    \begin{align*}
        \limsup_{t_0 \nearrow T} \esssup_{\QQ\in\sM} \EE^{\QQ}[ \vp(W_T) |
        \sF_{t_0}]
      &\geq \limsup_{t_0 \nearrow T}
      \EE^{\QQ^{(a)}}[ \vp(W_T) |\sF_{t_0}]\\ &  \geq \limsup_{t_0 \nearrow T}
      \Big(\vp(a) - L_{\vp} C(t_0)\Big) = \vp(a).
    \end{align*}
It remains to note that the left-hand side above does not depend on $a$
and that $\sup_{a} \vp(a)$ is a
trivial upper bound in \eqref{conclusion}.
\endproof

\begin{example}
\label{example3a} (Continuation of Example \ref{example3}) We continue
Example \ref{example3} by examining two cases in which Lemma
\ref{lem:lone} applies. In the first one we simply take $f:= 0$. That
can be done if and only if
the minimal martingale density $\EN(-\ld\cdot B)$ defines a
martingale which is the case in many popular models including the incomplete models developed in \cite{KimOmb96} and \cite{Kra05}.

In the second case, $\EN(-\ld\cdot B)$ is a strict local martingale
but NFLVR nevertheless still holds. A famous example of a model where this occurs is given in \cite{DS98b}. We present here a
time-changed version (using the standard logarithmic time transform $t\mapsto
-\log(1-t)$), as the original version in  \cite{DS98b} is defined on an infinite horizon.
In the notation of Example \ref{example3}, and with $T=1$,
we define the local martingales
$(\beta'_t)_{t\in [0,1)}$ and $(W'_t)_{t\in [0,1)}$ by
\[
\beta'_t := \int_0^t \oo{\sqrt{1-u}}\, d\beta_u \eand
W'_t := \int_0^t \oo{\sqrt{1-u}}\, dW_u, \quad t\in [0,1),\]
as well as the stopping times
\begin{align*}
\tau := \inf \{ t>0 : \sE(\beta') =1/2\} \eand
\sigma := \inf \{ t>0 :\sE(W') =2\}.
\end{align*}
With the processes
$(\ld_t)_{t\in [0,1]}$ and
$(\nu^{(0)}_t)_{t\in [0,1]}$ given by
\[ \ld_t :=
-\frac{1_{\sigma\land\tau}(t)}{\sqrt{1-t}},\quad \nu^{(0)}_t :=
-\frac{1_{\sigma\land\tau}(t)}{\sqrt{1-t}},
\]
it remains to apply the results of \cite{DS98b} to conclude that the NFLVR
condition is satisfied, but that the minimal martingale density
$\EN(-\ld\cdot \beta)$ is a strict local martingale. Our Lemma
\ref{lem:lone} applies because $\abs{\nu^{(0)}_t} \leq \oo{\sqrt{1-t}} \in
\lone([0,1])$.

We conclude the  by mentioning that Examples \ref{example3} and
\ref{example3a}, as well as Lemma \ref{lem:lone} will be used again in the
examples in Section \ref{sec:interval}.
 \QEDB

\end{example}

{The next example shows that it is quite easy to construct bounded payoffs $\psi$ which fail to be uniquely superreplicable.
\begin{example} We consider the following one period  model with three states:
$$
\Delta S := (1,0,-1)',\quad \psi := (-1,0,-1)'.
$$
The set of pairs $(x,\pi)$ for which $x+ \pi \Delta S\ge \psi$ is given by $x\ge 0$ and $\pi \in [-1-x,1+x]$. However, the corresponding set of gain outcomes $(x+\pi,x,x-\pi)'$ with $x\ge 0$ and $\pi \in [-1-x,1+x]$ does not contain a smallest element. Indeed, if $(a,b,c)$ is a smallest element, we would have $a\le -1, b\le0$, and $c\le -1$ but such an element $(a,b,c)$ is not the outcome of any gains process $x+ \pi \Delta S$ with $x\ge 0$ and $\pi \in [-1-x,1+x]$.
 \QEDB

\end{example}
}

The main technical result of this section is the following proposition:
\begin{proposition}
\label{pro:limit}
Under Assumption \ref{ass:U-prime}, suppose that
$-B$ and $-(B+\eps \vp)$ are uniquely superreplicable by $-\uB$
and $-(\ubev)$,   respectively,
for all $\eps>0$ in some neighborhood of $0$. Then, for each $\hmu\in\opt$,
  we have
 \begin{align}
 \label{equ:der-exp}
    \lim_{\eps \searrow 0} \oe \Bp{ \valu(B+\eps \vp) - \valu(B)} = \EE[\hY
   \vp] + \lim_{\eps \searrow 0} \oe \Bscl{ \hmu^s}{ \ubev - \uB},
 \end{align}
 where $\hY = \frac{d\hat{\mu}^r}{d\P}$.
\end{proposition}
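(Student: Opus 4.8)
The plan is to combine the linear characterization of Proposition~\ref{pro:var} with the extra structure coming from the replicable dominators $\uB$ and $\ubev$. First I would record a few identities. Write $\hY=d\hmu^r/d\P$ and $\hX=\hpist+B$. Putting together the representation~\eqref{equ:rep}, the characterization~\eqref{equ:hat-X}, the strong duality $\valu(B)=\valv(B)$ from Lemma~\ref{lem:minimizer}, and the Fenchel equality $V(\hY)=U(\hX)-\hX\hY$, one obtains $\scl{\hmu}{B}=\EE[\hY\hX]$; since $\scl{\hmu^s}{\hX}=0$ (the first part of Equation~(4.7) in~\cite{CSW01}), this yields $\scl{\hmu}{\hpist}=0$ and $\EE[\hY\hpist]=\scl{\hmu}{B}-\EE[\hY B]=\scl{\hmu^s}{B}$. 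Because $\hX\geq0$, the unique superreplicability of $-B$ by $-\uB$ forces $\hpist\geq-\uB$, hence $0\leq B-\uB\leq\hX$; combined with $\scl{\hmu^s}{\hX}=0$ and $\hmu^s\geq0$ this gives $\scl{\hmu^s}{B}=\scl{\hmu^s}{\uB}$ and, more generally, $\scl{\hmu^s}{f}=0$ for every $f$ with $0\leq f\leq c\hX$. Finally, from Remark~\ref{condB1}(2) a replicable $\psi=\psi_0+(\pi_\psi\cdot S)_T$ satisfies $\scl{\mu}{\psi}=\psi_0\,\mu(\Omega)$ for all $\mu\in\sD$, and $\uB_0:=\inf_{\QQ\in\sM}\EE^{\QQ}[B]\geq\essinf B>0$, $\ubev_0:=\inf_{\QQ\in\sM}\EE^{\QQ}[B+\eps\vp]>0$ for $\eps$ small, so $\uB=\uB_0+(\rho\cdot S)_T$ and $\ubev=\ubev_0+(\rho^\eps\cdot S)_T$ with $\rho,\rho^\eps\in\sA\cap(-\sA)$.

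For the upper bound: by Proposition~\ref{pro:var} the left side of~\eqref{equ:der-exp} equals $\EE[\hY\vp]+\sup_{\delta\in\Delta(\vp)}\EE[\hY\dest]$, and by~\eqref{equ:mon-eps} this supremum is the increasing limit, as $\eps\downto0$, of $\sup_{\delta\in\Delta^\eps(\vp)}\EE[\hY\dest]$. Fix a small $\eps>0$ and $\delta\in\Delta^\eps(\vp)$ and set $W:=((\hpi+\eps\delta)\cdot S)_T$, so that $\eps\,\EE[\hY\dest]=\EE[\hY W]-\EE[\hY\hpist]=\EE[\hY W]-\scl{\hmu^s}{\uB}$. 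The constraint $\hX+\eps(\dest+\vp)\geq0$ reads $W\geq-(B+\eps\vp)$, which unique superreplicability upgrades to $W+\ubev\geq0$. Since $W+\ubev-\ubev_0=((\hpi+\eps\delta+\rho^\eps)\cdot S)_T\in\sK$ and is bounded below, I would truncate at level $n$, use $\scl{\mu}{\cdot}\leq0$ on $\sC=(\sK-\lzer_+)\cap\linf$ (as in the proof of Lemma~\ref{lem:minimizer}) and pass to the limit to get $\EE[\hY(W+\ubev)]+\scl{\hmu^s}{W+\ubev}\leq\ubev_0\,\hmu(\Omega)$; with $\EE[\hY\ubev]=\ubev_0\hmu(\Omega)-\scl{\hmu^s}{\ubev}$ and $\scl{\hmu^s}{W+\ubev}\geq0$ this gives $\EE[\hY W]\leq\scl{\hmu^s}{\ubev}$, hence $\EE[\hY\dest]\leq\oe\scl{\hmu^s}{\ubev-\uB}$ for every $\delta\in\Delta^\eps(\vp)$.

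For the matching lower bound I would exhibit a single $\delta^\eps\in\Delta^\eps(\vp)$ attaining this value. Take $\pi^\eps:=\tfrac{\ubev_0}{\uB_0}(\hpi+\rho)-\rho^\eps\in\sA$ and $\delta^\eps:=\oe(\pi^\eps-\hpi)$; then $\hpi+\eps\delta^\eps=\pi^\eps$, and a direct computation gives $((\hpi+\eps\delta^\eps)\cdot S)_T+\ubev=\tfrac{\ubev_0}{\uB_0}(\hpist+\uB)\geq0$, the last inequality since $\hpist+\uB\geq0$; together with $\ubev\leq B+\eps\vp$ this shows $\delta^\eps\in\Delta^\eps(\vp)$. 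Writing $\hpist+\uB=\hX-(B-\uB)$ and using $\scl{\hmu^s}{\hX}=\scl{\hmu^s}{B-\uB}=0$, one gets $\scl{\hmu^s}{\hpist+\uB}=0$ and $\EE[\hY(\hpist+\uB)]=\scl{\hmu}{\hpist+\uB}=\uB_0\hmu(\Omega)$; substituting into $\eps\,\EE[\hY(\delta^\eps\cdot S)_T]=\EE[\hY((\hpi+\eps\delta^\eps)\cdot S)_T]-\EE[\hY\hpist]$ yields $\EE[\hY(\delta^\eps\cdot S)_T]=\oe\scl{\hmu^s}{\ubev-\uB}$, matching the upper bound. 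Hence $\sup_{\delta\in\Delta^\eps(\vp)}\EE[\hY\dest]=\oe\scl{\hmu^s}{\ubev-\uB}$ for all sufficiently small $\eps>0$; letting $\eps\downto0$ — the left side is monotone by~\eqref{equ:mon-eps} and the left side of~\eqref{equ:der-exp} is finite by concavity of $\valu$, so both limits exist — and adding $\EE[\hY\vp]$ gives~\eqref{equ:der-exp}.

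The step that will require the most care is the bound $\EE[\hY(W+\ubev)]\leq\ubev_0\hmu(\Omega)$: the random variable $W+\ubev-\ubev_0$ belongs to $\sK$ but is only bounded below, so the truncation argument must handle the regular part (monotone convergence) and the singular part (whose value on a nonnegative random variable is a supremum of truncations) separately; the same care is needed to give meaning to $\scl{\hmu}{\hpist}=0$ and $\EE[\hY\hpist]=\scl{\hmu^s}{B}$, since $\hpist$ is unbounded above.
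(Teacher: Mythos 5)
Your proof is correct and takes essentially the same route as the paper. The key construction — the competitor $\delta^\eps$ built from the replicable decompositions of $\uB$ and $\ubev$ — is identical to the paper's (after unwinding the normalization $\ubev_0=\eps x_\eps$, $\rho^\eps=\eps\pi_\eps$, your $\delta^\eps$ coincides literally with the paper's \eqref{deltaeps}), and the same ingredients from \cite{CSW01} (both parts of Equation (4.7)), Remark \ref{condB1}, and the monotone limit via \eqref{equ:mon-eps} appear in both. The only genuine difference is presentational: for the upper bound you package the argument around $W+\ubev\geq 0$ and truncate, whereas the paper writes out \eqref{equ:poz} and passes from $\hY$ to $\hmu$ directly; your version is a bit more explicit about the integrability/finiteness issues for the unbounded random variables $\hpist$ and $W$, which the paper's proof treats somewhat informally. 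Your preliminary derivation of $\scl{\hmu}{\hpist}=0$ via strong duality and the Fenchel equality is also a valid (if slightly longer) substitute for the paper's direct citation of the second part of (4.7) in \cite{CSW01}.
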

\begin{proof}
For $\eps>0$ we let $x_{\eps}$, $x_0\in\R$ and $\pi_{\eps}, \pi_0 \in \sA\cap
(-\sA)$ be such that
\[ \ubev= \eps x_{\eps} + \eps \st{\pie} \eand \uB = x_0 + \st{\pi_0}.\]
Because $B$ is bounded away from zero and $\varphi\in \mathbb{L}^\infty(\P)$
we can consider $\eps>0$ so small that $x_{\eps},x_0>0$. For $\delta \in
\Delta^{\eps}(\vp)$ we have $\eps\delta + \hpi \in \sA$ and
\[\eps \dest + \hpist  \geq -\eps\vp - B.\]
Therefore, by the unique superreplicability of $\ubev$, we have
 \begin{align}
 \label{equ:poz}
0\le   x_{\eps} + \dest + \oe \hpist + \st{\pi_{\eps}} .
 \end{align}

Since $\uB$ uniquely superreplicates $B$ we have
$\uB + \hpist \geq 0$. Therefore, for any $\hmu\in\opt$, the first part of Equation (4.7) in \cite{CSW01} produces
 \begin{align}
 \label{equ:KL1}
 0 \leq  \scl{\hmu^s}{ \uB + \hpist} \leq
\scl{\hmu^s}{B+\hpist} = 0.
\end{align}
The second part of Equation (4.7) in \cite{CSW01} ensures $\scl{\hmu}{\hpist}=0$ and combining this with $ \scl{\hmu^s}{ \uB + \hpist} =0$ we see
\begin{align*}
\langle \hat{\mu}^r, \uB + (\hat{\pi}\cdot S)_T\rangle&=\langle \hat{\mu}, \uB + (\hat{\pi}\cdot S)_T\rangle\\&= \langle \hat{\mu},\uB\rangle.
\end{align*}
Because $\hat{Y} = \tfrac{d\hat{\mu}^r}{d\P}$ we obtain the representation
\begin{align}\label{KL2}
  \EE[ \hY \hpist]=\scl{\hmu^s}{\uB}.
  \end{align}

The property $\eps\delta + \hpi \in \sA$ produces $\scl{\hmu}{\eps\dest + \hpist}\leq 0$ and $\langle \hat{\mu},(\pi_\eps\cdot S)_T\rangle =0$, for each $\eps>0$. Therefore,
by  \eqref{equ:poz}, we find
\begin{align}
\label{equ:upper-m}
\begin{split}
&\EE[ \hY(x_{\eps}+ \st{\pi_{\eps}} + \st{\delta} + \oe \hpist )]
\\
& \leq
\scl{\hmu}{x_{\eps}+ \st{\pi_{\eps}} + \st{\delta} + \oe \hpist }  \\&\leq
\scl{\hmu}{x_{\eps}}.
\end{split}
\end{align}
To show that the upper bound in \eqref{equ:upper-m} above is attained
we pick
\begin{align}\label{deltaeps} \dele = (\tfrac{x_{\eps} }{x_0} - \oe) \hpi
+ \tfrac{x_{\eps}}{x_0} \pi_0 -
\pi_{\eps}.
\end{align}
Because $x_\eps>0$ and $x_0>0$ one can check that $\dele \in
\Delta^{\eps}(\vp)$. Then we have
 \begin{align*}
   \EE[ \hY(&x_{\eps}+ \st{\pi_{\eps}} + \sst{\dele} + \oe \hpist )] \\&=
   \EE[ \hY(x_{\eps}+ \tfrac{x_{\eps}}{x_0}\st{( \hpi + \pi_0)})]\\
      & = \tfrac{x_{\eps}}{x_0} \EE[ \hY( \uB + \st{\hpi})] \\ & =
   \tfrac{x_{\eps}}{x_0} \scl{\hmu}{\uB + \st{\hpi}}\\ & =
   \scl{\hmu}{x_{\eps}},
 \end{align*}
 where the last equality follows from  $\langle \hat{\mu},\uB\rangle=\langle
 \hat{\mu},x_0\rangle$ and $\langle \hat{\mu},(\hat{\pi}\cdot S)_T\rangle=0$.
Therefore, $\dele$ indeed attains the upper bound of \eqref{equ:upper-m},
and, so,
 \begin{align*}
    \sup_{\delta \in \Delta^{\eps}(\vp)} &\EE[ \hY( \dest + \vp)]
   = \\ &=
   \scl{\hmu}{x_{\eps}} - \oe\EE[\hY(\ubev)] - \oe \EE[ \hY
   \hpist] + \EE[ \hY \vp] \\
   &= \scl{\hmu^r}{\vp} +
   \oe\scl{\hmu^s}{\ubev -  \uB}.
 \end{align*}

The sets $\Delta^{\eps}(\vp)$ monotonically increase to $\Delta(\vp)$ as $\epsilon  \searrow 0$. This implies
\begin{align}
\label{KL1}
    \sup_{\delta \in \Delta^{\eps}(\vp)} \EE[ \hY( \dest + \vp)] \nearrow  \sup_{\delta \in \Delta(\vp)} \EE[ \hY( \dest + \vp)]
\end{align}
as $\eps  \searrow 0$. Because the left-hand-side of \eqref{KL1} equals  $\scl{\hmu^r}{\vp} + \oe\scl{\hmu^s}{\ubev -  \uB}$, we see that \eqref{equ:der-exp} holds by Proposition
\ref{pro:var}.
\end{proof}
A first consequence of Proposition \ref{pro:limit} is that the situation
encountered in Example \ref{CSW} cannot happen if $-B$ if uniquely superreplicable.
Indeed, the primal value function $\valu$ is then smooth in all replicable directions:
\begin{corollary}
\label{cor:42F2}
Suppose that Assumption \ref{ass:U-prime} holds, that $-B$ is uniquely
superreplicable and that $\vp$ is replicable.
Then the following two-sided limit exists
 \begin{align}\label{limit_replicable}
    \lim_{\eps \to 0} \oe \Bp{ \valu(B+\eps \vp) - \valu(B)} =
    \scl{\hmu}{\vp} \text{ for each } \hmu \in \opt.
 \end{align}
In particular, there exists a constant $y_B>0$ such that
\[ y_B = \hmu(\Omega)\text{ for each } \hmu \in \opt.\]
\end{corollary}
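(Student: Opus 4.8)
The plan is to deduce both assertions from two applications of Proposition~\ref{pro:limit}, one in the direction $\vp$ and one in the direction $-\vp$. The first task is to verify the hypotheses of Proposition~\ref{pro:limit} on a two-sided neighborhood of $\eps=0$ and to identify the superreplicators explicitly. Writing $\vp=\vp_0+\st{\pi_\vp}$ with $\pi_\vp\in\sA\cap(-\sA)$, one checks that the set of replicable claims is a linear subspace of $\linf$ (the cone $\sA\cap(-\sA)$ is symmetric and closed under positive scaling), so $\eps\vp$ is replicable for every $\eps\in\R$. Since $-B$ is uniquely superreplicable, say by the replicable random variable $-\uB$, and unique superreplicability is invariant under translation by replicable claims (Remark~\ref{condB1}(4)), the random variable $-(B+\eps\vp)=-B+(-\eps\vp)$ is uniquely superreplicable by $-\uB+(-\eps\vp)=-(\uB+\eps\vp)$ for every $\eps\in\R$; by uniqueness of the superreplicating random variable (Remark~\ref{condB1}(3)) this pins down $\ubev=\uB+\eps\vp$, so that $\ubev-\uB=\eps\vp$.

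With this in hand Proposition~\ref{pro:limit} applies and gives, for each $\hmu\in\opt$,
\[
\lim_{\eps\searrow 0}\oe\bp{\valu(B+\eps\vp)-\valu(B)}
= \EE[\hY\vp] + \lim_{\eps\searrow 0}\oe\Bscl{\hmu^s}{\ubev-\uB}
= \EE[\hY\vp] + \scl{\hmu^s}{\vp}
= \scl{\hmu}{\vp},
\]
where the singular-part term is already $\eps$-independent because $\ubev-\uB=\eps\vp$, and the last equality uses $\hY=\tRN{\hmu^r}{\P}$ together with the additivity of the dual pairing in $\mu$. Running the same argument with $-\vp$ in place of $\vp$ (again replicable) yields $\lim_{\eps\searrow 0}\oe\bp{\valu(B-\eps\vp)-\valu(B)}=-\scl{\hmu}{\vp}$; substituting $\eps=-\eta$ shows this is exactly the left-hand limit of $\eta\mapsto\tfrac1\eta\bp{\valu(B+\eta\vp)-\valu(B)}$ at $0$. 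Hence the left and right limits coincide, the two-sided limit in \eqref{limit_replicable} exists, and it equals $\scl{\hmu}{\vp}$ for every $\hmu\in\opt$.

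Finally, since the left-hand side of \eqref{limit_replicable} does not depend on $\hmu$, the number $\scl{\hmu}{\vp}$ must be the same for all $\hmu\in\opt$. Applying this with the trivially replicable constant payoff $\vp\equiv 1$ gives that $\hmu(\Omega)=\scl{\hmu}{1}$ takes a common value $y_B$ over $\hmu\in\opt$, and $y_B>0$ because Lemma~\ref{lem:minimizer} guarantees $\hmu(\Omega)>0$ for each $\hmu\in\opt$. I expect the only point needing care to be the bookkeeping in the first step: correctly tracking, via translation-invariance, that $-(B+\eps\vp)$ is uniquely superreplicated by $-(\uB+\eps\vp)$, which is what makes the singular term in Proposition~\ref{pro:limit} collapse to the $\eps$-independent quantity $\scl{\hmu^s}{\vp}$; everything else is routine.
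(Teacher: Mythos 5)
Your proof is correct and follows essentially the same route as the paper's: identify $\ubev=\uB+\eps\vp$ for replicable $\vp$, apply Proposition~\ref{pro:limit} to both $\vp$ and $-\vp$, and set $\vp\equiv1$ for the last claim. You have merely spelled out the bookkeeping (translation-invariance from Remark~\ref{condB1}(4), uniqueness of the superreplicator, and the observation that $\eps\vp$ stays replicable) that the paper leaves implicit.
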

\begin{proof}
First observe that for replicable $\vp$ we have
$\ubev = \uB + \eps \vp$.
Then we can apply Proposition \ref{pro:limit} to both $\vp$ and $-\vp$ to see \eqref{limit_replicable}. The last claim follows by setting $\varphi=1$.
\end{proof}
Now that we know that all dual minimizers $\hmu\in\opt$ have the same
total mass, the following result follows directly from Corollary
\ref{thm:Davis} above.
\begin{corollary}
Suppose that Assumption \ref{ass:U-prime} holds and that $-B$ is uniquely
superreplicable. Then
each replicable $\vp\in\mathbb{L}^\infty(\P)$ has the unique $B$-condi-tionally Davis price $\langle \hat{\mu},\vp\rangle/\hat{\mu}(\Omega)$.
\end{corollary}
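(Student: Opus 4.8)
The plan is to obtain this corollary as a direct assembly of the dual characterization in Theorem~\ref{thm:Davis} with the structural information supplied by Corollary~\ref{cor:42F2}; no genuinely new argument is required, so I expect the only ``work'' to be bookkeeping.

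First I would recall from Theorem~\ref{thm:Davis} that
\[
 P(\vp|B) = \bsets{ \scl{\QQ}{\vp} }{ \QQ \in \optz },
 \quad\text{where}\quad
 \optz = \bsets{ \oo{\mu(\Omega)}\mu }{ \mu\in\opt }.
\]
Hence, to prove that $P(\vp|B)$ is a singleton it is enough to show that the number $\scl{\mu}{\vp}/\mu(\Omega)$ is the same for every $\mu\in\opt$.

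Next I would invoke Corollary~\ref{cor:42F2}. Since Assumption~\ref{ass:U-prime} holds, $-B$ is uniquely superreplicable, and $\vp$ is replicable, that corollary gives two facts: (i) there is a constant $y_B>0$ with $\mu(\Omega)=y_B$ for all $\mu\in\opt$; and (ii) the two-sided limit
\[
 L := \lim_{\eps\to 0}\oe\Bp{ \valu(B+\eps \vp) - \valu(B) } = \scl{\hmu}{\vp}
 \quad\text{for each } \hmu\in\opt
\]
exists. Because the left-hand side of this identity does not depend on the chosen dual minimizer, the value $\scl{\hmu}{\vp}=L$ is the same for all $\hmu\in\opt$.

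Combining (i) and (ii): for any $\QQ = \oo{\mu(\Omega)}\mu\in\optz$ we have $\scl{\QQ}{\vp} = \oo{y_B}\scl{\mu}{\vp} = L/y_B$, independent of $\QQ$. Therefore $P(\vp|B)$ is the singleton $\set{L/y_B} = \bset{\scl{\hmu}{\vp}/\hmu(\Omega)}$ for any (equivalently every) $\hmu\in\opt$, which is precisely the asserted price. The only substantive input is the constancy of $\mu(\Omega)$ and of $\scl{\mu}{\vp}$ over $\opt$, and that is already handled inside Corollary~\ref{cor:42F2} (which rests on Proposition~\ref{pro:limit}); so there is no remaining obstacle at this stage.
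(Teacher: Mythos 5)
Your proposal is correct and follows essentially the same route as the paper: it combines Theorem~\ref{thm:Davis}'s dual description $P(\vp|B)=\sets{\scl{\QQ}{\vp}}{\QQ\in\optz}$ with the constancy of $\hmu(\Omega)$ over $\opt$ from Corollary~\ref{cor:42F2}. One small remark: you extract the constancy of $\scl{\hmu}{\vp}$ over $\opt$ from the two-sided limit in Corollary~\ref{cor:42F2}(ii), which is fine, but it also follows more directly from Remark~\ref{condB1}(2) (replicable $\vp=\vp_0+(\pi_\vp\cdot S)_T$ gives $\scl{\mu}{(\pi_\vp\cdot S)_T}=0$ for every $\mu\in\sD$), so $\scl{\mu}{\vp}=\vp_0\,y_B$ once the masses agree; either way the conclusion is the same.
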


When $B$ is a constant (and more generally, when $B$ is replicable), it is
known that the product of the primal and dual optimizers is a martingale (see,
e.g., the discussion on p.911-2 in \cite{KS99}). When the dual optimizer is
only a finitely-additive measure, the following corollary may serve as a
surrogate. The result relies on \cite{KarZit03} where a positive
supermartingale deflator $\{\hat{Y}_t\}_{t\in[0,T]}$ is constructed from $\hat{\mu}\in
\hat{\sD}(B)$ (see Equation 2.5 in  \cite{KarZit03}).

\begin{corollary} Suppose that Assumption \ref{ass:U-prime} holds, that $-B$ is uniquely superreplicable by $-\uB$, and write
$$
\uB = x_0 + (\pi_0\cdot S)_T.
$$
Then the process
$$
\hat{Y}_t \Big(x_0 + \big((\pi_0+\hat{\pi})\cdot S\big)_t\Big),\quad t\in[0,T],
$$
is a nonnegative martingale where $\{\hat{Y}_t\}_{t\in[0,T]}$ is the supermartingale deflator corresponding to $\hat{\mu}\in \hat{\sD}(B)$.
\end{corollary}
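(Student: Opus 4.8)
The plan is to realize the process $N_t := \hY_t\,\bigl(x_0 + ((\pi_0+\hpi)\cdot S)_t\bigr)$ as $\hY$ times a nonnegative self-financing wealth process, use the supermartingale-deflator property of $\hY$ to conclude that $N$ is a nonnegative supermartingale, and finally upgrade this to the martingale property by checking that $N$ has the same expectation at $t=0$ and at $t=T$ --- a computation that is essentially already contained in the proof of Proposition~\ref{pro:limit}.

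First I would fix $\hmu\in\opt$, recall the primal optimizer $\hX = (\hpi\cdot S)_T + B$ of~\eqref{equ:hat-X} (so that $\hY_T = \tRN{\hmu^r}{\PP} = U'(\hX)$, and so that the deflator of~\cite{KarZit03} is normalized by $\hY_0 = \hmu(\Omega) =: y_B$, the latter being constant over $\opt$ by Corollary~\ref{cor:42F2}), and put $V_t := x_0 + ((\pi_0+\hpi)\cdot S)_t$, so that $N_t = \hY_t V_t$ is exactly the process in the statement while $V_T = \uB + (\hpi\cdot S)_T$. Since $\pi_0\in\sA\cap(-\sA)$ makes $(\pi_0\cdot S)$ bounded (Remark~\ref{condB1}(2)) and $(\hpi\cdot S)$ is bounded from below, $(\pi_0+\hpi)\cdot S$ is bounded from below and $\pi_0+\hpi\in\sA$. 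The first substantive step is to show $V\ge 0$ on all of $[0,T]$: from $\hX\ge 0$ one gets $(\hpi\cdot S)_T\ge -B$, so the assumed unique superreplicability of $-B$ by $-\uB$ (Definition~\ref{def:minimal}(2), applied with $x=0$ and $\pi=\hpi$) gives $(\hpi\cdot S)_T\ge -\uB$, i.e.\ $V_T\ge 0$ and hence $((\pi_0+\hpi)\cdot S)_T\ge -x_0$; since $(\pi_0+\hpi)\cdot S$ is a $\QQ$-local martingale bounded from below, it is a $\QQ$-supermartingale for any $\QQ\in\sM$ (nonempty by Assumption~\ref{ass:NFLVR}), whence $((\pi_0+\hpi)\cdot S)_t\ge \EE^{\QQ}[((\pi_0+\hpi)\cdot S)_T\mid\sF_t]\ge -x_0$, i.e.\ $V_t\ge0$ for every $t$.

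With $V$ a nonnegative wealth process carrying an admissible integrand and starting from $x_0>0$, the supermartingale-deflator property of $\hY$ (Equation~2.5 in~\cite{KarZit03}) shows that $N=\hY V$ is a nonnegative c\`adl\`ag supermartingale, with $N_0 = x_0\hY_0 = x_0 y_B$. It then remains to verify $\EE[N_T] = x_0 y_B$ and to invoke the elementary fact that a supermartingale on $[0,T]$ with $\EE[N_0]=\EE[N_T]$ has constant expectation, hence satisfies $N_s - \EE[N_t\mid\sF_s]\ge 0$ with zero mean for all $s\le t$, hence is a martingale. For the value $\EE[N_T]$ I would reuse the chain of identities already appearing in the proof of Proposition~\ref{pro:limit}: $\EE[N_T] = \scl{\hmu^r}{\uB + (\hpi\cdot S)_T}$; the singular part $\scl{\hmu^s}{\uB + (\hpi\cdot S)_T}$ vanishes by~\eqref{equ:KL1}, so this equals $\scl{\hmu}{\uB + (\hpi\cdot S)_T} = \scl{\hmu}{\uB}$ since $\scl{\hmu}{(\hpi\cdot S)_T}=0$ (second part of Equation~(4.7) in~\cite{CSW01}); and finally $\scl{\hmu}{\uB} = x_0\scl{\hmu}{1} = x_0 y_B$ because $\scl{\hmu}{(\pi_0\cdot S)_T}=0$ (Remark~\ref{condB1}(2)).

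I expect the only delicate points to be bookkeeping ones. One is making sure the \cite{KarZit03} deflator is taken with the normalization $\hY_0 = \hmu(\Omega)$, so that $N_0$ is \emph{equal} to $x_0 y_B$ rather than merely bounded below by it through the supermartingale inequality; the other is keeping track that $(\pi_0+\hpi)\cdot S$ is genuinely bounded from below, which is what makes both $\pi_0+\hpi\in\sA$ and the $\QQ$-supermartingale comparison giving $V\ge 0$ work. The expectation computation at $T$ and the supermartingale-to-martingale upgrade are routine --- the former essentially duplicating a step in the proof of Proposition~\ref{pro:limit}.
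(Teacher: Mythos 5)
Your proof is correct and mirrors the paper's argument: cite \cite{KarZit03} for the nonnegative supermartingale property, then compute $\E[N_T] = x_0\hmu(\Omega)$ via \eqref{equ:KL1} and the orthogonality identities $\scl{\hmu}{(\hpi\cdot S)_T}=0$ and $\scl{\hmu}{(\pi_0\cdot S)_T}=0$, and conclude constant expectation. One small point: the paper records only $\hY_0\le\hmu(\Omega)$ (not equality) from \cite{KarZit03}, so the normalization you invoke is in fact a \emph{consequence} rather than a given; but since the supermartingale inequality $\E[N_T]\le x_0\hY_0$ then forces equality throughout the chain, this does not create a gap.
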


\begin{proof}
From Theorem 2.10 in \cite{KarZit03} we know that the process in question
is a nonnegative supermartingale. Furthermore, also from \cite{KarZit03},
we have $\hat{Y}_T = \frac{d\mu^r}{d\P}$ and $\hat{Y}_0 \le \hat{\mu}(\Omega)$.
To obtain the  constant expectation property we use \eqref{equ:KL1} to get
\begin{align*}
\langle\hat{\mu},x_0\rangle &= \langle \hat{\mu}, (\hat{\pi}\cdot S)_T+ \uB\rangle
= \langle \hat{\mu}^r, (\hat{\pi}\cdot S)_T+ \uB\rangle\\
&= \E\left[\hat{Y}_T\big( (\hat{\pi}\cdot S)_T+ \uB\big)\right] \le \hat{Y}_0 x_0
\le  \hat{\mu}(\Omega) x_0,
\end{align*}
and the claimed martingale property follows.
\end{proof}

\section{The interval of Conditional Davis prices}
\label{sec:interval}

This section  closes the loop and gives an explicit expression
for the interval of conditional Davis prices under the assumption of unique superreplicability. We start
with a standard
characterization of conditional Davis prices in terms of perturbed value functions.
Given $B \in \linfpp$ and $\vp\in\linf$ (we do not impose any unique
superreplicability assumption on either, yet). We let the function $u:\R^2 \to [-\infty,\infty)$ be defined by
\[ u(\eps,x) := \valu(B+x+\eps  \vp),\]
and let its supergradient at $(0,0)$ be denoted by $\partial u(0,0)$.
\begin{lemma}
\label{lem:subgrad} Let  $\vp\in\linf$. If $(0,0)\in \partial u(0,0)$, then $P( \vp|B)=\R$. Otherwise,
\[ P( \vp|B) = \sets{ \delta/y}{ y \ne 0, (\delta,y) \in \partial u (0,0) }.\]
\end{lemma}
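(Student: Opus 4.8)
The plan is to translate the defining inequality \eqref{e.davis} for a conditional Davis price into a statement about the supergradient of the concave function $u$ at the origin, using standard convex-analytic facts about one-variable restrictions of concave functions, exactly in the spirit of the proofs of Lemma \ref{lem:sub} and Lemma \ref{l.projection} above. Recall from Remark \ref{rem:sI} that $\valu$ is finite-valued in an $\linf$-ball around $B$, so $(0,0)$ is an interior point of $\dom u$; hence $\partial u(0,0)$ is a nonempty compact convex subset of $\R^2$, and $u$ admits (two-sided) directional derivatives in every direction at $(0,0)$, with $u'((0,0);v) = \min_{z\in\partial u(0,0)} \scl{z}{v}$.

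First I would record the reformulation: by \eqref{e.davis}, $p\in P(\vp|B)$ if and only if $\valu(B+\eps(\vp-p))\le\valu(B)$ for all $\eps\in\R$, i.e. $u(\eps,-\eps p)\le u(0,0)$ for all $\eps\in\R$. Set $g(\eps):=u(\eps,-\eps p)=u(\eps\cdot(1,-p))$; this is a finite concave function on a neighborhood of $0\in\R$, and the condition is precisely that $g$ attains its maximum at $\eps=0$, which (since $g$ is concave) is equivalent to $0\in\partial g(0)$. The chain rule for supergradients of concave functions under the affine substitution $\eps\mapsto\eps(1,-p)$ gives, as in the proof of Lemma \ref{lem:sub},
\[
\partial g(0)=\{\scl{(\delta,y)}{(1,-p)}:(\delta,y)\in\partial u(0,0)\}=\{\delta-yp:(\delta,y)\in\partial u(0,0)\}.
\]
Therefore $p\in P(\vp|B)$ if and only if there exists $(\delta,y)\in\partial u(0,0)$ with $\delta-yp=0$.

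From this equivalence both assertions follow. If $(0,0)\in\partial u(0,0)$, then for every $p\in\R$ the choice $(\delta,y)=(0,0)$ satisfies $\delta-yp=0$, so $P(\vp|B)=\R$. Conversely, if $(0,0)\notin\partial u(0,0)$, then any $(\delta,y)\in\partial u(0,0)$ witnessing $p\in P(\vp|B)$ must have $y\ne0$ (otherwise $\delta=yp=0$, contradicting $(0,0)\notin\partial u(0,0)$), whence $p=\delta/y$; and conversely any $(\delta,y)\in\partial u(0,0)$ with $y\ne0$ yields $p:=\delta/y\in P(\vp|B)$ by the equivalence. This gives exactly $P(\vp|B)=\sets{\delta/y}{y\ne0,\ (\delta,y)\in\partial u(0,0)}$.

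The only genuine point requiring care — and the one I would present most carefully — is the chain rule $\partial g(0)=\{\delta-yp:(\delta,y)\in\partial u(0,0)\}$ for the affine restriction of a concave function at an interior point of its domain; this is a standard fact (it follows, e.g., from the corresponding directional-derivative identity together with the fact that a compact convex set in $\R$ or $\R^2$ is determined by its support function), and it is already used verbatim in the proofs of Lemma \ref{lem:sub} and Lemma \ref{l.projection}, so I would simply invoke it the same way. Everything else is the routine concavity argument "$g$ concave on a neighborhood of $0$ is maximized at $0$ iff $0\in\partial g(0)$,'' plus the elementary case split on whether $(0,0)\in\partial u(0,0)$.
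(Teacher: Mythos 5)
Your proof is correct and is essentially the same argument as the paper's: where you pass to the one-variable restriction $g(\eps)=u(\eps,-\eps p)$ and compute $\partial g(0)$ via the affine chain rule, the paper phrases the same computation in terms of the two one-sided directional derivatives of $u$ at $(0,0)$ in the directions $(1,-p)$ and $(-1,p)$, then uses convexity of $\partial u(0,0)$ to produce a point with $\delta-yp=0$. The two phrasings are equivalent (the directional derivative is exactly the support function of the supergradient), and your case split on $(0,0)\in\partial u(0,0)$ matches the lemma's statement; so there is nothing to add beyond noting the routine equivalence.
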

\begin{proof}
Thanks to the assumption that $B\in\linfpp$, $u$ is concave and finite-valued in some
neighborhood of $(0,0)$. Moreover Definition \ref{def:Davis} translates
into the following statement:
\[ p \in P( \vp|B) \text{ if and only if } u(0,0) \geq u(\eps, - \eps p)
\eforall \eps. \]  By concavity, this is
equivalent to the nonpositivity of the directional derivative of $u$, at
$(0,0)$ in the directions $(-1,p)$ and $(1,-p)$, i.e.
\[ \inf_{(\delta,y) \in \partial u (0,0)} -\delta + p y \leq 0\  \eand
\inf_{(\delta,y) \in \partial u (0,0)}  \delta  - py \leq 0.\]
By the convexity of the supergradient, this is equivalent to existence of a
pair $(\delta,y) \in \partial u(0,0)$ such that $p y = \delta $.
\end{proof}
\begin{theorem}
\label{thm:derb}
Suppose that Assumption \ref{ass:U-prime} holds and that
$-B$ and $-(B+\eps \vp)$ are uniquely superreplicable by $-\uB$
and $-(\ubev)$,   respectively, for all $\eps$ in some neighborhood of $0$.
The interval of $B$-Davis prices of $\vp$ is given by
 \begin{align}
 \label{equ:der-2}
   \oo{y_B} \EE[ \hY \vp] + \oo{y_B}\Big[ \lim_{\eps \searrow 0} \oe \Bscl{ \hmu^s}{ \ubev - \uB},
   \lim_{\eps \nearrow 0} \oe \Bscl{ \hmu^s}{ \ubev - \uB}
   \Big],
 \end{align}
 where $y_B$ is the common value of $\hmu(\Omega)$ for all $\hmu\in\opt$.
\end{theorem}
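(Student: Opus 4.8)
The plan is to argue directly from Definition \ref{def:Davis}. Fix $p\in\R$ and set $\tilde u_p(\eps):=\valu\big(B+\eps(\vp-p)\big)$; since $B\in\linfpp$ and $\vp-p\in\linf$, this is a concave function of $\eps$ that is real-valued on a neighborhood of $\eps=0$, so its one-sided derivatives $\tilde u_p'(0\pm)$ at $0$ are finite. By Definition \ref{def:Davis}, $p\in P(\vp\,|\,B)$ if and only if $\tilde u_p(\eps)\le\tilde u_p(0)$ for all $\eps\in\R$, which by concavity of $\tilde u_p$ is equivalent to $\tilde u_p'(0+)\le 0\le\tilde u_p'(0-)$. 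The heart of the proof is therefore to evaluate $\tilde u_p'(0\pm)$ via Proposition \ref{pro:limit}; solving the two resulting scalar inequalities for $p$ then gives \eqref{equ:der-2}. (One could equally start from Lemma \ref{lem:subgrad} and compute $\partial u(0,0)$, but the one-dimensional reduction above is more economical.)

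\textbf{The right derivative.} I would first observe that $-\big(B+\eps(\vp-p)\big)=-(B+\eps\vp)+\eps p$ is obtained from $-(B+\eps\vp)$ by adding the replicable constant $\eps p$; hence, by the translation-invariance of unique superreplicability (Remark \ref{condB1}(4)) together with the uniqueness in Remark \ref{condB1}(3), the hypothesis on $-(B+\eps\vp)$ gives that $-\big(B+\eps(\vp-p)\big)$ is uniquely superreplicable by $-(\ubev-\eps p)$ for all small $\eps$. Proposition \ref{pro:limit}, applied with $\vp-p$ in place of $\vp$, then yields for any $\hmu\in\opt$
\[ \tilde u_p'(0+)=\EE[\hY(\vp-p)]+\lim_{\eps\searrow 0}\tfrac1\eps\bscl{\hmu^s}{(\ubev-\eps p)-\uB}. \]
Writing $F(\eps):=\bscl{\hmu^s}{\ubev-\uB}$ (so $F(0)=0$ and $\lim_{\eps\searrow0}F(\eps)/\eps$ exists by Proposition \ref{pro:limit} applied to $\vp$ itself), expanding the pairing, and using $\EE[\hY]=\hmu^r(\Omega)$ and $\scl{\hmu^s}{1}=\hmu^s(\Omega)$ — so that $\EE[\hY]+\scl{\hmu^s}{1}=\hmu(\Omega)=y_B$ by Corollary \ref{cor:42F2} — this collapses to
\[ \tilde u_p'(0+)=\EE[\hY\vp]+\lim_{\eps\searrow0}\tfrac1\eps\bscl{\hmu^s}{\ubev-\uB}-p\,y_B. \]

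\textbf{The left derivative and conclusion.} For the left derivative I would write $\tilde u_p'(0-)=-\lim_{\eta\searrow0}\tfrac1\eta\big(\valu(B+\eta(p-\vp))-\valu(B)\big)$ and run the same argument with the payoff $p-\vp$; here the hypothesis is used for negative values of the parameter (that $-(B+\eps\vp)$ is uniquely superreplicable for \emph{all} $\eps$ in a neighborhood of $0$), and the identification $\underline{B+\eta(p-\vp)}=\underline{B-\eta\vp}+\eta p$ produces, after the same simplification,
\[ \tilde u_p'(0-)=\EE[\hY\vp]+\lim_{\eps\nearrow0}\tfrac1\eps\bscl{\hmu^s}{\ubev-\uB}-p\,y_B, \]
the left-hand limit being finite by Proposition \ref{pro:limit} applied to $-\vp$. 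Since $y_B>0$, the condition $\tilde u_p'(0+)\le0\le\tilde u_p'(0-)$ is then equivalent to
\[ \tfrac1{y_B}\Big(\EE[\hY\vp]+\lim_{\eps\searrow0}\tfrac1\eps\bscl{\hmu^s}{\ubev-\uB}\Big)\le p\le\tfrac1{y_B}\Big(\EE[\hY\vp]+\lim_{\eps\nearrow0}\tfrac1\eps\bscl{\hmu^s}{\ubev-\uB}\Big), \]
which is exactly \eqref{equ:der-2}; concavity of $\tilde u_p$ forces $\tilde u_p'(0+)\le\tilde u_p'(0-)$, so the interval is nonempty, consistently with Theorem \ref{thm:Davis}.

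\textbf{Main obstacle.} The only genuinely delicate point is the bookkeeping with superreplicators: one must verify, via translation-invariance of unique superreplicability, that passing from the payoff $\vp$ to $\vp-p$ replaces $\ubev$ by $\ubev-\eps p$, and likewise identify $\underline{B+\eta(p-\vp)}$ with $\underline{B-\eta\vp}+\eta p$ in the left-derivative computation; and one must carry the singular total mass $\hmu^s(\Omega)=y_B-\EE[\hY]$ correctly through the algebra so that it recombines with $\EE[\hY]$ into $y_B$. This is where sign and constant errors creep in; everything else is a direct appeal to Proposition \ref{pro:limit}, Corollary \ref{cor:42F2} (which also supplies the common value $y_B$), and Remark \ref{condB1}.
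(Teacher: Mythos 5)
Your proof is correct. It takes a mildly different route from the paper's: the paper invokes Lemma \ref{lem:subgrad}, which packages the dependence on $p$ into the two-dimensional supergradient $\partial u(0,0)$ of $u(\eps,x)=\valu(B+x+\eps\vp)$, uses Corollary \ref{cor:42F2} to pin the $x$-component of every supergradient at $y_B$, and then reads off the $\eps$-components from Proposition \ref{pro:limit} applied directly to $\vp$. You instead reduce to a one-dimensional problem from the start by substituting the shifted payoff $\vp-p$ into $\valu$ and applying Proposition \ref{pro:limit} to $\vp-p$ and $p-\vp$. This makes the argument self-contained at the price of two pieces of bookkeeping that the paper's route avoids: (i) verifying that the unique superreplicator of $-(B+\eps(\vp-p))$ is $-(\ubev-\eps p)$ (translation-invariance, Remark \ref{condB1}(3)--(4)), and (ii) recombining $\EE[\hY]+\hmu^s(\Omega)$ into $\hmu(\Omega)=y_B$ using Corollary \ref{cor:42F2}. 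You handle both correctly, including the sign conventions in the left-derivative computation. The computational core---Proposition \ref{pro:limit} together with the common total mass $y_B$ from Corollary \ref{cor:42F2}---is the same in both proofs, so the difference is organizational rather than conceptual.
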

\begin{proof}
By Corollary \ref{cor:42F2}, the function $u$ is
differentiable in $x$ at $x=0$, with derivative $y_B$. The interval of
$B$-conditional Davis prices, according to Lemma \ref{lem:subgrad}, is given by
\[ \tfrac{1}{y_B} [ \partial_{\eps+} u(0,0), \partial_{\eps-} u(0,0)].\]
This, in turn, coincides with the expression in \eqref{equ:der-2} thanks to
Proposition
\ref{pro:limit}.
\end{proof}

\subsection{Two illustrative examples }\label{sse:two} We conclude by giving two
illustrative examples, both in an incomplete Brownian setting, of
situations where our results can be applied directly and lead to explicit
formulas for the non-trivial interval of conditional Davis prices.

Let $(\Omega,\sF,\PP)$ be a probability space supporting two
independent Brownian motions $(Z,W)$ and we let $\prf{\sF_t}$ be their
augmented filtration up to some maturity $T>0$. The set of all
pathwise $p$-integrable predictable processes is denoted by $\sL^p$ and
the
space finitely-additive measures which are $\PP$-absolutely continuous is
denoted ba$(\PP)$ so that ba$(\PP)$ can be identified
with $\mathbb{L}^\infty(\PP)'$.

In both examples, the stock-price dynamics are given by a one-dimensional
It\^ o process

 \begin{align}
 \label{equ:dS}
   dS_t := S_t\sigma_t \big(\lambda_t dt + dZ_t\big),\quad S_0>0,
 \end{align}
with processes $\sigma, \lambda\in \sL^2$. With more driving Brownian
motions than assets, this leads to a incomplete financial market.
Both examples will feature (an unspanned) contingent claim
paying out $\varphi(W_T)$ at time $T$, where $\varphi:\R\to\R$ is a
non-constant, bounded, and
continuous function.

The major difference between the examples is that in the first example the
illiquid random endowment degenerates ($B:=x$ for a constant $x>0$), while in the second example the random endowment $B$ is non-replicable. The first example illustrates that even when $B:=x>0$ is constant, our setting differs from that of \cite{KS06} because the corresponding Davis prices are non-unique whereas the growth condition placed on the claim's payoff in
\cite{KS06} always produces unique Davis prices (the growth condition used in \cite{KS06} originates from \cite{HugKraSch05}). In other words, the payoffs considered in the first example are not included in \cite{KS06}. The second example backs up the claim we made in both the abstract and in the introduction: When the endowment $B$ is non-replicable, the generic case is that Davis prices are non-unique.

\subsubsection{Example 1}
We adopt the setting used in Example \ref{example3} above which is based on \cite{DS98b}.  The endowment is taken to be $B:=x>0$ constant. It follows from Example \ref{example3}  that the interval of arbitrage-free prices for
$\varphi(W_T)$ is given by $(\underline{\varphi},\overline{\varphi})$ where
\begin{align}\label{arb_free_price_interval}
\underline{\varphi} := \inf_{a\in\R} \varphi(a), \quad\overline{\varphi} := \sup_{a\in\R} \varphi(a).
\end{align}
Our Theorem \ref{thm:derb} with $B:=x>0$ constant shows that the interval of
$\log$-investor's Davis' prices
for $\varphi(W_T)$ is given by
$[\underline{p},\overline{p}]$ where
$$
\underline{p} := \tfrac1{\hat{Y}_0}\E[\hat{Y}_T(\varphi-\underline{\varphi})]+
\underline{\varphi},\quad \overline{p}:=\overline{\varphi} -
\tfrac1{\hat{Y}_0}\E[\hat{Y}_T(\overline{\varphi}-\varphi)].
$$
Therefore,
since the function $\vp$ is not constant, we have
\begin{align*}
	\overline{p} - \underline{p} = (\overline{\varphi} - \underline{\vp})
	(1- \EE[ \hat{Y}_T]/ \hat{Y_0})>0.
\end{align*}

\subsubsection{Example 2}
In this example, we consider the Samuelson-model setting used in Section 2
in \cite{LSZ16} where the
stock price dynamics are given by \eqref{equ:dS} with both $\sigma_t:=\sigma>0$
and $\ld_t:=\ld>0$ being constants.
  Let $U(\xi):=\frac{\xi^\gm}\gm$, $\xi>0$, $\gm <1$, be a utility function in the
  ``power'' family,
  with constant relative risk-aversion parameter  (as usual $\gm:=0$ is interpreted as the $log$ investor).

The investor receives the random
endowment of the form $B(W_T)$ at time $T>0$, where $B$ is a
non-constant,  bounded and continuous function.
The payoff $\vp$ whose $B$-conditional Davis prices we are computing, as well
as the quantities $\underline{\vp}$ and $\overline{\vp}$ are
defined exactly as in Example 1 above.
We also define the following quantities
$$
\underline{B}(\eps) := \inf_{a\in\R}
\Big(B(a)+\eps \varphi(a)\Big),
{\quad\overline{B}(\eps) := \sup_{a\in\R}\Big(\eps \varphi (a)-B(a)
 \Big),}\quad \eps\ge0.
$$

Proposition 2.4 in \cite{LSZ16} states that the dual optimizer
$\hat{\Q}\in$ ba$(\P)$ for the utility-maximization problem with the random
endowment of the form $B(W_T)$ has a
non-trivial singular part in the Yosida-Hewitt decomposition $\hat{\Q} =
\hat{\Q}^r + \hat{\Q}^s$ after a possible shift of the function $B$ by a
constant. Moreover, such a shift can always be arranged so as to keep the
values of $B$ positive and bounded away from $0$. Therefore, we assume,
without loss of generality that such a shift has already been performed, so
that, in particular, we have $\underline{B}(0)>0$.
 This loss-of-mass
property for $\hat{\Q}^r$ can be partially quantified as follows: Theorem 3.7 in
\cite{GLY16} and Proposition 3.2 in \cite{LZ07} allow us to write $\tRN
{\hat{\Q}^r}{\PP} =
\hat{Y}_T$ where
$$
d\hat{Y}_t = -\hat{Y}_t \big( \tfrac\mu\sigma dZ_t + \hat{\nu}_t dW_t\big),  \quad \hat{Y}_0>0,
$$
for some process $\hat{\nu}\in \sL^2$. The presence of the non-trivial
singular part $\hat{\Q}^s$ implies that $\hat{Y}$ is a strict local
martingale, i.e., $\E[\hat{Y}_T]<\hat{Y}_0$.

Example \ref{example3} takes care of the conditions of Theorem \ref{thm:derb}
dealing with unique superreplicability.
Indeed, both $-B$ and $-(B+\eps \vp)$ are of the form treated there,
and are, therefore, uniquely superreplicable by $-\underline{B}$ and
$-\underline{B}(\eps)$, respectively, for $\eps\ge0$.

The last step before Theorem \ref{thm:derb} is applied is to
simplify the two $\eps$-limits appearing in \eqref{equ:der-2}. That is an
easy task thanks to the fact that the random variable
$\oo{\eps} \big( \underline{B+\eps \vp}-\underline{B}\big)$ is constant and
equal to $\oo{\eps}\big( \underline{B}(\eps) - \underline{B}(0) \big)$.
Theorem \ref{thm:derb} guarantees that this quotient admits a left and a
right limit at $\eps=0$ and we introduce the following notation
\[ \underline{B}'(0+) := \lim_{\eps \downto 0}
\oo{\eps}\big( \underline{B}(\eps) - \underline{B}(0) \big) \eand
 \underline{B}'(0-) := \lim_{\eps \upto 0}
\oo{\eps}\big( \underline{B}(\eps) - \underline{B}(0) \big).\]
The total mass in $\hat{\QQ}^s$ is given by $\hat{Y}_0-\EE[ \hat{Y}_T]$,
and, so the interval of
$B(W_T)$-conditional Davis prices for the payoff $\varphi(W_T)$ is given by
$[\underline{p},\overline{p}]$ where
\begin{align*}
\underline{p} &:= \tfrac1{\hat{Y}_0}\E\left[\hat{Y}_T\big(\varphi(W_T)-\underline{B}'(0+)\big)\right]+
\underline{B}'(0+),\\ \overline{p}&:=\overline{B}'(0+)
- \tfrac1{\hat{Y}_0}\E\left[\hat{Y}_T\big(\overline{B}'(0+)-\varphi(W_T)\big)\right].
\end{align*}

\subsubsection{Linear Approximation}
We close the paper with a result which complements the pricing
formula of the previous two examples with some asymptotic hedging information.
More precisely, we provide two first-order approximations
to the primal utility maximizer in the Brownian setting used above.
We focus on  the right limit ($\eps \downto 0$), as one gets
the left-limit corrector by applying the result to $-\vp$. As a
preparation, we note that the function $\underline{B}$ is concave and that
its right derivative $\underline{B}'(0+)$ at $0$  satisfies
\[ |\underline{B}'(0+)| \le \sup_{a\in\R}|\vp(a)|, \]
and remind the reader that both $B$ and $\vp$ are normalized so that
$\underline{B}(0)>0$. As always, $\hpi$ denotes the primal optimizer for
the utility-maximization problem with the random endowment $B=B(W_T)$ and
$\hat{Y}$ is the common regular part of all dual optimizers.

\begin{proposition} \label{cor:picorrector} In the setting described in the
beginning of subsection \ref{sse:two}, and under Assumption
\ref{ass:U-prime}, the process \begin{align*}
(1 + \eps\tfrac{\underline{B}'(0+)}{\underline{B}(0)}) \hat{\pi}
\end{align*}
 is first-order optimal in the sense that
$$
\valu\big(B+\eps \vp(W_T)\big) - \E\left[U\Big( \big(1+\eps\tfrac{\underline{B}'(0+)}{\underline{B}(0)}\big)(\hat{\pi}\cdot S\big)_T +
B+\eps\vp(W_T)\Big)\right]  = o(\eps)
$$
as $\eps\searrow 0$.
\end{proposition}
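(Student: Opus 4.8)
The plan is to combine the first-order expansion of $\valu(B+\eps\vp)$ that is already available here---its hypotheses hold by Example \ref{example3}, so Proposition \ref{pro:limit} applies---with a direct two-sided estimate of $\EE[U(\hX+\eps D)]$, where $\delta:=c\hpi$, $D:=\dest+\vp$, and $c:=\underline{B}'(0+)/\underline{B}(0)$. First I would observe that the candidate process $(1+\eps c)\hpi$ lies in $\sA$ for $\eps$ small (it is a positive scalar multiple of $\hpi$, and $\sA$ is a cone) and that the corresponding terminal wealth equals $(1+\eps c)(\hpi\cdot S)_T+B+\eps\vp=\hX+\eps D$. By Proposition \ref{pro:limit} the one-sided limit $\rho:=\lim_{\eps\searrow 0}\oe\bp{\valu(B+\eps\vp)-\valu(B)}$ exists and equals $\EE[\hY\vp]+\lim_{\eps\searrow 0}\oe\scl{\hmu^s}{\ubev-\uB}$; equivalently $\valu(B+\eps\vp)=\valu(B)+\eps\rho+o(\eps)$, so everything reduces to showing $\EE[U(\hX+\eps D)]=\valu(B)+\eps\rho+o(\eps)$.

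For the upper bound I would use concavity of $U$: $U(\hX+\eps D)\le U(\hX)+\eps\hY D$ (which holds trivially on $\{\hX+\eps D<0\}$), together with $\hY D\in\lone$---true because $\vp$ and $B$ are bounded and $\hY\hX=U'(\hX)\hX\le U'((1-b)\hX)\hX\in\lone$ by Assumption \ref{ass:U-prime}---and $\EE[U(\hX)]=\valu(B)$ from \eqref{equ:hat-X}, to get $\EE[U(\hX+\eps D)]\le\valu(B)+\eps\EE[\hY D]$. The point is that $\EE[\hY D]=\rho$: since $\uB$ is the constant $\underline{B}(0)$ in this setting, $\scl{\hmu^s}{\ubev-\uB}=(\underline{B}(\eps)-\underline{B}(0))\scl{\hmu^s}{1}$, while \eqref{KL2} gives $\EE[\hY(\hpi\cdot S)_T]=\scl{\hmu^s}{\uB}=\underline{B}(0)\scl{\hmu^s}{1}$; hence $c\,\EE[\hY(\hpi\cdot S)_T]=\underline{B}'(0+)\scl{\hmu^s}{1}=\lim_{\eps\searrow 0}\oe\scl{\hmu^s}{\ubev-\uB}$, so that $\EE[\hY D]=\EE[\hY\vp]+c\,\EE[\hY(\hpi\cdot S)_T]=\rho$. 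This yields $\limsup_{\eps\searrow 0}\oe\bp{\EE[U(\hX+\eps D)]-\valu(B)}\le\rho$.

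For the lower bound I would repeat the argument from the second half of the proof of Proposition \ref{pro:var}: assuming $\delta=c\hpi\in\Delta^{\eps_0}(\vp)$ for some $\eps_0>0$, the defining relation \eqref{equ:Delta}, convexity of $\sA$, and $\hX\ge0$ give $\hX+\eps D\ge(1-b)\hX>0$ for $\eps<b\eps_0$; then $U(\hX+\eps D)\ge U(\hX)+\eps U'(\hX+\eps D)D$, the negative parts $\bp{U'(\hX+\eps D)D}^-$ are dominated by the integrable bound $\oo{\eps_0}U'((1-b)\hX)\hX$, and Fatou's lemma with $U'(\hX+\eps D)\to\hY$ $\PP$-a.s.\ gives $\liminf_{\eps\searrow 0}\EE[U'(\hX+\eps D)D]\ge\EE[\hY D]=\rho$. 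Combining, $\EE[U(\hX+\eps D)]=\valu(B)+\eps\rho+o(\eps)$, and subtracting the expansion of $\valu(B+\eps\vp)$ from Proposition \ref{pro:limit} gives exactly the claimed $o(\eps)$.

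The hard part will be the inclusion $\delta=c\hpi\in\Delta^{\eps_0}(\vp)$, i.e.\ that the candidate terminal wealth $\hX+\eps D$ stays $\ge(1-b)\hX$ for all small $\eps$. Here I would exploit the boundedness of $\vp$ and $B$ together with the fact, read off from the proof of Proposition \ref{pro:limit}, that the genuinely admissible perturbation $\oe\bp{\frac{\underline{B}(\eps)}{\underline{B}(0)}-1}\hpi$ belongs to $\Delta^\eps(\vp)$ while $(1+\eps c)\hpi$ differs from $\frac{\underline{B}(\eps)}{\underline{B}(0)}\hpi$ only by $o(\eps)\hpi$; translating the resulting $o(\eps)(\hX-B)$ discrepancy into an $o(\eps)$ error at the level of $\EE[U(\cdot)]$ relies once more on $\EE[\hY\hX]<\infty$ from Assumption \ref{ass:U-prime}.
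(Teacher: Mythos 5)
Your plan matches the paper's proof closely. The paper also (i) computes $\E[\hY D]=\rho$ by identifying the singular-measure contribution via \eqref{KL2} (exactly as you do: $\E[\hY\hpist]=\scl{\hmu^s}{\uB}=\underline{B}(0)\hmu^s(\Omega)$, hence $c\,\E[\hY\hpist]=\underline{B}'(0+)\hmu^s(\Omega)$), (ii) gets the upper bound from concavity of $U$, and (iii) re-runs the Fatou/domination argument of Proposition~\ref{pro:var} for the lower bound. So the decomposition and the calculation of $\rho$ are the same as the paper's.

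You are also right to single out the ``hard part'' --- whether $c\hpi\in\Delta^{\eps_0}(\vp)$ --- and you should know that this is exactly the point where the paper's own proof is thin. What the paper actually verifies is that
\[
\delta^\eps := \tfrac1\eps\bp{\tfrac{\underline{B}(\eps)}{\underline{B}(0)}-1}\hpi \in \Delta^\eps(\vp),
\]
because $\tfrac{\underline{B}(\eps)}{\underline{B}(0)}(\hpi\cdot S)_T + B + \eps\vp \ge -\underline{B}(\eps)+\underline{B}(\eps) = 0$, using $(\hpi\cdot S)_T\ge -\underline{B}(0)$ and $B+\eps\vp\ge\underline{B}(\eps)$. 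It then states the final display for the constant multiplier $(1+\eps c)$, deferring the bridge to ``re-purpose the proof of Proposition~\ref{pro:var}''. The wrinkle your sketch somewhat glosses over is that, by concavity of $\underline{B}$, $1+\eps c \ge \underline{B}(\eps)/\underline{B}(0)$, so the $o(\eps)$ difference multiplies $(\hpi\cdot S)_T$, which is only bounded below by $-\underline{B}(0)$. Consequently the candidate wealth is only guaranteed
\[
(1+\eps c)(\hpi\cdot S)_T + B + \eps\vp \ \ge\ \underline{B}(\eps)-\underline{B}(0)-\eps\underline{B}'(0+) \ \le\ 0,
\]
so nonnegativity does not follow directly, and $\E[U(\cdot)]$ could in principle be $-\infty$ on a bad set; your proposed domination via $\E[\hY\hX]<\infty$ does not by itself resolve this, since the issue is the sign of the wealth, not integrability. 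The safest way to close the gap (and what the paper implicitly does) is to carry the $\eps$-dependent multiplier $\underline{B}(\eps)/\underline{B}(0)$ throughout the lower-bound estimate --- there the nonnegativity is clean --- and only at the very end observe that replacing it by $1+\eps c$ perturbs the resulting expected utility by $o(\eps)$. Apart from this technical point, on which both you and the paper are equally terse, the approach is the same.
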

\begin{proof} We base the proof on the abstract expression
\eqref{equ:der-exp} for the right derivative of the value function of
Proposition \ref{pro:limit}.

We let $\delta^\eps$ be defined by \eqref{deltaeps} in the proof of
Proposition \ref{pro:limit}, which thanks to Example \ref{example3},
takes the simple form
\[ \delta^\eps_t := \frac1{\underline{B}(0)} \frac{\underline{B}(\eps) -
\underline{B}(0)}\eps \hat{\pi}_t.\] Then we have
\begin{align*}
\sup_{\delta \in \Delta(\vp)} \EE[ \hY \dest ] &=
\lim_{\eps\searrow 0}\E\Big[\hat{Y}(\delta^\eps \cdot S)_T\Big]\\
&=\tfrac{\underline{B}'(0+)}{\underline{B}(0)}\E\Big[\hat{Y}(\hat{\pi} \cdot S)_T\Big]
 = \underline{B}'(0+)\hat{\mu}^s(\Omega),
\end{align*}
where the last equality uses \eqref{KL2}.
We can then re-purpose the proof of Proposition \ref{pro:var} to see that
\begin{align*}
& \E\left[U\Big(
\big(1+\eps\tfrac{\underline{B}'(0+)}{\underline{B}(0)}\big)(\hat{\pi}\cdot
S\big)_T + B+\eps\vp\big(W_T\big)\Big)\right] -\valu\big(B\big)
-\eps\Delta= o(\eps),
\end{align*}
as  $\eps\searrow 0$ where we have defined
\begin{align*}
\Delta:& =   
\EE[\hY\vp] +\underline{B}'(0+)\hat{\mu}^s(\Omega).
\end{align*}
It remains to apply the triangle inequality and Proposition \ref{pro:limit}.
\end{proof}

\bibliographystyle{amsalpha}


\end{document}